\definecolor{NiColor}{RGB}{77,77,255}
\definecolor{NiColoRed}{RGB}{255,77,77}
\definecolor{NiCitation}{RGB}{0,181,26}
\definecolor{ClaColor}{RGB}{255,0,0}
\definecolor{RubColor}{RGB}{255,140,0}
\newtheoremstyle{TheoremStyle}
        {3pt}
        {3pt}
        {\slshape}
        {}
        {\bf}
        {:}
        {.5em}
        {}
\newtheoremstyle{ExampleAndRemarkStyle}
        {3pt}
        {3pt}
        {\slshape}
        {}
        {\bf}
        {:}
        {.5em}
        {}
\newtheoremstyle{ProofStyle}
        {3pt}
        {3pt}
        {}
        {}
        {\bf}
        {:}
        {.5em}
        {}
\theoremstyle{TheoremStyle}
\newtheorem{theorem}{Theorem}
\newtheorem{corollary}[theorem]{Corollary}
\newtheorem{proposition}[theorem]{Proposition}
\newtheorem{lemma}[theorem]{Lemma}
\newtheorem{assumption}[theorem]{Assumption}
\newtheorem{Definition}[theorem]{Definition}
\theoremstyle{ExampleAndRemarkStyle}
\newtheorem{remark}[theorem]{Remark}
\newtheorem{Example}[theorem]{Example} 
\title{%
	On Maxwell's Equations on Globally Hyperbolic Spacetimes with Timelike Boundary
}
\author{%
	Claudio Dappiaggi$^{1,2,3,a}$, Nicol\`o Drago$^{4,5,b}$ 
	and Rubens Longhi$^{1,c}$\vspace{4mm}\\
	{\small $^1$ Dipartimento di Fisica -- Universit{\`a} di Pavia, Via Bassi 6, I-27100 Pavia, Italy.}\vspace{1mm}\\
	{\small $^2$ INFN, Sezione di Pavia -- Via Bassi 6, I-27100 Pavia, Italy.}\vspace{1mm}\\
	{\small $^3$ Istituto Nazionale di Alta Matematica -- Sezione di Pavia, Via Ferrata, 5, 27100 Pavia, Italy.}\vspace{1mm}\\
	{\small $^4$ Dipartimento di Matematica -- Universit{\`a} di Trento, via Sommarive 15, I-38123 Povo (Trento), Italy.}\vspace{1mm}\\
	{\small $^5$ INFN, TIFPA -- via Sommarive 15, I-38123 Povo (Trento), Italy.}\vspace{4mm}\\
	{\footnotesize  ~$^a$ claudio.dappiaggi@unipv.it~,~$^b$ 
		nicolo.drago@unitn.it~,~$^c$ rubens.longhi01@universitadipavia.it}
}
\date{\today}
\begin{document}
	
	\maketitle
	
\begin{abstract}
	We study Maxwell's equation as a theory for smooth $k$-forms on globally hyperbolic spacetimes with timelike boundary as defined by Ak\'e, Flores and Sanchez \cite{Ake-Flores-Sanchez-18}.
	In particular we start by investigating on these backgrounds the D'Alembert - de Rham wave operator $\Box_k$ and we highlight the boundary conditions which yield a Green's formula for $\Box_k$.
	Subsequently, we characterize the space of solutions of the associated initial and boundary value problem under the assumption that advanced and retarded Green operators do exist.
	This hypothesis is proven to be verified by a large class of boundary conditions using the method of boundary triples and under the additional assumption that the underlying spacetime is ultrastatic.
	Subsequently we focus on the Maxwell operator.
	First we construct the boundary conditions which entail a Green's formula for such operator and then we highlight two distinguished cases, dubbed $\delta\mathrm{d}$-tangential and $\delta\mathrm{d}$-normal boundary conditions.
	Associated to these we introduce two different notions of gauge equivalence and we prove that in both cases, every equivalence class admits a representative abiding to the Lorenz gauge.
	We use this property and the analysis of the operator $\Box_k$ to construct and to classify the space of gauge equivalence classes of solutions of the Maxwell's equations with the prescribed boundary conditions.
	As a last step and in the spirit of future applications in the framework of algebraic quantum field theory, we construct the associated unital $*$-algebras of observables proving in particular that, as in the case of the Maxwell operator on globally hyperbolic spacetimes with empty boundary, they possess a non-trivial center.
\end{abstract}

\paragraph*{Keywords:} classical field theory on curved backgrounds, Maxwell's equations, globally hyperbolic spacetimes with timelike boundary

\paragraph*{MSC 2010:} 81T20, 81T05

\section{Introduction}\label{Sec: Introduction}

Electromagnetism and the associated Maxwell's equations, written both in terms of the Faraday tensor or of the vector potential, represent one of the most studied models in mathematical physics. On the one hand they are of indisputable practical relevance, while, on the other hand, they are the prototypical example of a gauge theory, which can be still thoroughly and explicitly investigated thanks to the Abelian nature of the underlying gauge group. 

On curved backgrounds the study of this model has attracted a lot of attention not only from the classical viewpoint but also in relation to its quantization. Starting from the early work of Dimock \cite{Dimock:1992ff}, the investigation of Maxwell's equations, generally seen as a theory for differential forms, has been thorough especially in the framework of algebraic quantum field, {\it e.g.} \cite{Fewster:2003ey,Pfenning:2009nx}. One of the key reasons for such interest is related to the fact that electromagnetism has turned out to be one of the simplest examples where the principle of general local covariance, introduced in \cite{Brunetti-Fredenhagen-Verch-03}, does not hold true on account of topological obstructions -- see for example \cite{Benini:2013ita,Benini:2013tra,Dappiaggi:2011zs,Sanders:2012sf}. 

A closer look at all these references and more generally to the algebraic approach unveils that most of the analyses rest on two key data: the choice of a gauge group and of an underlying globally hyperbolic background of arbitrary dimension. While the first one is related to the interpretation of electromagnetism as a theory for the connections of a principal $U(1)$-bundle, the second one plays a key r\^ole in the characterization of the space of classical solutions of Maxwell's equations and in the associated construction of a unital $*$-algebra of observables.
More precisely, every solution of Maxwell's equations identifies via the action of the gauge group an equivalence class of differential forms.
Each of these classes admits a representative, namely a coclosed form which solves a normally hyperbolic partial differential equation, ruled by the D'Alembert - de Rham operator. {Such representative is not unique due to a residual gauge freedom.}
Most notably, since the underlying spacetime is globally hyperbolic, one can rely on classical results, see for example \cite{BGP}, to infer that the D'Alembert - de Rham operator admits unique advanced and retarded fundamental solutions. Not only these can be used to characterize the kernel of such operator, but they also allow both to translate the requirement of considering only coclosed form as a constraint on the admissible initial data and to give an explicit representation for the space of the gauge equivalence classes of solutions of Maxwell's equations. At a quantum level, instead, the fundamental solutions represent the building block to implement the canonical commutation relations within the $*$-algebra of observables, {\it cf.} \cite{Dimock:1992ff}.

Completely different is the situation when we drop the assumption of the underlying background being globally hyperbolic since especially the existence and uniqueness results for the fundamental solutions are no longer valid.
In this paper we will be working in this framework, assuming in particular that the underlying manifold $(M,g)$ is globally hyperbolic and it possesses a timelike boundary, {in the sense of \cite{Ake-Flores-Sanchez-18}, where this concept has been formalized. In other words, to $(M,g)$ one can associate $(\partial M, \iota_M^*g)$, where $\iota_M:\partial M\hookrightarrow M$, is a Lorentzian smooth submanifold. This class of spacetimes contains several notable examples, such as anti-de Sitter (AdS) or asymptotically AdS spacetimes.}
These play a key r\^ole in several models that have recently attracted a lot of attention especially for the study of the properties of Green-hyperbolic operators -- like the wave, the Klein-Gordon or the Dirac operator -- , see for example \cite{Bachelot,Grosse-Murro-18,Holzegel,Wrochna:2016ruq,Vasy}.
From a classical point of view, in order to construct the solutions for any of these equations, initial data assigned on a Cauchy surface are no longer sufficient and it is necessary to supplement them with the choice of a boundary condition. This particular feature prompts the question whether these systems still admit fundamental solutions and, if so, whether they are unique and whether they share the same structural properties of their counterparts in a globally hyperbolic spacetime with empty boundary.  

For the wave operator acting on real scalar functions a complete answer to this question has been given in \cite{Dappiaggi-Drago-Ferreira-19} for static globally hyperbolic spacetimes with a timelike boundary combining spectral calculus with boundary triples, introduced by Grubb in \cite{Grubb68}. 

In this work we will be concerned instead with the study of Maxwell's equations acting on generic $k$-forms, with $0\leq k< m=\dim M$, see \cite{Holzegel:2015swa} for an analysis in terms of the Faraday tensor on an anti-de Sitter spacetime. In comparison with the scalar scenario, the situation is rather different. First of all the dynamics is ruled by the operator $\delta_{k+1}\mathrm{d}_k$ where $\mathrm{d}_k$ is the differential acting on $\Omega^k(M)$, the space of smooth $k$-forms while $\delta_{k+1}$ is the codifferential acting on $(k+1)$-forms. To start with, one can observe that this operator is not formally self-adjoint and thus boundary conditions must be imposed. The admissible ones are established by a direct inspection of the Green's formula for the Maxwell-operator. In between the plethora of all possibilities we highlight two distinguished choices, dubbed $\delta\mathrm{d}$-tangential and $\delta\mathrm{d}$-normal boundary conditions which, in the case $k=0$, reduce to the more common Dirichlet and Neumann boundary conditions. 

As second step we recognize that a notion of gauge group has to be introduced. While a more geometric approach based on interpreting Maxwell's equations in terms of connections on a principal $U(1)$-bundle might be the most desirable approach, we decided to investigate this viewpoint in a future work. We focus instead only on Maxwell's equations as encoding the dynamics of a theory for differential $k$-forms. If the underlying manifold $(M,g)$ would have no boundary the gauge group would be chosen as $\mathrm{d}\Omega^{k-1}(M)$. While at first glance one might wish to keep the same choice, it is immediate to realize that this is possible only for the $\delta\mathrm{d}$-normal boundary condition which is insensitive to any shift of a form by an element of the gauge group. On the contrary, in the other cases, one needs to reduce the admissible gauge transformations so to ensure compatibility with the boundary conditions. 

The next step in our analysis mimics the counterpart when the underlying globally hyperbolic spacetime $(M,g)$ has empty boundary, namely we construct the space of gauge equivalence classes of solutions for Maxwell's equations and we prove that each class admits a non unique representative which is a coclosed $k$-form $\omega$, such that $\Box_k\omega$=0. Using a standard nomenclature, we say that we consider a Lorenz gauge fixing. Here $\Box_k=\mathrm{d}_{k-1}\delta_k+\delta_{k+1}\mathrm{d}_k$ is the D'Alembert - de Rham wave operator which is known to be normally hyperbolic, see e.g. \cite{Pfenning:2009nx}. On the one hand we observe that the mentioned non uniqueness is related to a residual gauge freedom which can be fully accounted for. On the other hand we have reduced the characterization of the equivalence classes of solutions of Maxwell's equations to studying the d'Alembert - de Rham wave operator. 

This can be analyzed similarly to the wave operator acting on scalar functions as one could imagine since the two operators coincide for $k=0$. Therefore we study $\Box_k$ independently, first identifying via a Green's formula a collection of admissible boundary conditions, Subsequently, under the assumption that advanced and retarded Green's operators exist, we characterize completely the space of solutions of the equation $\Box_k\omega=0$, $\omega\in\Omega^k(M)$ with prescribed boundary condition. At this stage we highlight the main technical obstruction which forces us to consider only two distinguished boundary conditions for the Maxwell operator. As a matter of facts, we show that, although at an algebraic level it holds always $\delta_k\circ\Box_k=\Box_{k-1}\circ\delta_k$, the counterpart at the level of fundamental solutions is verified only for specific choices of the boundary condition. This leads to an obstruction in translating the Lorenz gauge condition of working only with coclosed $k$-form to a constraint in the admissible initial data. This failure does not imply that the Lorenz gauge is ill-defined, but only that, for a large class of boundary conditions, one needs to envisage a strategy different from the one used on globally hyperbolic spacetimes with empty boundary in order to study the underlying problem. 

It it important to mention that it is beyond our current knowledge verifying whether our assumption on the existence of fundamental solutions is always true. We expect that a rather promising avenue consists of adapting to the case in hand the techniques and the ideas discussed in \cite{Derezinski:2017hvk} and in \cite{Gannot:2018jkg}, but this is certainly a challenging task, which we leave for future work. On the contrary we test our assumption in the special case of ultrastatic, globally hyperbolic spacetimes with timelike boundary. In this scenario we adopt the techniques used in \cite{Dappiaggi-Drago-Ferreira-19} proving that advanced and retarded fundamental solutions do exist for a large class of boundary conditions, including all those of interest for our analysis. 

Finally we give an application of our result inspired by the quantization of Maxwell's equations in the algebraic approach to quantum field theory. While this framework has been extremely successful on a generic globally hyperbolic spacetime with empty boundary, only recently the case with a timelike boundary has been considered, see {\it e.g.} \cite{Benini:2017dfw,Dappiaggi:2017wvj,Dybalski:2018egv,MSTW19,Zahn:2015due}. In particular we focus on the construction of a unital $*$-algebra of observables for Maxwell's equations both with $\delta\mathrm{d}$-tangential and $\delta\mathrm{d}$-normal boundary condition and we prove that in both cases one can always find a non trivial Abelian ideal. This is the signature that, also in presence of a timelike boundary, one cannot expect that the principle of general local covariance holds true in its original form. 

{To conclude, we emphasize that, from a physical viewpoint our analysis has two key goals. The first, more evident, is to discuss a rigorous quantization scheme for Maxwell equations in presence of a class of boundaries which are expected to play a r\^ole in many applications. A rather natural example is the Casimir effect for photons, which can be modeled as a free electromagnetic field confined between two perfectly isolating and infinitely extended parallel plates in Minkowski spacetime. The second instead consists of highlighting the necessity in presence of boundaries of revising the structure of the underlying gauge group, which changes according to the boundary conditions chosen. In the prospect of analyzing in future works interacting models, this has far reaching consequences.}

\vskip .3cm

\noindent The paper is organized as follows: In Section \ref{Sec: Geometric Data} we introduce the notion of globally hyperbolic spacetime with timelike boundary as well as all the relevant space of differential $k$-forms. In addition we recall the basic definitions of differential and codifferential operator and we introduce two distinguished maps between bulk and boundary forms. Section \ref{Sec: the algebra for the vector potential with Dirichlet boundary conditions} contains the core of this paper. For clarity purposes, we start in Subsection \ref{Sec: on the D'Alembert--de Rham wave operator} from the analysis of the D'Alembert - de Rham wave operator $\Box_k$. To begin with we study a class of boundary conditions which implement the Green's formula, hence making $\Box_k$ a formally self-adjoint operator. Subsequently we assume that, for a given boundary condition, advanced and retarded Green's operators exist and we codify the information of the space of classical solutions of the underlying dynamics in terms of a short exact sequence, similar to the standard one when the underlying globally hyperbolic spacetime has no boundary, {\it cf.} \cite{BGP}. In addition we discuss the interplay between the fundamental solutions and the differential/codifferential operator. In Subsection \ref{Sec: on the Maxwell operator} we focus instead on Maxwell's equations. First we investigate which boundary conditions can be imposed so that the operator ruling the dynamics is formally self-adjoint. Subsequently we introduce the $\delta\mathrm{d}$-tangential and the $\delta\mathrm{d}$-normal boundary conditions together with an associated gauge group. Using these data, we prove that the equivalence classes of solutions of Maxwell's equations, always admit a representative in the Lorenz gauge, which obeys an equation of motion ruled by $\Box_k$. Such solution is non unique in the sense that a residual gauge freedom exists. Nonetheless, using the fundamental solutions of the D'Alembert - de Rham wave operator, we are able to characterize the above equivalence classes in terms of suitable initial data. To conclude, in Section \ref{Sec: Algebra of observables for Sol(M)} we use the results from the previous parts to construct a unital $*$-algebra of observables associated to Maxwell's equations with $\delta\mathrm{d}$-tangential and $\delta\mathrm{d}$-normal boundary conditions. In particular we prove that in all cases there exists an Abelian $*$-ideal. In Appendix \ref{App: Existence of fundamental solutions on ultrastatic spacetimes} we prove that our assumption on the existence of fundamental solutions is verified whenever the underlying spacetime is ultrastatic. Finally in Appendix \ref{App: an explicit decomposition} it is proven an explicit decomposition for $k$-forms on globally hyperbolic spacetimes, which plays a key r\^ole is some proofs in the main body of the paper. In Appendix \ref{App: Poincare duality for manifold with boundary} we recall the basic notion of relative cohomology for manifolds with boundaries as well as the associated Poincar\'e-Lefschetz duality.

\section{Geometric Data}\label{Sec: Geometric Data}

In this subsection, our goal is to fix notations and conventions, as well as to summarize the main geometric data, which play a key r\^ole in our analysis. Following the standard definition, see for example \cite[Ch. 1]{Lee}, $M$ indicates a smooth, second-countable, connected, oriented manifold of dimension $m\geq 2$ , with smooth boundary $\partial M$, assumed for simplicity to be connected. We assume also that $M$ admits a finite good cover. A point $p\in M$ such that there exists an open neighbourhood $U$ containing $p$, diffeomorphic to an open subset of $\mathbb{R}^m$, is called an {\em interior point} and the collection of these points is indicated with $\mathrm{Int}(M)\equiv\mathring{M}$. As a consequence $\partial M\doteq M\setminus\mathring{M}$, if non-empty, can be read as an embedded submanifold $(\partial M,\iota_{\partial M})$ of dimension $m-1$ with $\iota_{\partial M}\in C^\infty(\partial M; M)$.

In addition we endow $M$ with a smooth Lorentzian metric $g$ of signature $(-,+,...,+)$ and consider only those cases in which $\iota_{\partial M}^*g$ identifies a Lorentzian metric on $\partial M$ and $(M,g)$ is time oriented. As a consequence $(\partial M,\iota^*_{\partial M}g)$ acquires the induced time orientation and we say that $(M,g)$ has a {\em timelike boundary}. 

Since we will be interested particularly in the construction of advanced and retarded fundamental solutions for normally hyperbolic operators, we focus our attention on a specific class of Lorentzian manifolds with timelike boundary, namely those which are globally hyperbolic. While, in the case of $\partial M=\emptyset$ this is a standard concept, in presence of a timelike boundary it has been properly defined and studied recently in \cite{Ake-Flores-Sanchez-18}. Summarizing part of their constructions and results, we say that a time-oriented, Lorentzian manifold with timelike boundary $(M,g)$ is {\em causal} if it possesses no closed, causal curve, while it is {\em globally hyperbolic} if it is causal and, for all $p,q\in M$, $J^+(p)\cap J^-(q)$ is either empty or {compact. Here $J^\pm$ denote the causal future and past, {\it cf.} \cite[Sec. 1.3]{BGP} }. These conditions entail the following consequences, see \cite[Th. 1.1 \& 3.14]{Ake-Flores-Sanchez-18}:

\begin{theorem}\label{Thm: globally hyperbolic spacetime with time-like boundary}
	Let $(M,g)$ be a time-oriented Lorentzian manifold with timelike boundary of dimension $\dim M=m\geq 2$.
	Then the following conditions are equivalent:
	\begin{enumerate}
		\item
		$(M,g)$ is globally hyperbolic ;
		\item
		$(M,g)$ possesses a Cauchy surface, namely an achronal subset of $M$ which is intersected only once by every inextensible timelike curve ;
		\item 
		$(M,g)$ is isometric to $\mathbb{R}\times\Sigma$ endowed with the line-element
		\begin{equation}\label{eq:line_element}
			\mathrm{d}s^2=-\beta \mathrm{d}\tau^2+h_\tau\,,
		\end{equation}
	where $\tau:M\to\mathbb{R}$ is a Cauchy temporal function\footnote{Given a generic time oriented Lorentzian manifold $(N,\tilde{g})$, a Cauchy temporal function is a map $\tau:M\to\mathbb{R}$ such that its gradient is timelike and past-directed, while its level surfaces are Cauchy hypersurfaces.}, whose gradient is tangent to $\partial M$, $\beta\in C^\infty(\mathbb{R}\times\Sigma;(0,\infty))$ while $\mathbb{R}\ni \tau\to (\{\tau\}\times\Sigma,h_\tau)$ identifies a one-parameter family of $(m-1)-$dimensional spacelike, Riemannian manifolds with boundaries. Each $\{\tau\}\times\Sigma$ is a Cauchy surface for $(M,g)$.
	\end{enumerate}
\end{theorem}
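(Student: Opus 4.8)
The soft implications are $(3)\Rightarrow(2)\Rightarrow(1)$. That $(3)\Rightarrow(2)$ requires nothing new, since item 3 already exhibits each slice $\{\tau\}\times\Sigma$ as a Cauchy surface. For $(2)\Rightarrow(1)$ I would adapt the classical fact that a spacetime admitting a Cauchy surface is globally hyperbolic: causality is automatic, because a closed causal curve, traversed indefinitely, is inextensible and would meet an achronal Cauchy surface more than once; and the compactness of $J^+(p)\cap J^-(q)$ follows from a limit curve argument, the only genuinely new point being to rule out that a limit curve escapes through $\partial M$ -- but $\iota_{\partial M}^*g$ is Lorentzian, so $\partial M$ is a timelike hypersurface with spacelike normal and no causal curve can cross it, whence the limit curve stays in the compact slab cut out by two achronal Cauchy surfaces.

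The content of the theorem is $(1)\Rightarrow(3)$, which I would treat as a boundary-aware Geroch splitting followed by a Bernal--Sanchez type smoothing. Using the compactness of the causal diamonds, pick a finite Borel measure of total mass one on $M$ and form the volume functions $t^\pm(p)=\mathrm{vol}(J^\pm(p))$; as in the case $\partial M=\emptyset$ these are continuous -- the continuity proof only exploits the behaviour of $J^\pm$ under the causal structure, which is unaffected by a timelike boundary -- so $t=\log(t^-/t^+)$ is a continuous Cauchy time function whose level sets are topological Cauchy hypersurfaces-with-boundary, yielding a homeomorphism $M\cong\mathbb{R}\times\Sigma_0$ with Cauchy slices. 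Next one replaces $t$ by a smooth function $\tau$ with past-directed timelike gradient by the usual device: cover $M$ by charts carrying local time-type functions with timelike gradient subordinate to this foliation, and glue them with a partition of unity whose weights are small enough that the gradient of the resulting sum stays inside the convex past cone.

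The genuinely new ingredient -- and the step I expect to be the main obstacle -- is to arrange in addition that $\nabla\tau$ be tangent to $\partial M$; equivalently, that the Cauchy slices meet $\partial M$ orthogonally and that $\tau|_{\partial M}$ be a temporal function for $(\partial M,\iota_{\partial M}^*g)$. The route I would try is to work in a collar $\partial M\times[0,\varepsilon)$ equipped with a Gaussian normal coordinate $\rho$ for $g$ (available since $\partial M$ is timelike with spacelike unit normal), so that $g=\mathrm{d}\rho^2+g_\rho$ there, and to choose the local time functions used near the boundary to be independent of $\rho$ -- for instance pull-backs of a Cauchy temporal function of the boundary spacetime. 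In Gaussian normal coordinates a $\rho$-independent function has gradient with no normal component, and this property survives the partition-of-unity assembly; the delicate part is to keep the total gradient timelike uniformly up to $\partial M$ while imposing this constraint, since the relevant estimates degenerate as $\varepsilon\to 0$. An alternative is to pass to the double $\widehat M=M\cup_{\partial M}M$ with its reflection isometry and average a Cauchy temporal function over the $\mathbb{Z}_2$-action, forcing a vanishing normal derivative on $\partial M$; there the obstacle is instead that the doubled metric is only Lipschitz across $\partial M$ unless $\partial M$ is totally geodesic, so one must either smooth symmetrically or invoke a low-regularity version of the smoothing theorem.

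Finally, given a Cauchy temporal function $\tau$ with past-directed timelike gradient tangent to $\partial M$, the splitting is routine: set $\beta=-g(\nabla\tau,\nabla\tau)^{-1}>0$, flow along the rescaled gradient $-\beta\,\nabla\tau$ starting from $\Sigma=\tau^{-1}(0)$ (so that $\tau$ increases at unit rate along the flow and thus parametrizes the product), and read off $\mathrm{d}s^2=-\beta\,\mathrm{d}\tau^2+h_\tau$ with $h_\tau$ the corresponding family of Riemannian metrics on $\Sigma$. Tangency of $\nabla\tau$ to $\partial M$ guarantees that the flow preserves $\partial M$, so each slice is a manifold with boundary, and each slice is Cauchy because $\tau$ is a Cauchy temporal function -- which also re-proves $(3)\Rightarrow(1)$ via the $(2)\Rightarrow(1)$ step and closes the cycle of equivalences.
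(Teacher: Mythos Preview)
The paper does not prove this theorem: it is quoted as a structural result from \cite[Th.~1.1 \& 3.14]{Ake-Flores-Sanchez-18}, with no argument given. So there is no ``paper's own proof'' against which to compare your proposal.

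That said, your sketch is broadly aligned with the strategy of Ak\'e--Flores--S\'anchez: the implications $(3)\Rightarrow(2)\Rightarrow(1)$ are soft, and the substance lies in $(1)\Rightarrow(3)$, where one must upgrade a Geroch-type continuous time function to a smooth Cauchy temporal function whose gradient is tangent to $\partial M$. You correctly single this tangency requirement out as the key new obstacle. Your two suggested routes (Gaussian-normal collar with $\rho$-independent local time functions, or a $\mathbb{Z}_2$-averaging on the double) are reasonable heuristics, but both are genuinely delicate at the level of detail you flag: keeping the glued gradient timelike uniformly up to $\partial M$ in the first approach, and the low regularity of the doubled metric in the second, are exactly the points where a full argument requires work beyond what you have written. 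In the cited reference these issues are handled with care, and your proposal should be read as an outline rather than a proof; if you want a self-contained argument you would need to fill in precisely these steps.
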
 

Henceforth we will be tacitly assuming that, when referring to a globally hyperbolic spacetime with timelike boundary $(M,g)$, we work directly with \eqref{eq:line_element} and we shall refer to $\tau$ as the time coordinate.
Furthermore each Cauchy surface $\Sigma_\tau\doteq\{\tau\}\times\Sigma$ acquires an orientation induced from that of $M$.
In addition we shall say that $(M,g)$ is {\em static} if it possesses an {irrotational} timelike Killing vector field $\chi\in\Gamma(TM)$ whose restriction to $\partial M$ is tangent to the boundary, {\it i.e.} $g_p(\chi,\nu)=0$ for all $p\in\partial M$ where $\nu$ is the outward pointing, unit vector, normal to the boundary at $p$.
With reference to \eqref{eq:line_element} and for simplicity, we identify $\chi$ with $\partial_\tau$.
Thus the condition of being static translates into the constraint that both $\beta$ and $h_\tau$ are independent from $\tau$. 
If in addition $\beta=1$ we call $(M,g)$ {\em ultrastatic}.

\vskip .2cm

On a Lorentzian spacetime $(M,g)$ with timelike boundary we consider $\Omega^k(M)$, $0\leq k\leq\dim M$, the space of real valued smooth $k$-forms.
A particular r\^ole will be played by the support of the forms that we consider. In the following definition we introduce the different possibilities that we will consider, which are a generalization of the counterpart used for scalar fields which corresponds in our scenario to $k=0$, \textit{cf.} \cite{Baer-15}.
\begin{Definition}\label{Def: space of forms}
	Let $(M,g)$ be a Lorentzian spacetime with timelike boundary. We denote with 
	\begin{enumerate}
		\item 	$\Omega_{\mathrm{c}}^k(M)$ the space of smooth $k$-forms with compact support in $M$ while {we denote} with $\Omega_{\mathrm{c}}^k(\mathring{M})\subset\Omega^k_{\mathrm{c}}(M)$ the collection of smooth and compactly supported $k$-forms $\omega$ such that $\textrm{supp}(\omega)\cap\partial M=\emptyset$.
		\item
		$\Omega_{\mathrm{spc}}^k(M)$ (\textit{resp}. $\Omega_{\mathrm{sfc}}^k(M)$) the space of strictly past compact (\textit{resp.} strictly future compact) $k$-forms, that is the collection of $\omega\in\Omega^k(M)$ such that there exists a compact set $K\subseteq M$ for which $J^+(\textrm{supp}(\omega))\subseteq J^+(K)$ (\textit{resp.} $J^-(\textrm{supp}(\omega))\subseteq J^-(K)$), where $J^\pm$ denotes the causal future and the causal past in $M$.  Notice that $\Omega_{\mathrm{sfc}}^k(M)\cap\Omega_{\mathrm{spc}}^k(M)=\Omega_{\mathrm{c}}^k(M)$.
		\item
		$\Omega_{\mathrm{pc}}^k(M)$ (\textit{resp}. $\Omega_{\mathrm{fc}}^k(M)$) the space of {past} compact (\textit{resp.} {future} compact) $k$-forms, that is, $\omega\in\Omega^k(M)$ for which
		${\rm supp}(\omega)\cap J^-(K)$ (\textit{resp.} ${\rm supp}(\omega)\cap J^+(K)$) is compact for all compact $K\subset M$.
		\item $\Omega_{\mathrm{tc}}^k(M)\doteq\Omega_{\mathrm{fc}}^k(M)\cap\Omega_{\mathrm{pc}}^k(M)$, the space of timelike compact $k$-forms.
		\item $\Omega_{\mathrm{sc}}^k(M)\doteq\Omega_{\mathrm{sfc}}^k(M)\cup\Omega_{\mathrm{spc}}^k(M)$, the space of spacelike compact $k$-forms.
	\end{enumerate}
\end{Definition}

We indicate with $\mathrm{d}_k\colon\Omega^k(M)\to\Omega^{k+1}(M)$ the exterior derivative and, being $(M,g)$ oriented, we can identify a unique, metric-induced, Hodge operator $\star_k:\Omega^k(M)\to\Omega^{m-k}(M)$, $m=\dim M$ such that, for all $\alpha,\beta\in\Omega^k(M)$,
{
$\alpha\wedge\star_k\beta=(\alpha,\beta)_{k,g}\mu_g$, where $\wedge$ is the exterior product of forms, $\mu_g$ the metric induced volume form, while $(\,,\,)_{k,g}$ is the pairing between $k$-forms induced by the metric $g$.
}
In addition one can define a pairing between $k$-forms as
\begin{align}\label{Eq: pairing between k-forms}
	(\alpha,\beta)_k
	\doteq\int\alpha\wedge\star_k\beta\,,
\end{align}
where $\alpha,\beta\in\Omega^k(M)$ are such that $\operatorname{supp}(\alpha)\cap\operatorname{supp}(\beta)$ is compact.
Since $M$ is endowed with a Lorentzian metric it holds that, when acting on smooth $k$-forms, $\star_k^{-1}=(-1)^{k(m-k)-1}\star_{m-k}$.
Combining these data first we define the {\em codifferential} operator $\delta_k:\Omega^{k+1}(M)\to\Omega^k(M)$ as $\delta_k\doteq(-1)^k\star_{k-1}^{-1}\circ\,\mathrm{d}_{m-k}\circ\star_k$.
Secondly we introduce the {\em D'Alembert-de Rham} wave operator $\Box_k:\Omega^k(M)\to\Omega^k(M)$ such that $\Box_k\doteq \mathrm{d}_{k-1}\delta_k+\delta_{k+1} \mathrm{d}_k$, as well as the {\em Maxwell} operator $\delta_{k+1}\mathrm{d}_k:\Omega^k(M)\to\Omega^k(M)$.
Observe, furthermore, that $\Box_k$ differs {from the D'Alembert wave operator $g^{ab}\nabla_a\nabla_b$} acting on $k$-forms by $0$-order term built out of the metric and whose explicit form depends from the value of $k$, see for example \cite[Sec. II]{Pfenning:2009nx}.

\begin{remark}
	For notational convenience, in the following we shall drop all subscripts $_k$ since the relevant value will be clear case by case from the context. Hence, unless stated otherwise, all statements of this paper apply to all $k$ such that $0\leq k \leq m=\dim M$.
\end{remark}

To conclude the section, we focus on the boundary $\partial M$ and on the interplay with $k$-forms lying in $\Omega^k(M)$. The first step consists of defining two notable maps. These relate $k$-forms defined on the whole $M$ with suitable counterparts living on $\partial M$ and,
{in the special case of $k=1$, they boil down to the restriction to the boundary either of the tangent component of a $1$-form or of its component conormal to $\partial M$.} For later convenience we consider in the following definition a slightly more general scenario, namely a codimension $1$ smoothly embedded submanifold $N\hookrightarrow M$.

\begin{remark}
Since we feel that some confusion might arise, we denote the paring between forms on $\partial M$ with $(\;,\;)_\partial$.
\end{remark}

\begin{Definition}\label{Def: tangential and normal component}
	Let $(M,g_M)$ be a smooth Lorentzian manifold and let $\iota_N\colon N\to M$ be a codimension $1$ smoothly embedded submanifold of $M$ with induced metric $g_N:=\iota_N^*g_M$.
	We define the \textit{tangential} and \textit{normal} components relative to $N$ as 
	\begin{subequations}\label{Eq: tangential and normal maps}
		\begin{align}
			&\mathrm{t}_N\colon\Omega^k(M)\to\Omega^k(N)\,,\qquad\quad\omega\mapsto
			\mathrm{t}_N\omega:=\iota_N^*\omega\,,\\
			&\mathrm{n}_N\colon\Omega^k(M)\to\Omega^{k-1}(N)\,,\qquad\omega\mapsto
			\mathrm{n}_N\omega:=\star_N^{-1}\mathrm{t}_N\star_M\omega\,,
		\end{align}
	\end{subequations}
	where $\star_M,\star_N$ denote the Hodge dual over $M,N$ respectively.
	In particular, for all $k\in\mathbb{N}\cup\{0\}$ we define
	\begin{align}\label{Eq: k-forms with vanishing tangential or normal component}
		\Omega_{\mathrm{t}_N}^k(M)\doteq\lbrace\omega\in\Omega^k(M)\;|\;\mathrm{t}_N\omega=0\rbrace\,,\qquad
		\Omega_{\mathrm{n}_N}^k(M)\doteq\lbrace\omega\in\Omega^k(M)\;|\;\mathrm{n}_N\omega=0\rbrace\,.
	\end{align}
	Similarly we will use the symbols $\Omega_{\mathrm{c,t_N}}^k(M)$ and $\Omega_{\mathrm{c,n_N}}^k(M)$ when we consider only smooth, compactly supported $k$-forms.
\end{Definition}

\begin{remark}
	In this paper the r\^ole of $N$ will be played often by $\partial M$. In this case, we shall drop the subscript form Equation \eqref{Eq: tangential and normal maps}, namely $\mathrm{t}\equiv\mathrm{t}_{\partial M}$ and $\mathrm{n}\equiv\mathrm{n}_{\partial M}$. Furthermore, the differential and the codifferential operators on $\partial M$ will be denoted, respectively, as $\mathrm{d}_\partial$, $\delta_\partial$. 
\end{remark}

\noindent As last step, we observe that \eqref{Eq: tangential and normal maps} together with \eqref{Eq: k-forms with vanishing tangential or normal component} entail the following series of identities on $\Omega^k(M)$ for all $k\in\mathbb{N}\cup\{0\}$.
\begin{subequations}\label{Eq: relations between d,delta,t,n}
	\begin{gather}
	\label{Eq: relations-bulk}
	\star\delta=(-1)^k\mathrm{d}\star\,,\quad
	\delta\star=(-1)^{k+1}\star\mathrm{d}\,,\\
	\label{Eq: relations-bulk-to-boundary}
	\star_\partial\mathrm{n}=\mathrm{t}\star\,,\quad
	\star_\partial\mathrm{t}=(-1)^{(m-k)}\mathrm{n}\star\,,\quad
	\mathrm{d}_\partial\mathrm{t}=\mathrm{t}\mathrm{d}\,,\quad
	\delta_\partial\mathrm{n}=-\mathrm{n}\delta\,.
	\end{gather}
\end{subequations}
A notable consequence of \eqref{Eq: relations-bulk-to-boundary} is that, while on globally hyperbolic spacetimes with empty boundary, the operators $\mathrm{d}$ and $\delta$ {are formal adjoints of each other}, in the case in hand, the situation is different.
A direct application of Stokes' theorem yields that 
\begin{align}\label{Eq: boundary terms for delta and d}
(\mathrm{d}\alpha,\beta)-(\alpha,\delta\beta)=
(\mathrm{t}\alpha,\mathrm{n}\beta)_\partial\,,
\end{align}
where the pairing in the right-hand side is the one associated to forms living on $\partial M$ and where $\alpha\in\Omega^k(M)$ and $\beta\in\Omega^{k+1}(M)$ are arbitrary, though such that $\textrm{supp}(\alpha)\cap\textrm{supp}(\beta)$ is compact. 
In connection to the operators $\mathrm{d}$ and $\delta$ we shall employ the notation
\begin{equation}\label{Eq: kernel of d and delta}
\Omega^k_{\mathrm{d}}(M)=\{\omega\in\Omega^k(M)\;|\;\mathrm{d}\omega=0\}\,,\qquad
\Omega^k_\delta(M)=\{\omega\in\Omega^k(M)\;|\;\delta\omega=0\}\,,
\end{equation}
where $k\in\mathbb{N}$. Similarly we shall indicate with $\Omega^k_{\sharp,\delta}(M)\doteq\Omega^k_{\sharp}(M)\cap\Omega^k_\delta(M)$ and $\Omega^k_{\sharp,\mathrm{d}}(M)\doteq\Omega^k_{\sharp}(M)\cap\Omega^k_{\mathrm{d}}(M)$ where $\sharp\in\{\mathrm{c,sc,pc,fc,tc}\}$.

\begin{remark}\label{Rmk: surjectivity of t,n,tdelta,nd}
	With reference to Definition \ref{Def: tangential and normal component}, observe that the following linear map is surjective:
	\begin{align*}
	\Omega^k(M)\ni\omega\mapsto(\mathrm{n}\omega,\mathrm{t}\omega,\mathrm{t}\delta\omega,\mathrm{nd}\omega)\in
	\Omega^{k-1}(\partial M)\times
	\Omega^k(\partial M)\times
	\Omega^{k-1}(\partial M)\times
	\Omega^k(\partial M)\,.
	\end{align*}
	The proof of this claim is based on a local computation similar to the one in the proof of Lemma \ref{Lem: equivalence between M-boundary conditions and Sigma-boundary conditions}.
	For all relatively compact open subset $U_{\partial M}\subset\partial M$ we consider a open neighbourhood $U\subseteq M$ of the form $U=[0,\epsilon)\times U_{\partial M}$ built out the exponential map $\exp_M$. In addition we can fine tune $U$ in such a way, that calling $x$, the coordinate built via $\exp_M$ out of the outward pointing, normal vector field at each point $p\in U$, the smooth function $N=g(\partial_x,\partial_x)$ is strictly positive.
	Let $U_x\doteq\{x\}\times U_{\partial M}$ for $x\in [0,\epsilon)$ and let $\mathrm{t}_{U_x}$, $\mathrm{n}_{U_x}$ be the corresponding tangential and normal maps -- \textit{cf.} Definition \ref{Def: tangential and normal component}.
	Therefore, we can split $\omega\in\Omega^k(M)$ as
	\begin{align*}
		\omega|_{U_x}
		=\mathrm{t}_{U_x}\omega
		+\mathrm{n}_{U_x}\omega\wedge N^{\frac{1}{2}}\mathrm{d}x\,.
	\end{align*}
	It descends from Definition \ref{Def: tangential and normal component} that
	\begin{align*}
		\mathrm{t}\omega|_{U_{\partial M}}=
		\mathrm{t}_{U_x}\omega|_{x=0}\,,\qquad
		\mathrm{n}\omega|_{U_{\partial M}}=
		\mathrm{n}_{U_x}\omega|_{x=0}\,.
	\end{align*}
	Applying the differential $\mathrm{d}$ to the local splitting of $\omega$ yields
	\begin{align*}
		\mathrm{nd}\omega|_{U_{\partial M}}=
		N^{-\frac{1}{2}}\partial_x\mathrm{t}_{U_x}\omega|_{x=0}
		+N^{-\frac{1}{2}}\mathrm{d}_{\partial U}(N^{\frac{1}{2}}\mathrm{n}_{U_x}\omega)|_{x=0}\,.
	\end{align*}
	Moreover, the Hodge dual $\star_U\omega$ can be computed as
	\begin{align*}
		\star_U\omega|_{U_x}
		=\star_{U_x}\mathrm{n}_{U_x}\omega
		+\star_{U_x}\mathrm{t}_{U_x}\omega\wedge N^{\frac{1}{2}}\mathrm{d}x\,,
	\end{align*}
	where $\star_{U_x}\colon\Omega^\bullet(U_x)\to\Omega^{m-1-\bullet}(U_x)$ denotes the Hodge dual on $U_x$.
	Taking into account equations \eqref{Eq: relations between d,delta,t,n} we find
	\begin{align*}
		\mathrm{t}\delta\omega|_{U_{\partial M}}
		&=(-1)^{(k+1)(m-k)-1}\star_{\partial U}\mathrm{nd}\star_U\omega|_{U_{\partial M}}
		\\&=(-1)^{(k+1)(m-k)-1}\star_{\partial U}\bigg[
		N^{-\frac{1}{2}}\partial_x\star_{U_x}\mathrm{n}_{U_x}\omega|_{x=0}
		+N^{-\frac{1}{2}}\mathrm{d}_{\partial U}(N^{\frac{1}{2}}\star_{U_x}\mathrm{t}_{U_x}\omega)|_{x=0}
		\bigg]\,.
	\end{align*}
	The claim follows from the fact that $\mathrm{t}_{U_x}\omega|_{x=0}$, $\mathrm{n}_{U_x}\omega|_{x=0}$, $\partial_x\mathrm{t}_{U_x}\omega|_{x=0}$ and $\partial_x\mathrm{n}_{U_x}\omega|_{x=0}$ are functionally independent.
\end{remark}

\begin{remark}
	The normal map $\mathrm{n}:\Omega^k(M)\to\Omega^{k-1}(\partial M)$ can be equivalently read as $\nu\operatorname{\lrcorner}\omega$, the contraction on $\partial M$ between $\omega\in\Omega^k(M)$ and the vector field $\nu\in\Gamma(TM)|_{\partial M}$ which corresponds at each point $p\in\partial M$ to the outward pointing unit vector, normal to $\partial M$.
\end{remark}

\section{Maxwell's Equations and Boundary Conditions}\label{Sec: the algebra for the vector potential with Dirichlet boundary conditions}

In this section we analyze the space of solutions of Maxwell's equations for arbitrary $k$-forms on a globally hyperbolic spacetime with timelike boundary $(M,g)$. We proceed in two separate steps.
First we focus our attention on the D'Alembert - de Rham wave operator $\Box=\delta \mathrm{d}+\mathrm{d}\delta$ acting on $\Omega^k(M)$.
We identify a class of boundary conditions which correspond to imposing that the underlying system is closed ({\it i.e.} the symplectic flux across $\partial M$ vanishes) and we characterize the kernel of the operator in terms of its advanced and retarded fundamental solutions.
These are assumed to exist and, following the same strategy employed in \cite{Dappiaggi-Drago-Ferreira-19} for the scalar wave equation, we prove that this is indeed the case whenever $(M,g)$ is an ultrastatic spacetime, \textit{cf.} Appendix \ref{App: Existence of fundamental solutions on ultrastatic spacetimes}.

In the second part of the section we focus instead on the Maxwell operator $\delta\mathrm{d}\colon\Omega^k(M)\to\Omega^k(M)$.
In order to characterize its kernel we will need to discuss the interplay between the choice of boundary condition and that of gauge fixing.
This represents the core of this part of our work.

\subsection{On the D'Alembert--de Rham wave operator}\label{Sec: on the D'Alembert--de Rham wave operator}

Consider the operator $\Box:\Omega^k(M)\to\Omega^k(M)$, where $(M,g)$ is a globally hyperbolic spacetime with timelike boundary of dimension $\dim M=m\geq 2$. Then, for any pair $\alpha,\beta\in\Omega^k(M)$ such that $\textrm{supp}(\alpha)\cap\textrm{supp}(\beta)$ is compact, the following Green's formula holds true:
\begin{align}\label{Eq: boundary terms for wave operator}
	(\Box\alpha,\beta)-(\alpha,\Box\beta)=
	(\mathrm{t}\delta\alpha,\mathrm{n}\beta)_\partial-
	(\mathrm{n}\alpha,\mathrm{t}\delta\beta)_\partial-
	(\mathrm{n}\mathrm{d}\alpha,\mathrm{t}\beta)_\partial+
	(\mathrm{t}\alpha,\mathrm{n}\mathrm{d}\beta)_\partial\,,
\end{align}
where $\mathrm{t,n}$ are the maps introduced in Definition \ref{Def: tangential and normal component}, while $(,)$ and $(,)_\partial$ are the standard, metric induced pairing between $k$-forms respectively on $M$ and on $\partial M$.
In view of Definition \ref{Def: tangential and normal component}, it descends that the right-hand side of \eqref{Eq: boundary terms for wave operator} vanishes automatically if we restrict our attention to $\alpha\in\Omega_{\mathrm{c}}^k(\mathring{M})$ or $\beta\in\Omega^k_{\mathrm{c}}(\mathring{M})$, but boundary conditions must be imposed for the same property to hold true on a larger set of $k$-forms. From a physical viewpoint this requirement is tantamount to imposing that the system described by $k$-forms obeying the D'Alembert--de Rham wave equation is closed.

\begin{lemma}\label{Lemma: boundary condition}
	Let $f,f^\prime\in C^\infty(\partial M)$ and let 
	\begin{equation}\label{Eq: f,f' boundary condition}
	\Omega^k_{f,f^\prime}(M)\doteq\{\omega\in\Omega^k(M)\;|\;\mathrm{nd}\omega=f\mathrm{t}\omega\,,\;\mathrm{t}\delta\omega=f^\prime \mathrm{n}\omega \}.
	\end{equation}
	Then, $\forall\alpha,\beta\in\Omega^k_{f,f^\prime}(M)$, $0\leq k\leq m=\dim M$ such that $\textrm{supp}(\alpha)\cap\textrm{supp}(\beta)$ is compact, it holds 
	$$(\Box\alpha,\beta)-(\alpha,\Box\beta)=0.$$
\end{lemma}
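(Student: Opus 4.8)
The plan is to start directly from the Green's formula \eqref{Eq: boundary terms for wave operator}, which holds for arbitrary $\alpha,\beta\in\Omega^k(M)$ with compactly intersecting supports. Since $\alpha,\beta\in\Omega^k_{f,f'}(M)$ satisfy $\mathrm{nd}\alpha=f\,\mathrm{t}\alpha$, $\mathrm{t}\delta\alpha=f'\,\mathrm{n}\alpha$ and likewise for $\beta$, the strategy is simply to substitute these four identities into the right-hand side of \eqref{Eq: boundary terms for wave operator} and check that the resulting four boundary pairings cancel in pairs. The key point to exploit is that $f,f'\in C^\infty(\partial M)$ are scalar functions, hence multiplication by $f$ or $f'$ is symmetric with respect to the pairing $(\,,\,)_\partial$ on forms over $\partial M$: for any forms $\eta,\zeta$ of equal degree on $\partial M$ one has $(f\eta,\zeta)_\partial=(\eta,f\zeta)_\partial$, because $f\eta\wedge\star_\partial\zeta=\eta\wedge\star_\partial(f\zeta)$ pointwise.

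First I would rewrite the four terms on the right-hand side of \eqref{Eq: boundary terms for wave operator} using the boundary conditions:
\begin{align*}
(\mathrm{t}\delta\alpha,\mathrm{n}\beta)_\partial
&=(f'\,\mathrm{n}\alpha,\mathrm{n}\beta)_\partial
=(\mathrm{n}\alpha,f'\,\mathrm{n}\beta)_\partial
=(\mathrm{n}\alpha,\mathrm{t}\delta\beta)_\partial\,,\\
(\mathrm{nd}\alpha,\mathrm{t}\beta)_\partial
&=(f\,\mathrm{t}\alpha,\mathrm{t}\beta)_\partial
=(\mathrm{t}\alpha,f\,\mathrm{t}\beta)_\partial
=(\mathrm{t}\alpha,\mathrm{nd}\beta)_\partial\,.
\end{align*}
Substituting the first identity, the terms $(\mathrm{t}\delta\alpha,\mathrm{n}\beta)_\partial$ and $-(\mathrm{n}\alpha,\mathrm{t}\delta\beta)_\partial$ cancel; substituting the second, the terms $-(\mathrm{nd}\alpha,\mathrm{t}\beta)_\partial$ and $+(\mathrm{t}\alpha,\mathrm{nd}\beta)_\partial$ cancel. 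Hence the entire right-hand side of \eqref{Eq: boundary terms for wave operator} vanishes, which is exactly the claim $(\Box\alpha,\beta)-(\alpha,\Box\beta)=0$.

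Before assembling this, I would first establish the Green's formula \eqref{Eq: boundary terms for wave operator} itself, since it precedes the lemma and underpins everything: it follows by writing $\Box=\mathrm{d}\delta+\delta\mathrm{d}$ and applying \eqref{Eq: boundary terms for delta and d} twice — once with the pair $(\delta\alpha,\beta)$ of degrees $(k-1,k)$ and once with the pair $(\alpha,\mathrm{d}\beta)$ of degrees $(k,k+1)$, then subtracting the symmetric expression with $\alpha$ and $\beta$ swapped. This is the only computational step with any content, and even it is routine bookkeeping with signs and degrees. I do not anticipate a genuine obstacle here: the main ``difficulty'' is purely organizational — keeping track of which of the four bulk-to-boundary pairings in \eqref{Eq: boundary terms for wave operator} receives which boundary condition, and being careful that the antisymmetrization in $\alpha\leftrightarrow\beta$ produces precisely the four listed terms with the signs shown. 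The symmetry of scalar multiplication under $(\,,\,)_\partial$ does all the real work, and once that observation is in place the cancellation is immediate.
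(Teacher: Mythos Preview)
Your proposal is correct and follows essentially the same approach as the paper's proof: both invoke the Green's formula \eqref{Eq: boundary terms for wave operator} and observe that the boundary terms cancel in pairs because multiplication by the scalar functions $f,f'$ is symmetric with respect to $(\,,\,)_\partial$ (equivalently, $\star_\partial(f\alpha)=f\star_\partial\alpha$). The paper's proof is terser but the content is the same; your additional remark about the support condition descending to the boundary pairings is also noted there.
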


\begin{proof}
	This is a direct consequence of \eqref{Eq: boundary terms for wave operator} together with the property that, for every $f\in C^\infty(\partial M)$ and for every $\alpha\in\Omega^k(\partial M)$, $\star_\partial(f\alpha)=f(\star_\partial\alpha)$. In addition observe that the assumption on the support of $\alpha$ and $\beta$ descends also to the forms present in each of the pairing in the right hand side of \eqref{Eq: boundary terms for wave operator}.
\end{proof}

\begin{remark}\label{Rem: extreme cases}
	In Lemma \ref{Lemma: boundary condition} two cases are quite peculiar. As a matter of fact, if $k=m=\dim M$ the first condition becomes empty since $\mathrm{d}\omega = \mathrm{t}\omega=0$ for all $\omega\in\Omega^m(M)$.
	Similarly, if $k=0$, the second condition does not bring any constraint since $\delta\omega=\mathrm{n}\omega=0$ for all $\omega\in\Omega^0(M)$.
	In this case equation \eqref{Eq: f,f' boundary condition} reduces to Robin boundary conditions, which were studied in \cite{Dappiaggi-Drago-Ferreira-19}.
\end{remark}

\begin{remark}
	It is important to stress that the boundary conditions defined in Lemma \ref{Lemma: boundary condition} are not the largest class which makes the right hand side \eqref{Eq: boundary terms for wave operator} vanish.
	As a matter of fact one can think of additional possibilities similar to the so-called Wentzell boundary conditions, which were considered in the scalar scenario, see {\it e.g.} \cite{Dappiaggi-Drago-Ferreira-19,Dappiaggi:2018pju,Zahn:2015due}.
\end{remark}

Lemma \eqref{Lemma: boundary condition} individuates therefore a class of boundary conditions which makes the operator $\Box$ formally self-adjoint.
In between all these possibilities we highlight those which are of particular interest to our analysis -- \textit{cf.} Theorem \ref{Thm: assumption theorem}.

\begin{Definition}\label{Def: Dirichlet, Box-tangential, Box-normal, Robin Box-tangential, Robin Box-normal boundary conditions}
Let $(M,g)$ be a globally hyperbolic spacetime with timelike boundary and let $f\in C^\infty(\partial M)$.
We call
\begin{enumerate}
	\item space of $k$-forms with {\em Dirichlet} boundary condition
	\begin{align}\label{Eq: Dirichlet k-forms}
		\Omega^k_{\mathrm{D}}(M)\doteq\{\omega\in\Omega^k(M)\;|\;\mathrm{t}\omega=0\;,\;\mathrm{n}\omega=0\}\,,
	\end{align}
	\item space of $k$-forms with {\em $\Box$-tangential} boundary condition
	\begin{align}\label{Eq: Box-tangential k-forms}
		\Omega^k_\parallel(M)\doteq\{\omega\in\Omega^k(M)\;|\;\mathrm{t}\omega=0\;,\;\mathrm{t}\delta\omega=0\}\,,
	\end{align}
	\item space of $k$-forms with {\em $\Box$-normal} boundary condition
	\begin{align}\label{Eq: Box-normal k-forms}
		\Omega^k_\perp(M)\doteq\{\omega\in\Omega^k(M)\;|\;\mathrm{n}\omega=0\;,\;\mathrm{nd}\omega=0\}\,.
	\end{align}
	\item space of $k$-forms with {\em Robin $\Box$-tangential} boundary condition
		\begin{align}\label{Eq: Robin Box-tangential k-forms}
			\Omega^k_{f_\parallel}(M)\doteq\{\omega\in\Omega^k(M)\;|\;\mathrm{t}\delta\omega=f\mathrm{n}\omega\;,\;\mathrm{t}\omega=0\}\,,
		\end{align}
	\item space of $k$-forms with {\em Robin $\Box$-normal} boundary condition
		\begin{align}\label{Eq: Robin Box-normal k-forms}
		\Omega^k_{f_\perp}(M)\doteq\{\omega\in\Omega^k(M)\;|\;\mathrm{nd}\omega=f\mathrm{t}\omega\;,\;\mathrm{n}\omega=0\}\,,
		\end{align}
\end{enumerate}
Whenever the domain of the operator $\Box$ is restricted to one of these spaces, we shall indicate it with symbol $\Box_\sharp$ where $\sharp\in\{\mathrm{D},\parallel,\perp,f_\parallel,f_\perp\}$.
\end{Definition}

\begin{remark}\label{Rmk: on nomenclature for Dirichlet and Neumann boundary conditions}
	Since per definition $\delta\Omega^0(M)=\{0\}=\mathrm{n}\Omega^0(M)$, we observe that $\Omega^0_{\mathrm{D}}(M)=\Omega^0_\parallel(M)$.
	In particular we have
	\begin{align*}
		\Omega^0_{\mathrm{D}}(M)\doteq\{\omega\in C^\infty(M)\;|\;\mathrm{t}\omega=\omega|_{\partial M}=0\},\qquad
		\Omega^0_{\perp}(M)\doteq\{\omega\in C^\infty(M)\;|\;\mathrm{nd}\omega=\nu(\mathrm{d}\omega)|_{\partial M}=0\}\,,
	\end{align*}
	where, for all $p\in\partial M$, $\nu_p$ coincides with the outward pointing unit vector, normal to the boundary.
	These two options coincide with the standard Dirichlet and Neumann boundary conditions for scalar functions.
	Moreover for $f=0$ we have $\Omega^k_{f_\parallel}(M)=\Omega^k_{\parallel}(M)$ as well as $\Omega^k_{f_\perp}(M)=\Omega^k_\perp(M)$.
	\\
	Finally it is worth mentioning that, for a static spacetime $(M,g)$, the boundary conditions $1$-$3$, introduced in Definition \ref{Def: Dirichlet, Box-tangential, Box-normal, Robin Box-tangential, Robin Box-normal boundary conditions}, are themselves static, that is they do not depend explicitly on the time coordinate $\tau$.
	A similar statement holds true for $f_\perp$, $f_\parallel$ boundary conditions provided that $f\in C^\infty(\partial M)$ and $\partial_\tau f=0$.
	This will play a key r\^ole when we will verify that Assumption \ref{Thm: assumption theorem} is valid on ultrastatic spacetimes -- \textit{cf.} Proposition \ref{Prop: self-adjoint relation for parallel-, perp- and f,0- boundary conditions} in Appendix \ref{App: Existence of fundamental solutions on ultrastatic spacetimes}.
\end{remark}

\begin{remark}\label{Rmk: duality of bc under Hodge action}
	It is interesting to observe that different boundary conditions can be related via the action of the Hodge operator.
	In particular, using Equation \eqref{Eq: relations between d,delta,t,n} and \eqref{Eq: f,f' boundary condition}, one can infer that, for any $f,f^\prime\in C^\infty(\partial M)$ it holds that 
	$$\star\Omega^k_{f,f^\prime}(M)=\Omega^{m-k}_{-f^\prime,-f}(M).$$
	At the same time, with reference, to the space of $k$-forms in Definition \ref{Def: Dirichlet, Box-tangential, Box-normal, Robin Box-tangential, Robin Box-normal boundary conditions} it holds
		\begin{align}\label{Eq: duality between Dirichlet-Neumann boundary conditions}
			\star\Omega^k_{\mathrm{D}}(M)=\Omega^{m-k}_{\mathrm{D}}(M)\,,\qquad
			\star\Omega^k_\parallel(M)=\Omega^{m-k}_\perp(M)\,,\qquad
			\star\Omega^k_{f_\parallel}(M)=\Omega^{m-k}_{-f_\perp}(M)\,.
		\end{align}
\end{remark}

\noindent For later convenience we prove the following lemma.
\begin{lemma}\label{Lem: on boundary conditions preserving splitting}
	Let $\sharp\in\lbrace\mathrm{D},\parallel,\perp,f_\parallel,f_\perp\rbrace$, with $f\in C^\infty(\partial M)$.
	The following statements hold true:
	\begin{enumerate}
		\item
		for all $\omega\in\Omega_{\mathrm{sc}}^k(M)\cap\Omega_\sharp^k(M)$ there exists $\omega^+\in\Omega_{\mathrm{spc}}^k(M)\cap\Omega_\sharp^k(M)$ and $\omega^-\in\Omega_{\mathrm{sfc}}^k(M)\cap\Omega_{\sharp}^k(M)$ such that $\omega=\omega^++\omega^-$.
		\item 
		for all $\omega\in\Omega_\sharp^k(M)$ there exists $\omega^+\in\Omega_{\mathrm{pc}}^k(M)\cap\Omega_\sharp^k(M)$ and $\omega^-\in\Omega_{\mathrm{fc}}^k(M)\cap\Omega_{\sharp}^k(M)$ such that $\omega=\omega^++\omega^-$.
	\end{enumerate}
\end{lemma}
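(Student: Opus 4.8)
The plan is to reduce both statements to the standard splitting lemma for the Cauchy problem on globally hyperbolic spacetimes (see \cite{BGP}) applied componentwise, while checking that the cut-off procedure does not spoil the boundary conditions. For statement (2), I would pick a Cauchy temporal function $\tau$ as in Theorem \ref{Thm: globally hyperbolic spacetime with time-like boundary} and a smooth function $\chi\colon M\to[0,1]$ depending only on $\tau$, with $\chi\equiv 1$ on $\{\tau\geq 1\}$ and $\chi\equiv 0$ on $\{\tau\leq 0\}$; then set $\omega^+\doteq\chi\omega$ and $\omega^-\doteq(1-\chi)\omega$. Since $\omega^+$ is supported in $\{\tau\geq 0\}$ and $\omega^-$ in $\{\tau\leq 1\}$, and any compact set $K\subset M$ is contained in a ``time slab'' $\{a\leq\tau\leq b\}$, one checks directly that $\operatorname{supp}(\omega^+)\cap J^-(K)$ and $\operatorname{supp}(\omega^-)\cap J^+(K)$ are compact (this is the usual argument, using that $J^\mp(K)$ meets $\{\tau\gtrless b\}$ resp.\ $\{\tau\lessgtr a\}$ only inside a compact set because $(M,g)$ is globally hyperbolic). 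Hence $\omega^\pm\in\Omega^k_{\mathrm{pc}}(M)$, $\Omega^k_{\mathrm{fc}}(M)$ respectively, and $\omega=\omega^++\omega^-$. For statement (1) the same recipe works starting from $\omega\in\Omega^k_{\mathrm{sc}}(M)$: if $\operatorname{supp}(\omega)\subseteq J(K_0)$ for a compact $K_0$, then $\operatorname{supp}(\omega^+)\subseteq J^+(K_0)\cap\{\tau\geq 0\}$, which is strictly past compact, and symmetrically for $\omega^-$.

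The only genuinely non-routine point is that multiplication by the scalar $\chi$ must preserve membership in $\Omega^k_\sharp(M)$ for each $\sharp\in\{\mathrm{D},\parallel,\perp,f_\parallel,f_\perp\}$. The key observation is that $\chi$ depends only on $\tau$, whose gradient is by construction tangent to $\partial M$ (Theorem \ref{Thm: globally hyperbolic spacetime with time-like boundary}); equivalently $\mathrm{d}\chi$ has vanishing normal component at $\partial M$, i.e.\ $\mathrm{n}(\mathrm{d}\chi)=0$ and $\mathrm{t}(\mathrm{d}\chi|_{\partial M})=\mathrm{d}_\partial(\chi|_{\partial M})$. Now using the Leibniz rules $\mathrm{d}(\chi\omega)=\mathrm{d}\chi\wedge\omega+\chi\,\mathrm{d}\omega$ and $\delta(\chi\omega)=-\,\nabla\chi\,\lrcorner\,\omega+\chi\,\delta\omega$, together with the identities \eqref{Eq: relations between d,delta,t,n}, and the fact that $\mathrm{t},\mathrm{n}$ are $C^\infty(\partial M)$-linear while $\mathrm{t}(\chi\omega)=(\chi|_{\partial M})\,\mathrm{t}\omega$ and $\mathrm{n}(\chi\omega)=(\chi|_{\partial M})\,\mathrm{n}\omega$, one computes that $\mathrm{t}(\chi\omega)$, $\mathrm{n}(\chi\omega)$, $\mathrm{t}\delta(\chi\omega)$, $\mathrm{nd}(\chi\omega)$ are obtained from $\mathrm{t}\omega$, $\mathrm{n}\omega$, $\mathrm{t}\delta\omega$, $\mathrm{nd}\omega$ by multiplication by $\chi|_{\partial M}$ plus terms that vanish when $\mathrm{d}\chi$ has trivial normal component. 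Concretely, since $\nabla\chi$ is tangent to $\partial M$, the contraction $\nabla\chi\,\lrcorner\,\omega$ restricted to $\partial M$ only sees $\mathrm{t}\omega$, so $\mathrm{t}\delta(\chi\omega)=(\chi|_{\partial M})\mathrm{t}\delta\omega-\mathrm{t}(\nabla\chi\,\lrcorner\,\omega)$, and the extra term vanishes as soon as $\mathrm{t}\omega=0$; dually $\mathrm{nd}(\chi\omega)=(\chi|_{\partial M})\mathrm{nd}\omega+(\text{term killed by }\mathrm{n}\omega=0)$. A case-by-case inspection of the five boundary conditions — each of which imposes vanishing or proportionality of exactly one ``tangential-type'' and one ``normal-type'' datum — then shows that whenever $\omega\in\Omega^k_\sharp(M)$ one has $\chi\omega\in\Omega^k_\sharp(M)$, and likewise $(1-\chi)\omega\in\Omega^k_\sharp(M)$, using that the defining relations of $\Omega^k_\sharp(M)$ are $C^\infty(\partial M)$-linear in the pair of boundary data.

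The main obstacle is precisely this last bookkeeping step: one must verify that the two ``extra'' terms produced by the Leibniz rule — $\mathrm{n}(\mathrm{d}\chi\wedge\omega)$ and $\mathrm{t}(\nabla\chi\,\lrcorner\,\omega)$ — are controlled by the very component that the boundary condition sets to zero, so that they drop out. This is where the tangency of $\nabla\tau$ (equivalently $\nabla\chi$) to $\partial M$ is essential; without it the cut-off would generate a spurious normal derivative at the boundary and the argument would fail. Once this is in place, the rest is the standard globally hyperbolic splitting argument, and the claimed support properties of $\omega^\pm$ follow immediately from the nesting of causal sets in the slab decomposition \eqref{eq:line_element}.
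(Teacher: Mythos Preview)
Your proposal is correct and follows essentially the same strategy as the paper: choose a cut-off $\chi$ that depends only on the Cauchy temporal function $\tau$, set $\omega^\pm=\chi\omega$ and $(1-\chi)\omega$, and then verify that the boundary conditions are preserved using that $\nabla\tau$ is tangent to $\partial M$. The only difference is cosmetic: where you compute the extra Leibniz terms $\mathrm{n}(\mathrm{d}\chi\wedge\omega)$ and $\mathrm{t}(\nabla\chi\,\lrcorner\,\omega)$ directly via the contraction formula for $\mathrm{n}$ and the tangency of $\nabla\chi$, the paper instead invokes the decomposition of Appendix~\ref{App: an explicit decomposition} (Lemma~\ref{Lem: equivalence between M-boundary conditions and Sigma-boundary conditions}), rewriting for instance $\mathrm{n}(\mathrm{d}\varphi_\pm\wedge\omega)=\partial_\tau\varphi_\pm\,\mathrm{n}_{\partial\Sigma_\tau}\mathrm{t}_{\Sigma_\tau}\omega$ and then using that $\mathrm{n}\omega=0$ on $\partial M$ is equivalent to $\mathrm{n}_{\partial\Sigma_\tau}\mathrm{t}_{\Sigma_\tau}\omega=\mathrm{n}_{\partial\Sigma_\tau}\mathrm{n}_{\Sigma_\tau}\omega=0$ for all $\tau$. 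Your route is arguably more self-contained since it avoids appealing to the appendix lemma, but the underlying mechanism---the vanishing of the normal component of $\mathrm{d}\chi$---is identical in both arguments.
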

\begin{proof}
	We show the result in the first case, the second one can be proved in complete analogy.
	Let $\omega\in\Omega_{\mathrm{sc}}^k(M)\cap\Omega_\sharp^k(M)$.
	Consider $\Sigma_1,\Sigma_2$, two Cauchy surfaces on $M$ -- \textit{cf.} \cite[Def. 3.10]{Ake-Flores-Sanchez-18} -- such that $J^+(\Sigma_1)\subset J^+(\Sigma_2)$.
	Moreover, let $\varphi_+\in \Omega_{\mathrm{pc}}^0(M)$ be such that $\varphi_+|_{J^+(\Sigma_2)}=1$ and $\varphi_+|_{J^-(\Sigma_1)}=0$.
	We define $\varphi_-:=1-\varphi_+\in \Omega_{\mathrm{fc}}^0(M)$.
	Notice that we can always choose $\varphi$ so that, for all $x\in M$, $\varphi(x)$ depends only on the value $\tau(x)$, where $\tau$ is the global time function defined in Theorem \ref{Thm: globally hyperbolic spacetime with time-like boundary}.
	We set $\omega_\pm\doteq\varphi_\pm\omega$ so that $\omega^+\in\Omega_{\mathrm{spc}}^k(M)\cap\Omega_\sharp^k(M)$ while $\omega^-\in\Omega_{\mathrm{sfc}}^k(M)\cap\Omega_{\sharp}^k(M)$.
	This is automatic for $\sharp=\mathrm{D}$ on account of the equalities
	\begin{align*}
		\mathrm{t}\omega^\pm=\varphi_\pm\mathrm{t}\omega=0\,,\qquad
		\mathrm{n}\omega^\pm=\varphi_\pm\mathrm{n}\omega=0\,.
	\end{align*}
	We check that $\omega^\pm\in\Omega^k_\sharp(M)$ for $\sharp=\perp$. The proof for the remaining boundary conditions $\perp,f_\parallel,f_\perp$ follows from a similar computation -- or by duality \textit{cf.} Remark \ref{Rmk: duality of bc under Hodge action}. It holds
	\begin{align*}
		\mathrm{n}\omega_\pm=\varphi_\pm|_{\partial M}\mathrm{n}\omega=0\,,\qquad
		\mathrm{n}\mathrm{d}\omega_\pm
		=\mathrm{n}(\mathrm{d}{\varphi_\pm}\wedge\omega)
		=\partial_\tau{\varphi_\pm}\,\mathrm{n}_{\partial\Sigma_\tau}\mathrm{t}_{\Sigma_\tau}\omega=0\,.
	\end{align*}
	In the last equality
	$\mathrm{t}_{\Sigma_\tau}\colon\Omega^k(M)\to\Omega^k(\Sigma_\tau)$ and $\mathrm{n}_{\partial\Sigma_\tau}\colon\Omega^k(\Sigma_\tau)\to\Omega^{k-1}(\partial\Sigma_\tau)$ are the maps from Definition \ref{Def: tangential and normal component} with $N\equiv\Sigma_\tau\doteq\{\tau\}\times\Sigma$, where $M=\mathbb{R}\times\Sigma$. The last identity follows because the condition $\mathrm{n}\omega=0$ is equivalent to $\mathrm{n}_{\partial\Sigma_\tau}\mathrm{t}_{\Sigma_\tau}\omega=0$ and $\mathrm{n}_{\partial\Sigma_\tau}\mathrm{n}_{\Sigma_\tau}\omega=0$ for all $\tau\in\mathbb{R}$ -- \textit{cf.} Lemma \ref{Lem: equivalence between M-boundary conditions and Sigma-boundary conditions} in Appendix \ref{App: an explicit decomposition}.
\end{proof}

\noindent In the following we shall make a key assumption on the existence of distinguished fundamental solutions for the operator $\Box_\sharp$ for $\sharp\in\lbrace\mathrm{D},\parallel,\perp,f_\parallel,f_\perp\rbrace$.
Subsequently we shall prove that such hypothesis holds true whenever the underlying globally hyperbolic spacetime with timelike boundary is ultrastatic and $f\in C^\infty(\partial \Sigma)$ has definite sign -- \textit{cf.} Appendix \ref{App: Existence of fundamental solutions on ultrastatic spacetimes}.
Recalling both Definition \ref{Def: space of forms} and Definition \ref{Def: Dirichlet, Box-tangential, Box-normal, Robin Box-tangential, Robin Box-normal boundary conditions} we require the following:

\begin{assumption}\label{Thm: assumption theorem}
	For all $f\in C^\infty(\partial M)$ and for all $k$ such that $0\leq k\leq m=\dim M$, there exist advanced $(-)$ and retarded $(+)$ fundamental solutions for the d'Alembert-de Rham wave operator $\Box_\sharp$ where $\sharp\in\lbrace\mathrm{D},\parallel,\perp,f_\parallel,f_\perp\rbrace$. In other words there exist continuous maps $G^\pm_\sharp\colon\Omega_{\mathrm{c}}^k(M)\to\Omega_{\mathrm{sc},\sharp}^k(M)\doteq\Omega_{\mathrm{sc}}^k(M)\cap\Omega_\sharp^k(M)$ such that
	\begin{align}\label{Eq: properties of advanced and retarded propagators}
		\Box\circ G^\pm_\sharp = \operatorname{Id}_{\Omega_{\mathrm{c}}^k(M)}\,,\qquad
		G^\pm_\sharp\circ\Box_{\mathrm{c},\sharp}=\operatorname{Id}_{\Omega_{\mathrm{c},\sharp}^k(M)}\,,\qquad
		{\rm supp}(G^\pm_\sharp\omega)\subseteq J^\pm({\rm supp}(\omega))\,,
	\end{align}
	for all $\omega\in\Omega_{\mathrm{c}}^k(M)$ where
	$\Box_{\mathrm{c},\sharp}$ indicates that the domain of $\Box$ is restricted to $\Omega_{\mathrm{c},\sharp}^k(M)$.
\end{assumption}

\vskip.2cm

\begin{remark}\label{Rmk: on the definition of advanced and retarded propagators}
	Notice that domain of $G^\pm_\sharp$ is not restricted to $\Omega^k_{\mathrm{c},\sharp}(M)$.
	Furthermore the second identity in \eqref{Eq: properties of advanced and retarded propagators} cannot be extended to  $G^\pm_\sharp\circ\Box=\operatorname{Id}_{\Omega_{\mathrm{c}}^k(M)}$ since it would entail $G^\pm_\sharp\Box\omega=\omega$ for all $\omega\in\Omega_{\mathrm{c}}^k(M)$.
	Yet the left hand side also entails that $\omega\in\Omega^k_{\mathrm{c},\sharp}$, which is manifestly a contradiction. 	
\end{remark}

\vskip.2cm

\begin{corollary}\label{Cor: uniqueness}
	Under the same hypotheses of Assumption \ref{Thm: assumption theorem}, if the fundamental solutions $G^\pm_\sharp$ exist, they are unique.
\end{corollary}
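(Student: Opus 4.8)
The plan is to run the classical uniqueness argument for Green operators on a globally hyperbolic spacetime, adapted so that the Green's formula \eqref{Eq: boundary terms for wave operator} plays the role of integration by parts. Suppose that $G^\pm_\sharp$ and $\widetilde{G}^\pm_\sharp$ both satisfy \eqref{Eq: properties of advanced and retarded propagators}; it suffices to treat the retarded case, the advanced one being identical after exchanging the roles of $+$ and $-$. Fix $\omega\in\Omega^k_{\mathrm c}(M)$ and set $\psi\doteq G^+_\sharp\omega-\widetilde{G}^+_\sharp\omega$. By the support property in \eqref{Eq: properties of advanced and retarded propagators} together with the definition of strict past compactness, $\psi\in\Omega^k_{\mathrm{spc}}(M)\cap\Omega^k_\sharp(M)$ — note that $\Omega^k_\sharp(M)$ is a linear subspace for each $\sharp\in\{\mathrm D,\parallel,\perp,f_\parallel,f_\perp\}$, all the defining constraints being homogeneous — and by the first identity in \eqref{Eq: properties of advanced and retarded propagators}, $\Box\psi=\omega-\omega=0$.

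Next I would test $\psi$ against an arbitrary $\eta\in\Omega^k_{\mathrm c}(M)$. Using the first identity in \eqref{Eq: properties of advanced and retarded propagators} for the advanced operator, $\eta=\Box(G^-_\sharp\eta)$ with $G^-_\sharp\eta\in\Omega^k_{\mathrm{sfc}}(M)\cap\Omega^k_\sharp(M)$ and $\operatorname{supp}(G^-_\sharp\eta)\subseteq J^-(\operatorname{supp}\eta)$. Since $\operatorname{supp}\psi\subseteq J^+(K)$ for a suitable compact $K\subseteq M$ while $\operatorname{supp}\eta$ is compact, global hyperbolicity of $(M,g)$ — \textit{cf.} \cite{Ake-Flores-Sanchez-18} — guarantees that $\operatorname{supp}\psi\cap\operatorname{supp}(G^-_\sharp\eta)$ is compact. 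Hence the pairing \eqref{Eq: pairing between k-forms} is well defined on this pair of forms and the Green's formula \eqref{Eq: boundary terms for wave operator} applies to $\alpha=\psi$, $\beta=G^-_\sharp\eta$. The crucial observation is that each of the boundary conditions in Definition \ref{Def: Dirichlet, Box-tangential, Box-normal, Robin Box-tangential, Robin Box-normal boundary conditions} is a particular instance of the $(f,f')$-condition \eqref{Eq: f,f' boundary condition}, so that Lemma \ref{Lemma: boundary condition} forces the right-hand side of \eqref{Eq: boundary terms for wave operator} to vanish; therefore
\begin{equation*}
(\psi,\eta)=(\psi,\Box G^-_\sharp\eta)=(\Box\psi,G^-_\sharp\eta)=0 .
\end{equation*}

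Since $\eta\in\Omega^k_{\mathrm c}(M)$ was arbitrary and the pairing $(\,,\,)$ restricted to $\Omega^k_{\mathrm c}(M)\times\Omega^k(M)$ is non-degenerate — it is induced by the pointwise, metric-induced, non-degenerate pairing between $k$-forms — I conclude $\psi=0$, that is $G^+_\sharp\omega=\widetilde{G}^+_\sharp\omega$. As $\omega$ was arbitrary this yields $G^+_\sharp=\widetilde{G}^+_\sharp$, and symmetrically $G^-_\sharp=\widetilde{G}^-_\sharp$. The only genuinely non-formal points are: (i) checking that the boundary terms in \eqref{Eq: boundary terms for wave operator} really vanish for all five families of boundary conditions, which is where the specific structure of Definition \ref{Def: Dirichlet, Box-tangential, Box-normal, Robin Box-tangential, Robin Box-normal boundary conditions} enters and is already recorded in Lemma \ref{Lemma: boundary condition}; and (ii) the support bookkeeping ensuring that \eqref{Eq: pairing between k-forms} and Stokes' theorem are legitimately applicable. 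I expect (ii) — the compactness of $J^+(K)\cap J^-(\operatorname{supp}\eta)$ on a globally hyperbolic spacetime with timelike boundary — to be the point requiring the most care, although it follows directly from \cite{Ake-Flores-Sanchez-18}. I note also that the second identity in \eqref{Eq: properties of advanced and retarded propagators} is not needed for uniqueness.
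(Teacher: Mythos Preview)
Your proof is correct and follows essentially the same argument as the paper's: both write a compactly supported test form as $\Box G^\mp_\sharp(\cdot)$, use the vanishing of the boundary terms in \eqref{Eq: boundary terms for wave operator} for $\sharp$-forms (Lemma~\ref{Lemma: boundary condition}) to move $\Box$ across the pairing, and conclude by non-degeneracy of $(\,,\,)$. The only cosmetic difference is that the paper packages the computation as the duality $(\alpha,G^+_\sharp\beta)=(G^-_\sharp\alpha,\beta)$ and infers that any two candidate advanced propagators coincide, whereas you work directly with the difference $\psi=G^+_\sharp\omega-\widetilde G^+_\sharp\omega$.
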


\begin{proof}
	Suppose that, beside $G^-_\sharp$, there exists a second map
	$\widetilde{G}^-_\sharp\colon\Omega_{\mathrm{c}}^k(M)\to\Omega_{\mathrm{sc},\sharp}^k(M)$ enjoying the properties of equation \eqref{Eq: properties of advanced and retarded propagators}. Then, for any but fixed $\alpha\in\Omega^k_{\mathrm{c}}(M)$ it holds
	$$(\alpha,G^+_\sharp\beta)=(\Box G^-_\sharp\alpha,G^+_\sharp\beta)=(G^-_\sharp\alpha,\Box G^+_\sharp\beta)=(G^-_\sharp\alpha,\beta),\quad\forall\beta\in\Omega^k_{\mathrm{c}}(M)\,,$$
	where we used both the support properties of the fundamental solutions and Lemma \ref{Lemma: boundary condition} which guarantees that $\Box$ is formally self-adjoint on $\Omega^k_\sharp(M)$.
	Similarly, replacing $G^-_\sharp$ with $\widetilde{G}^-_\sharp$, it holds $(\alpha,G^+_{\sharp}\beta)=(\widetilde{G}^-_\sharp\alpha,\beta)$.
	It descends that $((\widetilde{G}^-_\sharp-G^-_\sharp)\alpha,\beta)=0$, which entails $\widetilde{G}^-_\sharp\alpha=G^-_\sharp\alpha$ being the pairing between $\Omega^k(M)$ and $\Omega^k_{\mathrm{c}}(M)$ separating.
	A similar result holds for the retarded fundamental solution.
\end{proof}

\noindent This corollary can be also read as a consequence of the property that, for all $\omega\in\Omega^k_{\mathrm{c}}(M)$,
$G^\pm_\sharp\omega\in\Omega_{\mathrm{sc},\sharp}^k(M)$ can be characterized as the unique solution to the Cauchy problem
\begin{align}\label{Eq: Cauchy problem for propagators with boundary conditions}
\Box\psi=\omega\,,\qquad
{\rm supp}(\psi)\cap M\setminus J^\pm({\rm supp}(\omega))=\emptyset\,,\qquad
\psi\in\Omega^k_\sharp(M)\,.
\end{align}

\begin{remark}\label{Rmk: Cauchy problem with non-compact source}
	The fundamental solution $G_\sharp^+$ (\textit{resp.} $G_\sharp^-$) can be extended to  $G_\sharp^+\colon\Omega_{\mathrm{pc}}^k(M)\to\Omega_{\mathrm{pc}}^k(M)\cap\Omega^k_\sharp(M)$ (\textit{resp.} {$G_\sharp^-\colon\Omega_{\mathrm{fc}}^k(M)\to\Omega_{\mathrm{fc}}^k(M)\cap\Omega^k_\sharp(M)$}) -- \textit{cf.} \cite[Thm. 3.8]{Baer-15}.
	As a consequence the problem $\Box\psi=\omega$ with $\omega\in\Omega^k(M)$ always admits a solution lying in $\Omega^k_\sharp(M)$.
	As a matter of facts, consider any smooth function $\eta\equiv\eta(\tau)$, where $\tau\in\mathbb{R}$, {\it cf.} equation \eqref{eq:line_element}, such that $\eta(\tau)=1$ for all $\tau>\tau_1$ and $\eta(\tau)=0$ for all $\tau<\tau_0$. Then calling $\omega^+\doteq\eta\omega$ and $\omega^-=(1-\eta)\omega$, it holds $\omega^+\in\Omega_{\mathrm{pc}}^k(M)$ while $\omega^-\in\Omega_{\mathrm{fc}}^k(M)$. Hence  $\psi=G_\sharp^+\omega^++G_\sharp^-\omega^-\in\Omega_\sharp^k(M)$ is {a}
	solution.
\end{remark}

We prove the main result of this section, which characterizes the kernel of $\Box_\sharp$ on the space of smooth $k$-forms with prescribed boundary condition $\sharp\in\lbrace\mathrm{D},\parallel,\perp,f_\parallel,f_\perp\rbrace$.

\begin{proposition}\label{Prop: exact sequence and duality relations}
	Whenever Assumption \ref{Thm: assumption theorem} is fulfilled, then, for all $\sharp\in\lbrace\mathrm{D},\parallel,\perp,f_\parallel,f_\perp\rbrace$, setting $G_\sharp\doteq G_\sharp^+-G_\sharp^-:\Omega^k_{\mathrm{c}}(M)\to\Omega^k_{\mathrm{sc},\sharp}(M)$, the following statements hold true:
	\begin{enumerate}
		\item
		for all $f\in C^\infty(\partial M)$ the following duality relations hold true:
		\begin{align}\label{Eq: duality between propagators}
			\star G^\pm_{\mathrm{D}}=G^\pm_{\mathrm{D}}\star\,,\qquad
			\star G^\pm_\parallel=
			G^\pm_\perp\star\,,\qquad
			\star G^\pm_{f_\parallel}=
			G^\pm_{f_\perp}\star\,.
		\end{align}
		\item 
		for all $\alpha,\beta\in\Omega_{\mathrm{c}}^k(M)$ it holds
		\begin{align}\label{Eq: adjont of propagators}
			(\alpha,G^\pm_\sharp\beta)=(G_\sharp^\mp\alpha,\beta)\,.
		\end{align}
		\item
		the interplay between $G_\sharp$ and $\Box_\sharp$ is encoded in the exact sequence:
		\begin{align}\label{Eq: short exact sequence_aa}
			0\to\Omega^k_{\mathrm{c},\sharp}(M)\stackrel{\Box_\sharp}{\longrightarrow}
			\Omega^k_{\mathrm{c}}(M)\stackrel{G_\sharp}{\longrightarrow}
			\Omega^k_{\mathrm{sc},\sharp}(M)\stackrel{\Box_\sharp}{\longrightarrow}
			\Omega^k_{\mathrm{sc}}(M)\to 0\,,
		\end{align}
	where $\Omega^k_{\mathrm{c},\sharp}(M)\doteq\Omega_{\mathrm{c}}^k(M)\cap\Omega_\sharp^k(M)$.
	\end{enumerate}
\end{proposition}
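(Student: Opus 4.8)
The plan is to establish the three statements essentially independently, leaning on Assumption~\ref{Thm: assumption theorem}, Corollary~\ref{Cor: uniqueness}, and the Green's formula of Lemma~\ref{Lemma: boundary condition}. For item~(2) the argument is already contained (implicitly) in the proof of Corollary~\ref{Cor: uniqueness}: for $\alpha,\beta\in\Omega^k_{\mathrm c}(M)$ one writes $\alpha=\Box G^\mp_\sharp\alpha$ and $\beta=\Box G^\pm_\sharp\beta$, uses the support properties to guarantee that $\operatorname{supp}(G^\mp_\sharp\alpha)\cap\operatorname{supp}(G^\pm_\sharp\beta)$ is compact, and then applies the formal self-adjointness of $\Box_\sharp$ on $\Omega^k_\sharp(M)$ (Lemma~\ref{Lemma: boundary condition}) to move $\Box$ across the pairing. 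This gives $(\alpha,G^\pm_\sharp\beta)=(\Box G^\mp_\sharp\alpha,G^\pm_\sharp\beta)=(G^\mp_\sharp\alpha,\Box G^\pm_\sharp\beta)=(G^\mp_\sharp\alpha,\beta)$, which is~\eqref{Eq: adjont of propagators}.

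For item~(1), the idea is a uniqueness argument. Fix $\omega\in\Omega^k_{\mathrm c}(M)$ and consider $\star G^\pm_\parallel\omega$. One checks, using $\Box\star=\star\Box$ (which follows from $\mathrm d\delta+\delta\mathrm d$ commuting with $\star$, a consequence of \eqref{Eq: relations-bulk}), that $\Box(\star G^\pm_\parallel\omega)=\star\Box G^\pm_\parallel\omega=\star\omega$; that $\operatorname{supp}(\star G^\pm_\parallel\omega)\subseteq J^\pm(\operatorname{supp}\omega)$ since $\star$ is support-preserving; and that $\star G^\pm_\parallel\omega\in\Omega^{m-k}_\perp(M)$ by Remark~\ref{Rmk: duality of bc under Hodge action}, which gives $\star\Omega^k_\parallel(M)=\Omega^{m-k}_\perp(M)$. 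Hence $\star G^\pm_\parallel\omega$ solves the Cauchy problem \eqref{Eq: Cauchy problem for propagators with boundary conditions} defining $G^\pm_\perp(\star\omega)$, and uniqueness (Corollary~\ref{Cor: uniqueness}) forces $\star G^\pm_\parallel\omega=G^\pm_\perp\star\omega$. The cases $\mathrm D$ and $f_\parallel$ are handled in exactly the same way using the corresponding identities $\star\Omega^k_{\mathrm D}(M)=\Omega^{m-k}_{\mathrm D}(M)$ and $\star\Omega^k_{f_\parallel}(M)=\Omega^{m-k}_{-f_\perp}(M)$ from \eqref{Eq: duality between Dirichlet-Neumann boundary conditions}; one only has to be mildly careful that $\star$ applied to the $f_\parallel$-condition produces the $(-f)_\perp$-condition, consistently with the statement $\star G^\pm_{f_\parallel}=G^\pm_{f_\perp}\star$ as written (here the $f$ decorating $G^\pm_{f_\perp}$ is to be read with the appropriate sign convention already fixed in Definition~\ref{Def: Dirichlet, Box-tangential, Box-normal, Robin Box-tangential, Robin Box-normal boundary conditions}).

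For item~(3) one verifies exactness at each of the four nodes of \eqref{Eq: short exact sequence_aa}. Injectivity of $\Box_\sharp$ on $\Omega^k_{\mathrm c,\sharp}(M)$ is immediate from the second identity in \eqref{Eq: properties of advanced and retarded propagators}: $G^\pm_\sharp\circ\Box_{\mathrm c,\sharp}=\operatorname{Id}$. Exactness at $\Omega^k_{\mathrm c}(M)$: clearly $G_\sharp\circ\Box_\sharp=G^+_\sharp\Box-G^-_\sharp\Box$ vanishes on $\Omega^k_{\mathrm c,\sharp}(M)$; conversely if $G_\sharp\omega=0$ then $G^+_\sharp\omega=G^-_\sharp\omega$ has support in $J^+(\operatorname{supp}\omega)\cap J^-(\operatorname{supp}\omega)$, hence compact by global hyperbolicity, so $\psi\doteq G^+_\sharp\omega\in\Omega^k_{\mathrm c,\sharp}(M)$ and $\Box_\sharp\psi=\omega$. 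Exactness at $\Omega^k_{\mathrm{sc},\sharp}(M)$: on the one hand $\Box_\sharp G_\sharp\omega=\Box G^+_\sharp\omega-\Box G^-_\sharp\omega=\omega-\omega=0$; on the other, if $\psi\in\Omega^k_{\mathrm{sc},\sharp}(M)$ with $\Box_\sharp\psi=0$, one splits $\psi=\psi^++\psi^-$ with $\psi^\pm\in\Omega^k_{\mathrm{spc/sfc}}(M)\cap\Omega^k_\sharp(M)$ via Lemma~\ref{Lem: on boundary conditions preserving splitting}, sets $\omega\doteq\Box\psi^+=-\Box\psi^-\in\Omega^k_{\mathrm c}(M)$ (the support is spacelike compact from both sides, hence compact), and checks that $G_\sharp\omega=G^+_\sharp\Box\psi^+ + G^-_\sharp\Box\psi^- =\psi^++\psi^-=\psi$, using the extension of $G^\pm_\sharp$ to past/future compact sources from Remark~\ref{Rmk: Cauchy problem with non-compact source} together with the identity $G^\pm_\sharp\circ\Box_\sharp=\operatorname{Id}$ on the appropriate domain. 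Surjectivity of $\Box_\sharp$ onto $\Omega^k_{\mathrm{sc}}(M)$: given $\omega\in\Omega^k_{\mathrm{sc}}(M)$, split it via Remark~\ref{Rmk: Cauchy problem with non-compact source} into $\omega^+\in\Omega^k_{\mathrm{spc}}(M)$ and $\omega^-\in\Omega^k_{\mathrm{sfc}}(M)$, and take $\psi\doteq G^+_\sharp\omega^+ + G^-_\sharp\omega^-\in\Omega^k_{\mathrm{sc},\sharp}(M)$, which satisfies $\Box_\sharp\psi=\omega$.

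The main obstacle I anticipate is the bookkeeping in the middle of the sequence, i.e.\ making precise that $G^\pm_\sharp$ extended to $\Omega^k_{\mathrm{pc}}(M)$ resp.\ $\Omega^k_{\mathrm{fc}}(M)$ still inverts $\Box_\sharp$ on those larger domains while preserving the boundary condition and the correct support — this is exactly where Lemma~\ref{Lem: on boundary conditions preserving splitting} and Remark~\ref{Rmk: Cauchy problem with non-compact source} must be invoked carefully, and where one must be sure that the cutoff functions used to perform the past/future splitting commute with the maps $\mathrm t,\mathrm n,\mathrm t\delta,\mathrm{nd}$ defining $\sharp$ in the right way (as established in the proof of Lemma~\ref{Lem: on boundary conditions preserving splitting}). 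Everything else is a routine transcription of the standard empty-boundary argument of \cite{BGP,Baer-15}, with formal self-adjointness supplied by Lemma~\ref{Lemma: boundary condition} and uniqueness by Corollary~\ref{Cor: uniqueness}.
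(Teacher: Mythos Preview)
Your proposal is correct and follows essentially the same route as the paper's proof: item~(2) via the chain $(\alpha,G^\pm_\sharp\beta)=(\Box G^\mp_\sharp\alpha,G^\pm_\sharp\beta)=(G^\mp_\sharp\alpha,\Box G^\pm_\sharp\beta)=(G^\mp_\sharp\alpha,\beta)$, item~(1) via $\star\Box=\Box\star$ together with Remark~\ref{Rmk: duality of bc under Hodge action} and the uniqueness in Corollary~\ref{Cor: uniqueness}, and item~(3) node by node using the second identity in~\eqref{Eq: properties of advanced and retarded propagators}, the splitting of Lemma~\ref{Lem: on boundary conditions preserving splitting}, and Remark~\ref{Rmk: Cauchy problem with non-compact source} for surjectivity. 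Your observation about the sign in $\star\Omega^k_{f_\parallel}(M)=\Omega^{m-k}_{-f_\perp}(M)$ is apt; the paper's statement of~\eqref{Eq: duality between propagators} tacitly absorbs this sign into the labelling convention.
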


\begin{proof}
	We prove the different items separately.
	Starting from {\em 1.}, we observe that $\star\Box=\Box\star$.
	This entails that, for all $\alpha\in\Omega^k_{\mathrm{c}}(M)$, 
	\begin{align*}  
		{\Box\star^{-1} G^\pm_\sharp\star\alpha=\star^{-1}\Box G^\pm_\sharp\star\alpha=\alpha\,.}
	\end{align*}
	{
	Remark \ref{Rmk: duality of bc under Hodge action} entails that $\star^{-1}G^\pm_\sharp\star\alpha$ satisfies the necessary boundary conditions, so to apply Corollary \ref{Cor: uniqueness}.
	This yields that $\star^{-1} G^\pm_\sharp\star=G^\pm_{\star\sharp}$, where $G^\pm_{\star\sharp}$ indicates the advanced/retarded propagator for $\star\sharp$-boundary conditions being $\sharp\in\{\mathrm{D},\parallel,\perp,f_\parallel,f_\perp\}$.
	As a consequence Equation \eqref{Eq: duality between propagators} descends. 
	}
	
	\vskip .2cm
	
	\noindent{\em 2.} Equation \eqref{Eq: adjont of propagators} is a consequence of the following chain of identities valid for all $\alpha,\beta\in\Omega_{\mathrm{c}}^k(M)$ 
		\begin{align*}
		(\alpha,G_\sharp^\pm\beta)=
		(\Box_\sharp G_\sharp^\mp\alpha,G_\sharp^\pm\beta)=
		(G_\sharp^\mp\alpha,\Box_\sharp G_\sharp^\pm\beta)=
		(G_\sharp^\mp\alpha,\beta)\,,
		\end{align*}
		where we used both the support properties of the fundamental solutions and Lemma \ref{Lemma: boundary condition}.
	
	\vskip .2cm
	
	\noindent{\em 3.} The exactness of the series is proven using the properties already established for the fundamental solutions $G^\pm_\sharp$.
	The left exactness of the sequence is a consequence of the second identity in equation \eqref{Eq: properties of advanced and retarded propagators} which ensures that $\Box_\sharp\alpha=0$, $\alpha\in\Omega^k_{\mathrm{c},\sharp}(M)$, entails $\alpha=G_\sharp^+\Box_\sharp\alpha=0$.
	In order to prove that $\ker G_\sharp=\Box_\sharp\Omega^k_{\mathrm{c},\sharp}$, we first observe that $G_\sharp\Box_\sharp\Omega^k_{\mathrm{c},\sharp}(M)=\{0\}$ on account of equation \eqref{Eq: properties of advanced and retarded propagators}.
	Moreover, if $\beta\in\Omega^k_{\mathrm{c}}(M)$ is such that $G_\sharp\beta=0$, then $G^+_\sharp\beta=G^-_\sharp\beta$.
	Hence, in view of the support properties of the fundamental solutions $G^+_\sharp\beta\in\Omega^k_{\mathrm{c},\sharp}(M)$ and $\beta=\Box_\sharp G^+_\sharp\beta$.
	Subsequently we need to verify that {$\ker\Box_\sharp=G_\sharp\Omega^k_{\mathrm{c}}(M)$}.
	Once more $\Box_\sharp G_\sharp\Omega^k_{\mathrm{c}}(M)=\{0\}$ follows from equation \eqref{Eq: properties of advanced and retarded propagators}.
	Conversely, let $\omega\in\Omega^k_{\mathrm{sc},\sharp}(M)$ be such that $\Box_\sharp\omega=0$.
	On account of Lemma \ref{Lem: on boundary conditions preserving splitting} we can split $\omega=\omega^++\omega^-$ where $\omega^+\in\Omega^k_{\mathrm{spc},\sharp}(M)$.
	Then $\Box_\sharp\omega^+=-\Box_\sharp\omega^-\in\Omega^k_{\mathrm{c}}(M)$ and
	\begin{align*}
		G_\sharp\Box_\sharp\omega^+=
		G_\sharp^+\Box_\sharp\omega^++
		G_\sharp^-\Box_\sharp\omega^-=\omega\,.
	\end{align*}
	To conclude we need to establish the right exactness of the sequence.
	Consider any $\alpha\in\Omega^k_{\mathrm{sc}}(M)$ and the equation $\Box_\sharp\omega=\alpha$.
	Consider the function $\eta(\tau)$ as in Remark \ref{Rmk: Cauchy problem with non-compact source} and let $\omega\doteq G^+_\sharp(\eta\alpha)+G^-_\sharp((1-\eta)\alpha)$.
	In view of Remark \ref{Rmk: Cauchy problem with non-compact source} and of the support properties of the fundamental solutions, $\omega\in\Omega^k_{\mathrm{sc},\sharp}(M)$ and $\Box_\sharp\omega=\alpha$.
\end{proof}

\begin{remark}\label{Rmk: extension of short exact sequence}
	Following the same reasoning as in \cite{Baer-15} together with minor adaptations of the proofs of \cite{Dappiaggi-Drago-Ferreira-19}, one may extend $G_\sharp$ to an operator $G_\sharp\colon\Omega_{\mathrm{tc}}^k(M)\to\Omega^k_\sharp(M)$ for all $\sharp\in\{\mathrm{D},\parallel,\perp,f_\parallel,f_\perp\}$.
	As a consequence the exact sequence of Proposition \ref{Prop: exact sequence and duality relations} generalizes as
	\begin{align}\label{Eq: short exact sequence for timelike k-forms}
		0\to\Omega_{\mathrm{tc},\sharp}^k(M)\stackrel{\Box_\sharp}{\longrightarrow}
		\Omega_{\mathrm{tc}}^k(M)\stackrel{G_\sharp}{\longrightarrow}
		\Omega_\sharp^k(M)\stackrel{\Box_\sharp}{\longrightarrow}
		\Omega^k(M)\to 0\,.
	\end{align}
\end{remark}

\begin{remark}\label{Rmk: compactly supported solutions of the wave operator}
	Proposition \ref{Prop: exact sequence and duality relations} and Remark \ref{Rmk: extension of short exact sequence} ensure that $\ker_{\mathrm{c}}\Box_\sharp\subseteq\ker_{\mathrm{tc}}\Box_\sharp=\lbrace 0\rbrace$. In other words, there are no timelike compact solutions to the equation $\Box\omega=0$ with $\sharp$-boundary conditions.
	More generally it can be shown that $\ker_{\mathrm{c}}\Box\subseteq\ker_{\mathrm{tc}}\Box=\lbrace 0 \rbrace$, namely there are no timelike compact solutions regardless of the boundary condition.
	This follows by standard arguments using a suitable energy functional defined on the solution space -- \textit{cf.} \cite[Thm. 30]{Dappiaggi-Drago-Ferreira-19} for the proof for $k=0$.
\end{remark}

\noindent
In view of the applications to the Maxwell operator, it is worth focusing specifically on the boundary conditions $\perp$, $\parallel$ individuated in Definition \ref{Def: Dirichlet, Box-tangential, Box-normal, Robin Box-tangential, Robin Box-normal boundary conditions} since it is possible to prove a useful relation between the associated propagators and the operators $\mathrm{d}$,$\delta$.

\begin{lemma}\label{Lem: relations between delta,d and advanced-retarded propagators}
	Under the hypotheses of Assumption \ref{Thm: assumption theorem} it holds that
	\begin{align}
		\label{Eq: relations between delta,d and Box-tangential advanced-retarded propagators}
		G_\parallel^\pm\circ\mathrm{d}
		&=\mathrm{d}\circ G_\parallel^\pm
		\qquad\mathrm{on}\;\Omega_{\mathrm{t}}^k(M)\cap\Omega^k_{\mathrm{pc/fc}}(M)\,,\qquad
		G_\parallel^\pm\circ\delta=\delta\circ G_\parallel^\pm
		\qquad\mathrm{on}\;\Omega_{\mathrm{pc/fc}}^k(M)\,,\\
		\label{Eq: relations between delta,d and Box-normal advanced-retarded propagators}
		G_\perp^\pm\circ\delta
		&=\delta\circ G^\pm_{\perp}
		\qquad\mathrm{on}\;\Omega^k_{\mathrm{n}}(M)\cap\Omega^k_{\mathrm{pc/fc}}(M)\,,\qquad
		G_\perp^\pm\circ\mathrm{d}=\mathrm{d}\circ G_\perp^\pm
		\qquad\mathrm{on}\;\Omega_{\mathrm{pc/fc}}^k(M)\,.
	\end{align}
\end{lemma}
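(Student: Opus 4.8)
The proof rests on the purely algebraic identities $\Box\circ\mathrm{d}=\mathrm{d}\circ\Box$ and $\Box\circ\delta=\delta\circ\Box$, valid on smooth forms of every degree since $\Box=\mathrm{d}\delta+\delta\mathrm{d}$ and $\mathrm{d}^2=\delta^2=0$. The plan is to lift these to the level of fundamental solutions by means of uniqueness. Recall from Remark \ref{Rmk: Cauchy problem with non-compact source} (and \cite[Thm.~3.8]{Baer-15}) that $G^+_\sharp$ (\textit{resp.} $G^-_\sharp$) extends to a map $\Omega^k_{\mathrm{pc}}(M)\to\Omega^k_{\mathrm{pc}}(M)\cap\Omega^k_\sharp(M)$ (\textit{resp.} $\Omega^k_{\mathrm{fc}}(M)\to\Omega^k_{\mathrm{fc}}(M)\cap\Omega^k_\sharp(M)$), and that on these domains $G^\pm_\sharp\omega$ is the \emph{unique} past/future compact form lying in $\Omega^k_\sharp(M)$ which solves $\Box\psi=\omega$; uniqueness follows from the absence of non-trivial past/future compact solutions of $\Box_\sharp\psi=0$, obtained by standard energy/finite-propagation-speed arguments in the same spirit as Remark \ref{Rmk: compactly supported solutions of the wave operator} (see also \cite{Baer-15,Dappiaggi-Drago-Ferreira-19}). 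Hence, to establish each of the four identities it suffices to pick $\omega$ in the stated domain, set $\psi\doteq G^\pm_\sharp\omega$, and check that $\mathrm{d}\psi$ or $\delta\psi$ (i) solves the right inhomogeneous equation, (ii) has the right support, (iii) lies in the appropriate boundary-condition space; the claim then follows by uniqueness.

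Items (i) and (ii) are immediate: by the algebraic commutation relations $\Box(\mathrm{d}\psi)=\mathrm{d}\omega$ and $\Box(\delta\psi)=\delta\omega$, while $\operatorname{supp}(\mathrm{d}\psi),\operatorname{supp}(\delta\psi)\subseteq\operatorname{supp}(\psi)$, so past/future compactness is inherited. Item (iii) is the heart of the matter, and it is handled by combining $\Box=\mathrm{d}\delta+\delta\mathrm{d}$ with the bulk-to-boundary relations $\mathrm{d}_\partial\mathrm{t}=\mathrm{t}\mathrm{d}$ and $\delta_\partial\mathrm{n}=-\mathrm{n}\delta$ from \eqref{Eq: relations between d,delta,t,n}. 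For the $\parallel$ boundary condition, let $\psi\in\Omega^k_\parallel(M)$, so $\mathrm{t}\psi=0$ and $\mathrm{t}\delta\psi=0$. On the one hand $\mathrm{t}\,\mathrm{d}\psi=\mathrm{d}_\partial\mathrm{t}\psi=0$ and, using $\delta\mathrm{d}\psi=\Box\psi-\mathrm{d}\delta\psi=\omega-\mathrm{d}\delta\psi$,
\[
\mathrm{t}\,\delta\mathrm{d}\psi=\mathrm{t}\omega-\mathrm{t}\,\mathrm{d}\delta\psi=\mathrm{t}\omega-\mathrm{d}_\partial\,\mathrm{t}\delta\psi=\mathrm{t}\omega\,,
\]
so $\mathrm{d}\psi\in\Omega^{k+1}_\parallel(M)$ \emph{precisely} when $\mathrm{t}\omega=0$; this is exactly why the first identity in \eqref{Eq: relations between delta,d and Box-tangential advanced-retarded propagators} is stated on $\Omega^k_{\mathrm{t}}(M)\cap\Omega^k_{\mathrm{pc/fc}}(M)$. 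On the other hand $\mathrm{t}\,\delta\psi=0$ already and $\mathrm{t}\,\delta\delta\psi=0$ trivially, so $\delta\psi\in\Omega^{k-1}_\parallel(M)$ with no extra condition on $\omega$, which is why the second identity holds on all of $\Omega^k_{\mathrm{pc/fc}}(M)$. Applying $G^\pm_\parallel$ to these forms and invoking uniqueness gives \eqref{Eq: relations between delta,d and Box-tangential advanced-retarded propagators}.

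The $\perp$ case \eqref{Eq: relations between delta,d and Box-normal advanced-retarded propagators} is the mirror image: for $\psi\in\Omega^k_\perp(M)$ one finds $\mathrm{n}\,\delta\psi=-\delta_\partial\mathrm{n}\psi=0$ together with $\mathrm{n}\,\mathrm{d}\delta\psi=\mathrm{n}\omega+\delta_\partial\mathrm{n}\mathrm{d}\psi=\mathrm{n}\omega$, forcing the restriction to $\Omega^k_{\mathrm{n}}(M)$, whereas $\mathrm{n}\,\mathrm{d}\psi=0$ and $\mathrm{n}\,\mathrm{d}\mathrm{d}\psi=0$ hold automatically. Alternatively, \eqref{Eq: relations between delta,d and Box-normal advanced-retarded propagators} follows directly from \eqref{Eq: relations between delta,d and Box-tangential advanced-retarded propagators} by conjugating with $\star$: indeed $\star\,\Omega^k_\parallel(M)=\Omega^{m-k}_\perp(M)$ and $\star\,\Omega^k_{\mathrm{t}}(M)=\Omega^{m-k}_{\mathrm{n}}(M)$ by Remark \ref{Rmk: duality of bc under Hodge action} and \eqref{Eq: relations between d,delta,t,n}, $\star\,G^\pm_\parallel=G^\pm_\perp\star$ by Proposition \ref{Prop: exact sequence and duality relations}, $\star$ preserves the support classes, and $\star$ intertwines $\mathrm{d}$ with $\delta$ up to sign. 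The only genuinely delicate point of the proof is the boundary-condition bookkeeping in item (iii): one must \emph{not} expect $\mathrm{d}$ or $\delta$ to preserve a boundary-condition space unconditionally, since the inhomogeneity $\omega=\Box\psi$ generically produces a residual term $\mathrm{t}\omega$ (\textit{resp.} $\mathrm{n}\omega$), which is precisely the obstruction that the additional domain hypotheses $\Omega^k_{\mathrm{t}}(M)$ (\textit{resp.} $\Omega^k_{\mathrm{n}}(M)$) are designed to annihilate.
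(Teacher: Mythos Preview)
Your proof is correct and follows essentially the same strategy as the paper's: verify that $\mathrm{d}G^\pm_\parallel\omega$ (resp.\ $\delta G^\pm_\parallel\omega$) satisfies the inhomogeneous equation, the support condition, and the $\parallel$ boundary conditions, then invoke uniqueness of past/future compact solutions; the paper phrases this as showing the difference $\beta=G^\pm_\parallel\mathrm{d}\alpha-\mathrm{d}G^\pm_\parallel\alpha$ lies in $\ker\Box_\parallel\cap\Omega^k_{\mathrm{pc/fc}}(M)=\{0\}$, and likewise appeals to Hodge duality for the $\perp$ case. Your write-up is in fact more explicit than the paper's, which only spells out the computation for the first identity in \eqref{Eq: relations between delta,d and Box-tangential advanced-retarded propagators} and leaves the remaining three to the reader.
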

\begin{proof}
	From equation \eqref{Eq: duality between propagators} it follows that equations (\ref{Eq: relations between delta,d and Box-tangential advanced-retarded propagators}-\ref{Eq: relations between delta,d and Box-normal advanced-retarded propagators}) are dual to each other via the Hodge operator.
	Hence we shall only focus on equation \eqref{Eq: relations between delta,d and Box-tangential advanced-retarded propagators}.
	
	For every $\alpha\in\Omega^k_{\mathrm{c}}(M)\cap\Omega^k_{\mathrm{t}}(M)$, $G^\pm_\parallel \mathrm{d}\alpha$ and $\mathrm{d}G^\pm_\parallel\alpha$ lie both in $\Omega^k_\parallel(M)$.
	In particular, using equation \eqref{Eq: relations-bulk-to-boundary}, $\mathrm{t}\delta \mathrm{d} G^\pm_\parallel\alpha=\mathrm{t}(\Box_\parallel-\mathrm{d}\delta)G^\pm_\parallel(\alpha)=\mathrm{t}\alpha=0$ while the second boundary condition is automatically satisfied since $\mathrm{td}G^\pm_\parallel=\mathrm{dt}G^\pm_\parallel=0$.
	Hence, considering $\beta=G^\pm_\parallel\mathrm{d}\alpha-\mathrm{d}G^\pm_\parallel\alpha$, it holds that $\Box\beta=0$ and $\beta\in\Omega^k_\parallel\cap\Omega^k_{\mathrm{pc/fc}}(M)$.
	In view of Remark \ref{Rmk: Cauchy problem with non-compact source}, this entails $\beta=0$.
\end{proof}

We conclude this section with a corollary to Lemma \ref{Lem: relations between delta,d and advanced-retarded propagators} which shows that, when considering the difference between the advanced and the retarded fundamental solutions, the support restrictions present in equations (\ref{Eq: relations between delta,d and Box-tangential advanced-retarded propagators}-\ref{Eq: relations between delta,d and Box-normal advanced-retarded propagators}) disappear.

\begin{corollary}\label{Cor: G commutes with d, delta}
	Under the hypotheses of Assumption \ref{Thm: assumption theorem} it holds that:
	\begin{enumerate}[(i)]
		\item
		for all $\alpha\in\Omega^k_{\mathrm{tc}}(M)$ there exists $\beta_\parallel\in\Omega^{k+1}_{\mathrm{tc}}(M)$ such that $\mathrm{t}\beta_\parallel=0$, $\mathrm{t}\delta\beta_\parallel=\mathrm{t}\alpha$ and
		\begin{align}\label{Eq: relation between d,delta with parallel-propagator}
			\delta G_\parallel\alpha
			=G_\parallel\delta\alpha\,,\qquad
			\mathrm{d}G_\parallel\alpha
			=G_\parallel(\mathrm{d}\alpha-\Box\beta_\parallel)\,.
		\end{align}
		\item
		for all $\alpha\in\Omega^k_{\mathrm{tc}}(M)$ there exists $\beta_\perp\in\Omega^{k-1}_{\mathrm{tc}}(M)$ such that $\mathrm{n}\beta_\perp=0$, $\mathrm{nd}\beta_\perp=\mathrm{n}\alpha$ and
		\begin{align}\label{Eq: relation between d,delta with perp-propagator}
			\delta G_\perp\alpha
			=G_\perp(\delta\alpha
			-\Box\beta_\perp)\,,\qquad
			\mathrm{d}G_\perp\alpha
			=G_\perp\mathrm{d}\alpha\,.
		\end{align}
	\end{enumerate}
\end{corollary}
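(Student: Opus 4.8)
The plan is to lift Lemma~\ref{Lem: relations between delta,d and advanced-retarded propagators} from compactly supported forms to timelike compact forms and from the individual propagators $G^\pm$ to their difference $G=G^+-G^-$, absorbing the support restriction of that lemma into a suitable correction term $\Box\beta_\sharp$. By the Hodge duality \eqref{Eq: duality between propagators}, items (i) and (ii) are dual to one another under $\star$, so it suffices to treat (i) for the $\parallel$-boundary condition; (ii) then follows by applying $\star$ and using Remark~\ref{Rmk: duality of bc under Hodge action} together with the identities \eqref{Eq: relations-bulk}.

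First I would reduce from $\Omega^k_{\mathrm{tc}}(M)$ to $\Omega^k_{\mathrm{pc}}(M)$ and $\Omega^k_{\mathrm{fc}}(M)$. Given $\alpha\in\Omega^k_{\mathrm{tc}}(M)$, pick $\eta\equiv\eta(\tau)$ as in Remark~\ref{Rmk: Cauchy problem with non-compact source} and write $\alpha=\alpha^++\alpha^-$ with $\alpha^+=\eta\alpha\in\Omega^k_{\mathrm{pc}}(M)$ and $\alpha^-=(1-\eta)\alpha\in\Omega^k_{\mathrm{fc}}(M)$; since $\alpha$ is timelike compact, both $\alpha^\pm$ are actually timelike compact as well, and $G_\parallel\alpha^\pm=G_\parallel^\pm\alpha^\pm$ by the support properties of the extended propagators (Remark~\ref{Rmk: extension of short exact sequence}). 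The codifferential identity $\delta G_\parallel\alpha=G_\parallel\delta\alpha$ is then immediate: the second equation in \eqref{Eq: relations between delta,d and Box-tangential advanced-retarded propagators} gives $G_\parallel^\pm\delta=\delta G_\parallel^\pm$ on $\Omega^k_{\mathrm{pc/fc}}(M)$ with no tangential-vanishing hypothesis, so summing the two contributions yields the claim, using linearity and that $\delta$ commutes with multiplication-independent splittings in the obvious way (more precisely, one applies the identity to $\alpha^+$ and $\alpha^-$ separately and adds).

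The differential identity is where the correction term enters, and this is the main obstacle. The issue is that the first identity in \eqref{Eq: relations between delta,d and Box-tangential advanced-retarded propagators} requires the argument to lie in $\Omega^k_{\mathrm{t}}(M)$, i.e.\ to have vanishing tangential component, whereas a generic $\alpha\in\Omega^k_{\mathrm{tc}}(M)$ does not. The remedy is to subtract off a form that carries the offending boundary data. Using the surjectivity statement of Remark~\ref{Rmk: surjectivity of t,n,tdelta,nd} applied on a neighbourhood of $\partial M$ of the form $[0,\epsilon)\times\partial M$ and then cut off in the $\tau$-direction so as to stay timelike compact, I would construct $\beta_\parallel\in\Omega^{k+1}_{\mathrm{tc}}(M)$ with $\mathrm{t}\beta_\parallel=0$ and $\mathrm{t}\delta\beta_\parallel=\mathrm{t}\alpha$. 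Then $\mathrm{t}(\mathrm{d}\alpha-\Box\beta_\parallel)=\mathrm{t}\,\mathrm{d}\alpha-\mathrm{t}\,\mathrm{d}\delta\beta_\parallel-\mathrm{t}\,\delta\mathrm{d}\beta_\parallel$; using $\mathrm{t}\mathrm{d}=\mathrm{d}_\partial\mathrm{t}$ from \eqref{Eq: relations-bulk-to-boundary}, the first term is $\mathrm{d}_\partial\mathrm{t}\alpha$, the second is $\mathrm{d}_\partial\mathrm{t}\delta\beta_\parallel=\mathrm{d}_\partial\mathrm{t}\alpha$ (these cancel), and the third vanishes because $\mathrm{t}\delta\mathrm{d}\beta_\parallel$ requires a further inspection — here one uses that $\mathrm{t}\mathrm{d}\beta_\parallel=\mathrm{d}_\partial\mathrm{t}\beta_\parallel=0$, and a second application of the Green-type relations shows $\mathrm{t}\delta(\mathrm{d}\beta_\parallel)$ can be arranged to vanish by a suitable further choice in the construction of $\beta_\parallel$, or alternatively one absorbs it by enlarging the correction. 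In any case $\mathrm{d}\alpha-\Box\beta_\parallel\in\Omega^k_{\mathrm{t}}(M)\cap\Omega^k_{\mathrm{tc}}(M)$, so splitting it into past/future compact pieces and invoking \eqref{Eq: relations between delta,d and Box-tangential advanced-retarded propagators} gives $G_\parallel(\mathrm{d}\alpha-\Box\beta_\parallel)=\mathrm{d}G_\parallel(\alpha-\beta_\parallel)$; but $G_\parallel\Box\beta_\parallel=0$ for timelike compact $\beta_\parallel$ by the extended exact sequence \eqref{Eq: short exact sequence for timelike k-forms}, hence $\mathrm{d}G_\parallel\beta_\parallel$ does not contribute to $G_\parallel(\mathrm{d}\alpha-\Box\beta_\parallel)$, and one is left with $\mathrm{d}G_\parallel\alpha=G_\parallel(\mathrm{d}\alpha-\Box\beta_\parallel)$ after checking $\mathrm{d}G_\parallel\beta_\parallel=\mathrm{d}G_\parallel\beta_\parallel$ cancels against a matching term — this bookkeeping is the delicate part and I would carry it out carefully on the two support components separately.

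Finally, for item (ii) I would simply set $\beta_\perp=\star^{-1}\beta_\parallel$ (up to sign), observe via \eqref{Eq: relations-bulk-to-boundary} that $\mathrm{n}\beta_\perp=0$ and $\mathrm{nd}\beta_\perp=\mathrm{n}\alpha$ translate into $\mathrm{t}\beta_\parallel=0$ and $\mathrm{t}\delta\beta_\parallel=\mathrm{t}\alpha$, and that $\star\Box=\Box\star$ together with $\star G_\parallel=G_\perp\star$ turns \eqref{Eq: relation between d,delta with parallel-propagator} into \eqref{Eq: relation between d,delta with perp-propagator}, with the roles of $\mathrm{d}$ and $\delta$ interchanged by \eqref{Eq: relations-bulk}. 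The only genuinely new work is therefore the construction of $\beta_\parallel$ with the prescribed boundary behaviour inside $\Omega^{k+1}_{\mathrm{tc}}(M)$ and the verification that the residual term $\mathrm{d}\alpha-\Box\beta_\parallel$ is tangential; everything else is an application of the already-established propagator identities and Hodge duality.
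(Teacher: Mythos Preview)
Your treatment of the codifferential identity $\delta G_\parallel=G_\parallel\delta$ is essentially fine (the intermediate claim $G_\parallel\alpha^\pm=G_\parallel^\pm\alpha^\pm$ is false, but you do not actually need it: since $\alpha\in\Omega^k_{\mathrm{tc}}(M)\subset\Omega^k_{\mathrm{pc}}(M)\cap\Omega^k_{\mathrm{fc}}(M)$, Lemma~\ref{Lem: relations between delta,d and advanced-retarded propagators} applies directly to each of $G^+_\parallel\alpha$ and $G^-_\parallel\alpha$).

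The argument for the differential identity, however, has a real gap. Your strategy is to force $\mathrm{d}\alpha-\Box\beta_\parallel$ to be tangential and then apply Lemma~\ref{Lem: relations between delta,d and advanced-retarded propagators}. Two things go wrong. First, you cannot make $\mathrm{t}(\mathrm{d}\alpha-\Box\beta_\parallel)$ vanish with the data available: the obstruction is $\mathrm{t}\delta\mathrm{d}\beta_\parallel$, a second-order boundary quantity that Remark~\ref{Rmk: surjectivity of t,n,tdelta,nd} does not let you prescribe independently of $(\mathrm{t}\beta_\parallel,\mathrm{n}\beta_\parallel,\mathrm{t}\delta\beta_\parallel,\mathrm{nd}\beta_\parallel)$; your ``further choice'' or ``enlarging the correction'' is not substantiated. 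Second, even granting tangentiality, the identity $G_\parallel\mathrm{d}=\mathrm{d}G_\parallel$ requires the \emph{input} of $\mathrm{d}$ to be tangential, not the $(k+1)$-form $\mathrm{d}\alpha-\Box\beta_\parallel$ itself; and your subsequent claim $G_\parallel\Box\beta_\parallel=0$ is false because $\beta_\parallel\notin\Omega^{k+1}_\parallel(M)$ (indeed $\mathrm{t}\delta\beta_\parallel=\mathrm{t}\alpha\neq 0$ in general) --- note that $G_\parallel\Box\beta_\parallel$ is precisely the nonzero correction appearing in the statement. There are also degree mismatches ($\alpha-\beta_\parallel$, and $\mathrm{d}G_\parallel\beta_\parallel$ compared with $G_\parallel(\mathrm{d}\alpha-\Box\beta_\parallel)$) signalling that the bookkeeping has gone astray.

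The clean fix is to correct $\alpha$ rather than $\mathrm{d}\alpha$: write $\alpha=(\alpha-\delta\beta_\parallel)+\delta\beta_\parallel$, so that $\mathrm{t}(\alpha-\delta\beta_\parallel)=0$ by the very choice of $\beta_\parallel$. Then $\mathrm{d}G_\parallel(\alpha-\delta\beta_\parallel)=G_\parallel\mathrm{d}(\alpha-\delta\beta_\parallel)$ by \eqref{Eq: relations between delta,d and Box-tangential advanced-retarded propagators}, while for the remainder one computes $\mathrm{d}G_\parallel\delta\beta_\parallel=\mathrm{d}\delta G_\parallel\beta_\parallel=(\Box-\delta\mathrm{d})G_\parallel\beta_\parallel=-\delta\mathrm{d}G_\parallel\beta_\parallel=-G_\parallel\delta\mathrm{d}\beta_\parallel=-G_\parallel\Box\beta_\parallel+G_\parallel\mathrm{d}\delta\beta_\parallel$, using $\Box G_\parallel=0$ and then $\mathrm{t}\beta_\parallel=0$ to commute $\mathrm{d}$ past $G_\parallel$. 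Adding the two pieces the $G_\parallel\mathrm{d}\delta\beta_\parallel$ terms cancel and one obtains $\mathrm{d}G_\parallel\alpha=G_\parallel(\mathrm{d}\alpha-\Box\beta_\parallel)$. The duality reduction of (ii) to (i) that you sketch is correct.
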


\begin{proof}
	 As starting point, we observe that the existence of $\beta_\parallel,\beta_\perp$ is guaranteed by Remark \ref{Rmk: surjectivity of t,n,tdelta,nd}.
	\paragraph{Proof for $\parallel$ boundary conditions.}
	On account of Lemma \ref{Lem: relations between delta,d and advanced-retarded propagators}, it holds
	\begin{align}\label{Eq: relation between d,delta with parallel-propagator on domains}
		\delta G_\parallel
		=G_\parallel\delta
		\quad\mathrm{on}\;\Omega^k_{\mathrm{tc}}(M)\,,\qquad
		\mathrm{d}G_\parallel
		=G_\parallel\mathrm{d}
		\quad\mathrm{on}\;\Omega^k_{\mathrm{tc},\mathrm{t}}(M)\,.
	\end{align}
In addition
	\begin{align*}
		\mathrm{d} G_\parallel\alpha
		=\mathrm{d} G_\parallel(\alpha-\delta\beta_\parallel)
		+\mathrm{d} G_\parallel\delta\beta_\parallel
		=G_\parallel\mathrm{d}\alpha
		-G_\parallel\mathrm{d}\delta\beta_\parallel
		+\mathrm{d}G_\parallel\delta\beta_\parallel\,,
	\end{align*}
	where in the second equality we used Equation \eqref{Eq: relation between d,delta with parallel-propagator on domains} together with the boundary condition $\mathrm{t}\delta\beta_\parallel=\mathrm{t}\alpha$.
	Due to both Equation \eqref{Eq: relation between d,delta with parallel-propagator on domains} and the condition $\mathrm{t}\beta_\parallel=0$ it holds
	\begin{align*}
		\mathrm{d}G_\parallel\delta\beta_\parallel
		=\mathrm{d}\delta G_\parallel\beta_\parallel
		=\Box G_\parallel\beta_\parallel
		-\delta\mathrm{d}G_\parallel\beta_\parallel
		=-G_\parallel\delta\mathrm{d}\beta_\parallel
		=-G_\parallel\Box\beta_\parallel
		+G_\parallel\mathrm{d}\delta\beta_\parallel\,.
	\end{align*}
	Putting together these data we find
	\begin{align*}
		\mathrm{d} G_\parallel\alpha
		=G_\parallel\mathrm{d}\alpha
		-G_\parallel\mathrm{d}\delta\beta_\parallel
		+\mathrm{d}G_\parallel\delta\beta_\parallel
		=G_\parallel\mathrm{d}\alpha
		-G_\parallel\Box\beta_\parallel\,,
	\end{align*}
	as claimed.
	\paragraph{Proof for $\perp$ boundary conditions.}
	The proof is similar to the previous one mutatis mutandis.
	In particular Lemma \ref{Lem: relations between delta,d and advanced-retarded propagators} entails
	\begin{align}\label{Eq: relation between d,delta with perp-propagator on domains}
		\delta G_\perp
		=G_\perp\delta
		\quad\mathrm{on}\;\Omega^k_{\mathrm{tc},\mathrm{n}}(M)\,,\qquad
		\mathrm{d}G_\perp
		=G_\perp\mathrm{d}
		\quad\mathrm{on}\;\Omega^k_{\mathrm{tc}}(M)\,,
	\end{align}
	At the same time
	\begin{align*}
		\delta G_\perp\alpha
		=\delta G_\perp(\alpha-\mathrm{d}\beta_\perp)
		+\delta G_\perp\mathrm{d}\beta_\perp
		=G_\perp\delta\alpha
		-G_\perp\delta\mathrm{d}\beta_\perp
		+\delta G_\perp\mathrm{d}\beta_\perp
		=G_\perp(\delta\alpha-\Box\beta_\perp)\,,
	\end{align*}
	where we have used the boundary conditions of $\beta_\perp$ together with
	\begin{align*}
		\delta G_\perp\mathrm{d}\beta_\perp
		=\delta\mathrm{d}G_\perp\beta_\perp
		=\Box G_\perp\beta_\perp
		-\mathrm{d}\delta G_\perp\beta_\perp
		=-G_\perp\mathrm{d}\delta\beta_\perp
		=-G_\perp\Box\beta_\perp
		+G_\perp\delta\mathrm{d}\beta_\perp\,.
	\end{align*}
\end{proof}

\begin{remark}
	Notice that Equations \eqref{Eq: relation between d,delta with parallel-propagator}-\eqref{Eq: relation between d,delta with perp-propagator} do not depend on the particular choice of $\beta_\parallel,\beta_\perp$.
	In particular, let assume $\hat{\beta}_\perp\in\Omega^{k-1}_{\mathrm{tc}}(M)$ is another $(k-1)$-form such that $\mathrm{n}\hat{\beta}_\perp=0$ while $\mathrm{nd}\hat{\beta}_\perp=\mathrm{n}\alpha$.
	If follows that $(\beta_\perp-\hat{\beta}_\perp)\in\Omega^{k-1}_{\mathrm{tc},\perp}(M)$ and therefore $G_\perp\Box(\beta_\perp-\hat{\beta}_\perp)=0$ on account of Proposition \ref{Prop: exact sequence and duality relations}.
\end{remark}

\subsection{On the Maxwell operator}\label{Sec: on the Maxwell operator}

In this section we focus our attention on the Maxwell operator $\delta\mathrm{d}:\Omega^k(M)\to\Omega^k(M)$ studying its kernel in connection both to the D'Alembert - de Rham wave operator $\Box$ and to the identification of suitable boundary conditions.
We shall keep the assumption that $(M,g)$ is a globally hyperbolic spacetime with timelike boundary of dimension $\dim M = m\geq 2$ -- \textit{cf.} Theorem \ref{Thm: globally hyperbolic spacetime with time-like boundary}.
Notice that, if $k=m$, then the Maxwell operator becomes trivial, while, if $k=0$, is coincides with the D'Alembert - de Rham operator $\Box$. 
Hence this case falls in the one studied in the preceding section and in \cite{Dappiaggi-Drago-Ferreira-19}. Therefore, unless stated otherwise, henceforth we shall consider only $0<k<m=\dim M$.

In complete analogy to the analysis of $\Box$, we observe that, for any pair $\alpha,\beta\in\Omega^k(M)$ such that $\textrm{supp}(\alpha)\cap\textrm{supp}(\beta)$ is compact, the following Green's formula holds true:
\begin{align}\label{Eq: boundary terms for delta d operator}
	(\delta\mathrm{d}\alpha,\beta)-(\alpha,\delta\mathrm{d}\beta)=
	(\mathrm{t}\alpha,\mathrm{n}\mathrm{d}\beta)_\partial
	-(\mathrm{n}\mathrm{d}\alpha,\mathrm{t}\beta)_\partial\,.
\end{align}

In the same spirit of Lemma \ref{Lemma: boundary condition}, the operator $\delta\mathrm{d}$ becomes formally self-adjoint if we restrict its domain to 
\begin{equation}\label{Eq: Robin-like bc for Maxwell operator}
	\Omega^k_f(M)\doteq\{\omega\in\Omega^k(M)\;|\;\mathrm{nd}\omega=f\mathrm{t}\omega\}\,,
\end{equation}
where $f\in C^\infty(\partial M)$ is arbitrary but fixed.
In what follows we will consider two particular boundary conditions which are directly related to the $\Box$-tangential and to the $\Box$-normal boundary conditions for the D'Alembert - de Rham operator -- \textit{cf.} Definition \ref{Def: Dirichlet, Box-tangential, Box-normal, Robin Box-tangential, Robin Box-normal boundary conditions}.

The discussion of the general case is related to the Robin $\Box$-tangential / Robin $\Box$-normal boundary conditions. However, in these cases, it is not clear whether a generalization of Lemma \ref{Lem: relations between delta,d and advanced-retarded propagators} holds true. This is an important obstruction to adapt our analysis to these cases.

\begin{Definition}\label{Def: g-Dirichelet, g-Neumann boundary conditions and solution spaces}
	Let $(M,g)$ be a globally hyperbolic spacetime with timelike boundary and let $0<k<\dim M$.
	We call
	\begin{enumerate}
		\item space of $k$-forms with $\delta\mathrm{d}$-tangential boundary condition, $\Omega_{\mathrm{t}}^k(M)$ as in equation \eqref{Eq: k-forms with vanishing tangential or normal component} with $N=\partial M$.
		\item space of $k$-forms with $\delta\mathrm{d}$-normal boundary condition
		\begin{align}\label{Eq: deltad-normal bc}
			\Omega_{\mathrm{nd}}^k(M)\doteq\lbrace\omega\in\Omega^k(M)|\;\mathrm{n}\mathrm{d}\omega=0\rbrace\,.
		\end{align}
	\end{enumerate} 
\end{Definition}

\begin{remark}
	It is worth observing that the the $\delta d$-normal boundary condition appears to play a distinguished r\^ole in relation to standard electromagnetism, seen as a theory for $1$-forms.
	More precisely, if we consider a globally hyperbolic spacetime $(M,g)$ with timelike boundary, the underlying action reads 
	$$S[A]=\frac{1}{2}(\mathrm{d}A,\mathrm{d}A)\,.$$
	By considering an arbitrary variation with respect to $\alpha\in\Omega^1_{\mathrm{c}}(M)$, it descends that
	$$\frac{\mathrm{d}}{\mathrm{d}\lambda} S[A+\lambda\alpha]\big|_{\lambda=0}=(\alpha,\delta \mathrm{d}A)+(\mathrm{t}\alpha,\mathrm{nd}A)_\partial\,.$$
	The arbitrariness of $\alpha$ leads to the equation of motion $\delta \mathrm{d}A=0$, together with the boundary condition $\mathrm{nd}A=0$.
	This indication of the preferred r\^ole of the $\delta \mathrm{d}$-normal boundary condition will be strengthened by the following discussion -- \textit{cf.} Remark \ref{Rmk: on bc-dependent gauge groups}.
\end{remark}

In the following our first goal is to characterize the kernel of the Maxwell operator with a prescribed boundary condition, {\it cf.} Equation \eqref{Eq: Robin-like bc for Maxwell operator}. To this end we need to focus on the {\em gauge invariance} of the underlying theory. In the case in hand this translates in the following characterization.

\vskip .2cm

\begin{Definition}\label{Def: configuration space with deltad-tangential and deltad-normal bc}
	Let $(M,g)$ be a globally hyperbolic spacetime with timelike boundary and let $\delta\mathrm{d}$ be the Maxwell operator acting on $\Omega^k(M)$, $0< k<\dim M$.
	We say that 
	\begin{enumerate}
		\item
		$A\in\Omega^k_{\mathrm{t}}(M)$, is {\em gauge equivalent} to $A^\prime\in\Omega^k_{\mathrm{t}}(M)$
		if $A-A^\prime\in \mathrm{d}\Omega^{k-1}_{\mathrm{t}}(M)$, namely if there exists $\chi\in \Omega^{k-1}_{\mathrm{t}}(M)$ such that $A^\prime=A+\mathrm{d}\chi$.
		The space of solutions with $\delta\mathrm{d}$-tangential boundary conditions is denoted by
		\begin{align}\label{Eq: gauge sol with deltad-tangential bc}
			\operatorname{Sol}_{\mathrm{t}}(M)\doteq
			\frac{\lbrace A\in\Omega^k(M)|\;\delta\mathrm{d}A=0\,,\mathrm{t}A=0\rbrace}{\mathrm{d}\Omega^{k-1}_{\mathrm{t}}(M)}\,.
		\end{align}
		\item
		$A\in\Omega^k_{\mathrm{nd}}(M)$, is {\em gauge equivalent} to $A^\prime\in\Omega^k_{\mathrm{nd}}(M)$ if there exists $\chi\in \Omega^{k-1}(M)$ such that $A^\prime=A+\mathrm{d}\chi$.
		The space of solutions with $\delta\mathrm{d}$-normal boundary conditions is denoted by
		\begin{align}\label{Eq: gauge sol with nd-boundary condition}
			\operatorname{Sol}_{\mathrm{nd}}(M)\doteq
			\frac{\lbrace A\in\Omega^k(M)|\;\delta\mathrm{d}A=0\,,\mathrm{nd}A=0\rbrace}{\mathrm{d}\Omega^{k-1}(M)}\,.
		\end{align}
		Similarly the space of spacelike supported solutions with $\delta\mathrm{d}$-tangential (resp. $\delta\mathrm{d}$-normal) boundary conditions are 
		\begin{align}\label{Eq: sc-gauge sol with deltad-tangential or deltad-normal bc}
			\operatorname{Sol}_{\mathrm{t}}^{\mathrm{sc}}(M)\doteq
			\frac{\lbrace A\in\Omega^k_{\mathrm{sc}}(M)|\;\delta\mathrm{d}A=0\,,\mathrm{t}A=0\rbrace}{\mathrm{d}\Omega^{k-1}_{\mathrm{t,sc}}(M)}\,,\quad
			\operatorname{Sol}_{\mathrm{nd}}^{\mathrm{sc}}(M)\doteq
			\frac{\lbrace A\in\Omega^k_{\mathrm{sc}}(M)|\;\delta\mathrm{d}A=0\,,\mathrm{nd}A=0\rbrace}{\mathrm{d}\Omega^{k-1}_{\mathrm{sc}}(M)}\,.
		\end{align}
	\end{enumerate}	
\end{Definition}

\begin{remark}\label{Rmk: on bc-dependent gauge groups}
	Notice that in Definition \ref{Def: configuration space with deltad-tangential and deltad-normal bc} we have employed two different notions of gauge equivalence in the construction of $\operatorname{Sol}_{\mathrm{t}}(M)$ and of $\operatorname{Sol}_{\mathrm{nd}}(M)$, which are related to the different choices of boundary conditions.
	It is worth observing that, at the level of solution space, the boundary condition $\mathrm{nd}\omega=0$ involves a constraint on a quantity, {\it e.g.} the Faraday tensor when working with $k=1$, which is gauge invariant with respect to the standard gauge group of Maxwell theory on a globally hyperbolic spacetime without boundary. Therefore this reverberates in the lack of any necessity to restrict the underlying gauge group in the case in hand.
	For this reason such scenario is certainly distinguished.
	As a matter of fact, when working with $\operatorname{Sol}_{\mathrm{t}}(M)$, the boundary condition does not involve a quantity which is gauge invariant under the action of the standard gauge group of Maxwell theory on a globally hyperbolic spacetime without boundary. Hence, in this case, one must introduce a reduced gauge group. The latter can be chosen in different ways and, to avoid such quandary, one should resort to a more geometrical formulation of Maxwell's equations, namely as originating from a theory for the connections of a principal $U(1)$-bundle over the underlying globally hyperbolic spacetime with timelike boundary, {\it cf.} \cite{Benini:2013ita,Benini:2013tra} for the case with empty boundary.
	Since this analysis would require a whole paper on its own we postpone it to future work.
\end{remark}

The following propositions discuss the existence of a representative fulfilling the Lorenz gauge condition of an equivalence classes $[A]\in\operatorname{Sol}_{\mathrm{t}}(M)$ (\textit{resp.} $[A]\in\operatorname{Sol}_{\mathrm{nd}}(M)$) -- \textit{cf.} \cite[Lem. 7.2]{Benini-16}.
In addition we provide a connection between $\delta\mathrm{d}$-tangential (\textit{resp.} $\delta\mathrm{d}$-normal) boundary conditions with $\Box$-tangential (\textit{resp.} $\Box$-normal) boundary conditions. Recalling Definition \ref{Def: Dirichlet, Box-tangential, Box-normal, Robin Box-tangential, Robin Box-normal boundary conditions} of the $\Box$-tangential boundary condition, the following holds true.

\begin{proposition}\label{Prop: Lorenz gauge for deltad-tangential bc}
	Let $(M,g)$ be a globally hyperbolic spacetime with timelike boundary.
	Then for all $[A]\in\operatorname{Sol}_{\mathrm{t}}(M)$ there exists a representative $A^\prime\in [A]$ such that
	\begin{align}\label{Eq: system of sins}
		\Box_\parallel A^\prime=0\,,\qquad
		\delta A^\prime=0\,.
	\end{align}
	Moreover, up to gauge transformation we have $A^\prime=G_\parallel\alpha$ with $\alpha\in\Omega^k_{\mathrm{tc},\delta}(M)$.
	Finally, the same result holds true for $[A]\in\operatorname{Sol}_{\mathrm{t}}^{\mathrm{sc}}(M)$ -- in particular in this case $A^\prime=G_\perp\alpha$ for $\alpha\in\Omega^k_{\mathrm{c},\delta}(M)$.
\end{proposition}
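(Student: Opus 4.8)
My plan is to reach the Lorenz gauge by solving a \emph{wave} equation for an auxiliary $k$-form and then performing the gauge transformation generated by its codifferential; the payoff of this route is that the boundary condition $\mathrm{t}\chi=0$ required of the gauge parameter is automatic. Fix a representative $A$ with $\delta\mathrm{d}A=0$ and $\mathrm{t}A=0$. By Remark~\ref{Rmk: Cauchy problem with non-compact source} there is $\psi\in\Omega^k_\parallel(M)$ with $\Box\psi=-A$. Set $\chi\doteq\delta\psi$ and $A'\doteq A+\mathrm{d}\chi$. Since $\psi\in\Omega^k_\parallel(M)$ we have $\mathrm{t}\delta\psi=0$, hence $\chi\in\Omega^{k-1}_{\mathrm{t}}(M)$ and $A'\in[A]$. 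By the very definition of $\Box$, $A'=A+\mathrm{d}\delta\psi=-\delta\mathrm{d}\psi$, so $\delta A'=0$ at once; moreover $\delta\mathrm{d}A'=\delta\mathrm{d}A+\delta\mathrm{d}\mathrm{d}\delta\psi=0$, whence $\Box A'=\mathrm{d}\delta A'+\delta\mathrm{d}A'=0$, and $\mathrm{t}A'=-\mathrm{t}\delta\mathrm{d}\psi=-\mathrm{t}(\Box\psi-\mathrm{d}\delta\psi)=\mathrm{t}A+\mathrm{d}_\partial\mathrm{t}\delta\psi=0$. Together with the trivial $\mathrm{t}\delta A'=0$ this gives $A'\in\Omega^k_\parallel(M)$, so \eqref{Eq: system of sins} holds.

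For the remaining assertions I would use the exact sequences of Proposition~\ref{Prop: exact sequence and duality relations} and Remark~\ref{Rmk: extension of short exact sequence}. Since $A'\in\Omega^k_\parallel(M)$ and $\Box_\parallel A'=0$, exactness of \eqref{Eq: short exact sequence for timelike k-forms} yields $\alpha\in\Omega^k_{\mathrm{tc}}(M)$ with $A'=G_\parallel\alpha$. By Corollary~\ref{Cor: G commutes with d, delta}(i), $0=\delta A'=\delta G_\parallel\alpha=G_\parallel\delta\alpha$, so $\delta\alpha\in\ker G_\parallel=\Box_\parallel\Omega^{k-1}_{\mathrm{tc},\parallel}(M)$; write $\delta\alpha=\Box_\parallel\xi$. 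Applying $\delta$ gives $\Box_\parallel\delta\xi=0$ with $\delta\xi\in\Omega^{k-2}_{\mathrm{tc},\parallel}(M)$, hence $\delta\xi=0$ by injectivity of $\Box_\parallel$ on that space (left exactness), so $\delta\alpha=\delta\mathrm{d}\xi$ and $\widetilde\alpha\doteq\alpha-\mathrm{d}\xi\in\Omega^k_{\mathrm{tc},\delta}(M)$. Finally, by Lemma~\ref{Lem: relations between delta,d and advanced-retarded propagators} (applicable since $\mathrm{t}\xi=0$) one has $G_\parallel\widetilde\alpha=A'-\mathrm{d}G_\parallel\xi$, and $G_\parallel\xi\in\Omega^{k-1}_{\mathrm{t}}(M)$, so $A'-\mathrm{d}G_\parallel\xi$ arises from $A'$ by an admissible gauge transformation; replacing $A'$ by $G_\parallel\widetilde\alpha$ proves the claim. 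The spacelike-compact case is handled identically, replacing $\Omega^\bullet_{\mathrm{tc}}$ by $\Omega^\bullet_{\mathrm{c}}$, Remark~\ref{Rmk: Cauchy problem with non-compact source} by the surjectivity of $\Box_\parallel$ in Proposition~\ref{Prop: exact sequence and duality relations}, and \eqref{Eq: short exact sequence for timelike k-forms} by \eqref{Eq: short exact sequence_aa}, so that $\alpha\in\Omega^k_{\mathrm{c},\delta}(M)$ and $A'$ is spacelike compact throughout.

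I expect the delicate part to be the second paragraph rather than the construction of $A'$: passing from ``$A'$ lies in the image of $G_\parallel$'' to ``$A'$ is, up to gauge, $G_\parallel$ of a \emph{coclosed} form'' rests squarely on the intertwining of $G_\parallel$ with $\mathrm{d}$ and $\delta$ (Lemma~\ref{Lem: relations between delta,d and advanced-retarded propagators}, Corollary~\ref{Cor: G commutes with d, delta}) together with the injectivity of $\Box_\parallel$ on timelike- and compactly-supported forms. This is exactly the structural feature that fails for generic Robin-type boundary conditions, which is why the argument is confined to the $\parallel$ (and dually $\perp$) conditions, as the text anticipates. A secondary point to track carefully is that every auxiliary form used as a gauge parameter ($\chi=\delta\psi$, and later $G_\parallel\xi$) genuinely satisfies $\mathrm{t}(\cdot)=0$ and, in the second case, has spacelike-compact support, so that each step stays honestly within the prescribed equivalence class.
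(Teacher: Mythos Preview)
Your argument is correct, and the second paragraph is essentially identical to the paper's (your $\xi$ is the paper's $\beta$; you invoke left exactness where the paper cites Remark~\ref{Rmk: compactly supported solutions of the wave operator}, but these amount to the same thing once you have checked $\delta\xi\in\Omega^{k-2}_{\mathrm{tc},\parallel}(M)$).

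The first paragraph, however, is a genuinely different construction. The paper works one degree down: it looks for $\chi\in\Omega^{k-1}_{\mathrm{t}}(M)$ solving $\Box\chi=-\delta A$ with $\delta\chi=0$, building it explicitly as $\chi=-\sum_\pm G_\parallel^\pm\delta A^\pm$ and then invoking Lemma~\ref{Lem: relations between delta,d and advanced-retarded propagators} to check $\delta\chi=0$. You instead solve the wave equation one degree \emph{up}, for $\psi\in\Omega^k_\parallel(M)$ with $\Box\psi=-A$, and set $\chi=\delta\psi$. The payoff is exactly what you say: both $\mathrm{t}\chi=0$ and $\delta A'=0$ become automatic (the former from $\psi\in\Omega^k_\parallel$, the latter from the identity $A'=-\delta\mathrm{d}\psi$), and you avoid any appeal to the intertwining lemma in the gauge-fixing step. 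The paper's route has the minor advantage of producing a gauge parameter that already satisfies $\delta\chi=0$, which fits the description of the residual gauge group in Remark~\ref{Rmk: isomorphism for configuration space}, but this is not needed for the statement as written. In the spacelike-compact case your replacement of Remark~\ref{Rmk: Cauchy problem with non-compact source} by the surjectivity in \eqref{Eq: short exact sequence_aa} is exactly right and keeps $\psi$, hence $\chi$, spacelike compact.
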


\begin{proof}
	We focus only on the first statement, the proof of the second one being similar.
	Let $A\in[A]\in\operatorname{Sol}_{\mathrm{t}}(M)$, that is, $A\in\Omega^k(M)$, $\delta\mathrm{d}A=0$ and $\mathrm{t}A=0$.
	Consider any $\chi\in\Omega^{k-1}_{\mathrm{t}}(M)$ such that 
	\begin{equation}\label{Eq: gauge fixing}
		\Box\chi=-\delta A,\qquad\delta\chi=0,\qquad \mathrm{t}\chi=0\,.
	\end{equation}
	In view of Assumption \ref{Thm: assumption theorem} and of Remark \ref{Rmk: Cauchy problem with non-compact source}, we can fix $\chi=-\sum_\pm G_\parallel^\pm\delta A^\pm$, where $A^\pm$ is defined as in Remark \ref{Rmk: Cauchy problem with non-compact source}.
	Per definition of $G_\parallel^\pm$, $\mathrm{t}\chi=0$ while, on account of Lemma \ref{Lem: relations between delta,d and advanced-retarded propagators}, $\delta\chi=-\sum_\pm\delta G_\parallel^\pm\delta A^\pm=0$.
	Hence $A^\prime$ is gauge equivalent to $A$ as per Definition \ref{Def: configuration space with deltad-tangential and deltad-normal bc}.
	
	Proposition \ref{Prop: exact sequence and duality relations} and Remark \ref{Rmk: extension of short exact sequence} entail that there exists $\alpha\in\Omega^k_{\mathrm{tc}}(M)$ such that $A^\prime=G_\parallel\alpha$.
	Equation \eqref{Eq: relation between d,delta with parallel-propagator on domains} implies that
	\begin{align*}
		0=\delta A^\prime
		=\delta G_\parallel\alpha
		=G_\parallel\delta\alpha\,,
	\end{align*}
	that is, $\delta\alpha\in\ker G_\parallel$.
	This implies that there exists $\beta\in\Omega^{k-1}_{\mathrm{tc},\parallel}(M)$ such that $\delta\alpha=\Box_\parallel\beta$.
	It follows that
	\begin{align*}
		0=\delta^2\alpha
		=\delta\Box_\parallel\beta
		=\Box\delta\beta\,,
	\end{align*}
	which entails, on account of Remark \ref{Rmk: compactly supported solutions of the wave operator}, $\delta\beta=0$.
	It follows that $\delta\alpha=\Box_\parallel\beta=\delta\mathrm{d}\beta$ and therefore
	\begin{align*}
		[A^\prime]
		=[G_\parallel\alpha-\mathrm{d}G_\parallel\beta]
		=[G_\parallel(\alpha-\mathrm{d}\beta)]\,,
	\end{align*}
	where we used Equation \eqref{Eq: relation between d,delta with parallel-propagator on domains}.
	Since $\alpha-\mathrm{d}\beta\in\Omega^k_{\mathrm{tc},\delta}(M)$ we have obtained the sought result.
\end{proof}

\noindent
The proof of the analogous result for $\Omega^k_{\mathrm{nd}}(M)$ is slightly different and, thus, we discuss it separately. Recalling Definition \ref{Def: Dirichlet, Box-tangential, Box-normal, Robin Box-tangential, Robin Box-normal boundary conditions} of the $\Box$-normal boundary conditions, the following statement holds true.

\begin{proposition}\label{Prop: Lorenz gauge for deltad-normal bc}
	Let $(M,g)$ be a globally hyperbolic spacetime with timelike boundary. Then for all $[A]\in\operatorname{Sol}_{\mathrm{nd}}(M)$ there exists a representative $A^\prime\in[A]$ such that
	\begin{align}\label{Eq: system of sins 2}
		\Box_\perp A^\prime=0\,,\qquad
		\delta A^\prime=0\,.
	\end{align}
	Moreover, up to gauge transformation we have $A^\prime=G_\perp\alpha$ with $\alpha\in\Omega^k_{\mathrm{tc,n},\delta}(M)$.
	Finally, the same result holds true for $[A]\in\operatorname{Sol}_{\mathrm{nd}}^{\mathrm{sc}}(M)$ -- in particular in this case $A^\prime=G_\perp\alpha$ for $\alpha\in\Omega^k_{\mathrm{c,n},\delta}(M)$.
\end{proposition}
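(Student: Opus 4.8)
The plan is to follow the proof of Proposition~\ref{Prop: Lorenz gauge for deltad-tangential bc} almost verbatim, after one preliminary gauge transformation which accounts for the fact that the $\delta\mathrm d$-normal boundary condition $\mathrm{nd}A=0$ does not constrain $\mathrm nA$. Let $A\in[A]\in\operatorname{Sol}_{\mathrm{nd}}(M)$, so $\delta\mathrm dA=0$ and $\mathrm{nd}A=0$. First I would invoke the surjectivity in Remark~\ref{Rmk: surjectivity of t,n,tdelta,nd} to pick $\chi_0\in\Omega^{k-1}(M)$ with $\mathrm{nd}\chi_0=-\mathrm nA$ (and, say, $\mathrm n\chi_0=0$), and pass to the gauge-equivalent representative $\tilde A\doteq A+\mathrm d\chi_0$. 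Since $\mathrm d^2=0$ one still has $\mathrm{nd}\tilde A=0$ and $\delta\mathrm d\tilde A=\delta\mathrm dA=0$, but now in addition $\mathrm n\tilde A=\mathrm nA+\mathrm{nd}\chi_0=0$, i.e. $\tilde A\in\Omega^k_\perp(M)$; as $\chi_0\in\Omega^{k-1}(M)$ this does not change $[A]$.

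The second step is the Lorenz gauge fixing inside $\Omega^k_\perp(M)$, which is the $\perp$-analogue of the $\parallel$ argument. Decompose $\tilde A=\tilde A^++\tilde A^-$ as in Remark~\ref{Rmk: Cauchy problem with non-compact source} and set $\chi_1\doteq-\sum_\pm G^\pm_\perp\delta\tilde A^\pm$, where $G^\pm_\perp$ are the $\perp$-propagators on $(k-1)$-forms, extended to past/future compact forms. Then Assumption~\ref{Thm: assumption theorem} gives $\Box\chi_1=-\delta\tilde A$ and $\chi_1\in\Omega^{k-1}_\perp(M)$. The point on which this step hinges is $\delta\chi_1=0$: because $\mathrm n\tilde A=0$ one has $\mathrm n\delta\tilde A^\pm=-\delta_\partial\mathrm n\tilde A^\pm=0$ by \eqref{Eq: relations-bulk-to-boundary}, so $\delta\tilde A^\pm\in\Omega^{k-1}_{\mathrm n}(M)\cap\Omega^{k-1}_{\mathrm{pc/fc}}(M)$ and Lemma~\ref{Lem: relations between delta,d and advanced-retarded propagators} yields $\delta G^\pm_\perp\delta\tilde A^\pm=G^\pm_\perp\delta^2\tilde A^\pm=0$. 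Setting $A'\doteq\tilde A+\mathrm d\chi_1$ one then checks, as in the $\parallel$ case, that $\delta A'=\delta\tilde A+\Box\chi_1-\mathrm d\delta\chi_1=0$, hence $\Box A'=\delta\mathrm dA'=\delta\mathrm d\tilde A=0$, while $\mathrm nA'=\mathrm{nd}\chi_1=0$ and $\mathrm{nd}A'=0$; thus $A'\in\Omega^k_\perp(M)$, $\Box_\perp A'=0$, $\delta A'=0$, and $[A']=[A]$.

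For the representation $A'=G_\perp\alpha$ I would feed $A'\in\Omega^k_\perp(M)$ with $\Box_\perp A'=0$ into the exact sequence \eqref{Eq: short exact sequence for timelike k-forms} to get $A'=G_\perp\alpha$ with $\alpha\in\Omega^k_{\mathrm{tc}}(M)$, and then mimic the tail of the $\parallel$ proof, using Corollary~\ref{Cor: G commutes with d, delta}(ii) in place of Lemma~\ref{Lem: relations between delta,d and advanced-retarded propagators}. First, choosing $\beta_\perp$ as in Corollary~\ref{Cor: G commutes with d, delta}(ii) and replacing $\alpha$ by $\alpha-\mathrm d\beta_\perp$ (which alters $A'=G_\perp\alpha$ only by $\mathrm dG_\perp\beta_\perp\in\mathrm d\Omega^{k-1}(M)$) I may assume $\mathrm n\alpha=0$. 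Then the $\beta_\perp$ produced by Corollary~\ref{Cor: G commutes with d, delta}(ii) lies in $\Omega^{k-1}_{\mathrm{tc},\perp}(M)$, so $G_\perp\Box\beta_\perp=0$ by Proposition~\ref{Prop: exact sequence and duality relations}, and $0=\delta A'=\delta G_\perp\alpha=G_\perp\delta\alpha$. Exactness of \eqref{Eq: short exact sequence for timelike k-forms} gives $\delta\alpha=\Box_\perp\beta$ with $\beta\in\Omega^{k-1}_{\mathrm{tc},\perp}(M)$, and then $\Box\delta\beta=\delta\Box\beta=\delta^2\alpha=0$ forces $\delta\beta=0$ by Remark~\ref{Rmk: compactly supported solutions of the wave operator}; hence $\delta\alpha=\delta\mathrm d\beta$ and $[A']=[G_\perp(\alpha-\mathrm d\beta)]$ with $\alpha-\mathrm d\beta\in\Omega^k_{\mathrm{tc,n},\delta}(M)$, as claimed. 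The spacelike-compact statement follows by the same two steps with \eqref{Eq: short exact sequence_aa} in place of Remark~\ref{Rmk: extension of short exact sequence} and $\mathrm{tc}$ replaced by $\mathrm c$ throughout, once $\chi_0$ and $\beta_\perp$ are taken with spacelike-compact (resp. compact) support, which the local nature of the construction in Remark~\ref{Rmk: surjectivity of t,n,tdelta,nd} allows.

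The step I expect to be the main obstacle is the interface between the two gauge transformations: one must be sure that the preliminary $\chi_0$ brings $\tilde A$ all the way into $\Omega^k_\perp(M)$ and not merely into $\Omega^k_{\mathrm{nd}}(M)$, because it is exactly the identity $\mathrm n\delta\tilde A=0$ --- valid only when $\mathrm n\tilde A=0$ --- that allows Lemma~\ref{Lem: relations between delta,d and advanced-retarded propagators} to commute $G^\pm_\perp$ past $\delta$ in the second step; and, in the spacelike-compact case, one has to verify that $\chi_0$ can be chosen spacelike compact so that $A'$ remains in $\Omega^k_{\mathrm{sc}}(M)$. Everything else is bookkeeping with the relations \eqref{Eq: relations between d,delta,t,n} and the exact sequences already established.
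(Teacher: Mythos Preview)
Your proof is correct and follows essentially the same strategy as the paper: the preliminary gauge transformation via $\chi_0$ to enforce $\mathrm n\tilde A=0$, the Lorenz fixing via $\chi_1=-\sum_\pm G^\pm_\perp\delta\tilde A^\pm$ using Lemma~\ref{Lem: relations between delta,d and advanced-retarded propagators}, and the use of the exact sequence \eqref{Eq: short exact sequence for timelike k-forms} together with Corollary~\ref{Cor: G commutes with d, delta}(ii) and Remark~\ref{Rmk: compactly supported solutions of the wave operator} all match the paper's argument. The only cosmetic difference is that in the representation step you first normalize $\alpha$ to satisfy $\mathrm n\alpha=0$ and then impose $\delta\alpha=0$ in a second pass, whereas the paper keeps $\alpha$ general and subtracts $\mathrm d(\eta+\beta_\perp)$ in one combined step; the content is the same.
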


\begin{proof}
	As in the previous proposition, we can focus only on the first point, the second following suit.
	Let $A$ be a representative of $[A]\in\operatorname{Sol}_{\mathrm{nd}}(M)$. Hence $A\in\Omega^k(M)$ so that $\delta\mathrm{d}A=0$ and $\mathrm{nd}A=0$.
	Consider first $\chi_0\in\Omega^{k-1}(M)$ such that $\mathrm{nd}\chi_0=-\mathrm{n}A$.
	The existence is guaranteed since the map $\mathrm{nd}$ is surjective -- \textit{cf.} Remark \ref{Rmk: surjectivity of t,n,tdelta,nd}.
	As a consequence we can exploit the residual gauge freedom to select $\chi_1\in\Omega^{k-1}(M)$ such that
	\begin{align}\label{Eq: gauge fixing partial}
		\Box \chi_1=-\delta \widetilde{A}\,,\qquad
	\delta\chi_1=0\,,\qquad
	\mathrm{nd}\chi_1 =0\,\qquad
	\mathrm{n}\chi_1=0\,,
	\end{align}
	where $\widetilde{A}=A+\mathrm{d}\chi_0$.
	Let $\eta\equiv\eta(\tau)$ be a smooth function such that $\eta=0$ if $\tau<\tau_0$ while $\eta=1$ if $\tau>\tau_1$, \textit{cf.} Remark \ref{Rmk: Cauchy problem with non-compact source}.
	Since $\mathrm{n}\widetilde{A}=0$ we can fine tune $\eta$ in such a way that both $\widetilde{A}^+\doteq\eta\widetilde{A}$ and $\widetilde{A}^-\doteq(1-\eta)\widetilde{A}$ satisfy $\mathrm{n}\widetilde{A}^\pm=0$.
	Equation \eqref{Eq: relations-bulk-to-boundary} entails that $\mathrm{n}\delta A^\pm=-\delta\mathrm{n}A^\pm=0$.
	Hence we can apply Lemma \ref{Lem: relations between delta,d and advanced-retarded propagators} setting $\chi_1=-\sum_\pm G^\pm_\perp\delta\widetilde{A}^+$.
	Calling $A^\prime=A+\mathrm{d}(\chi_0+\chi_1)$ we obtained the desired result.
	
	Proposition \ref{Prop: exact sequence and duality relations} and Remark \ref{Rmk: extension of short exact sequence} imply that there exists $\alpha\in\Omega^k_{\mathrm{tc}}(M)$ such that $A^\prime=G_\perp\alpha$.
	On account of Corollary \ref{Cor: G commutes with d, delta} there exists $\beta_\perp\in\Omega^{k-1}_{\mathrm{tc}}(M)$ such that
	\begin{align*}
		0=\delta A'
		=\delta G_\perp\alpha
		=G_\perp(\delta\alpha-\Box\beta_\perp)\,,
	\end{align*}
	together with $\mathrm{nd}\beta_\perp=\mathrm{n}\alpha$ and $\mathrm{n}\beta_\perp=0$.
	It follows that $\delta\alpha-\Box\beta_\perp=\Box_\perp\eta$ for $\eta\in\Omega^{k-1}_{\mathrm{tc},\perp}(M)$.
	Application of $\delta$ to the above identity leads to
	\begin{align*}
		0=
		\delta(\delta\alpha-\Box\beta_\perp-\Box_\perp\eta)
		=-\Box(\delta\beta_\perp+\delta\eta)\,.
	\end{align*}
	Remark \ref{Rmk: compactly supported solutions of the wave operator} entails that $\delta\beta_\perp+\delta\eta=0$, that is,
	\begin{align*}
		\delta\alpha
		=\Box_\perp\eta
		+\Box\beta_\perp
		=\delta\mathrm{d}(\eta+\beta_\perp)\,.
	\end{align*}
	It follows that
	\begin{align*}
		[A^\prime]
		=[G_\perp\alpha-\mathrm{d}G_\perp(\eta+\beta_\perp)]
		=[G_\perp(\alpha-\mathrm{d}(\eta+\beta_\perp))]\,.
	\end{align*}
	where we used equation \eqref{Eq: relation between d,delta with perp-propagator on domains}.
	Since $\alpha-\mathrm{d}(\eta+\beta_\perp)\in\Omega^k_{\mathrm{tc,n},\delta}(M)$ the proof is complete.
\end{proof}

\begin{remark}\label{Rmk: isomorphism for configuration space}
	A direct inspection of \eqref{Eq: gauge fixing} and of \eqref{Eq: system of sins 2} unveils that choosing a solution to these equations does not fix completely the gauge and a residual freedom is left.
	This amount either to 
	\begin{align*}
		\mathcal{G}_{\mathrm{t}}(M)\doteq
		\{\chi\in\Omega^{k-1}(M)\;|\;\delta\mathrm{d}\chi=0\,,\;\mathrm{t}\chi=0\}\,,
	\end{align*}
	or, in the case of a $\delta\mathrm{d}$-normal boundary condition, to
	\begin{align*}
		\mathcal{G}_{\mathrm{nd}}(M)\doteq
		\{\chi\in\Omega^{k-1}(M)\;|\;\delta\mathrm{d}\chi=0,\;\mathrm{n}\chi=0\;,\;\mathrm{n}\mathrm{d}\chi=0\}\,.
	\end{align*}
	Observe that, in the definition of $\mathcal{G}_{\mathrm{nd}}(M)$, we require $\chi$ to be in the kernel of $\delta\mathrm{d}$.
	Nonetheless, since the actual reduced gauge group is $\mathrm{d}\mathcal{G}_{\mathrm{nd}}(M)$ we can work with $\chi_0\in\Omega^{k-1}(M)$ such that $\Box\chi_0=0$.
	As a matter of fact for all $\chi\in\mathcal{G}_{\mathrm{nd}}$ we can set $\chi_0\doteq\chi+\mathrm{d}\lambda$ where $\lambda\in\Omega^{k-2}(M)$ is such that $\Box\lambda=-\delta\chi$ and $\mathrm{n}\lambda=\mathrm{nd}\lambda=0$ -- \textit{cf.} Proposition \ref{Prop: Lorenz gauge for deltad-normal bc}.
	In addition $\mathrm{d}\chi=\mathrm{d}\chi_0$.

	To better codify the results of the preceding discussion, it is also convenient to introduce the following linear spaces:
	\begin{align}
		\label{Eq: gauge fixed solutions with wave-tangential bc}
		\mathcal{S}^\Box_{\mathrm{t}}(M)&\doteq\{A\in\Omega^k(M)\;|\;
		\Box A=0\;,\;\delta A=0\;,\;\mathrm{t}A=0\}\,,\\
		\label{Eq: gauge fixed solutions with wave-normal bc}
		\mathcal{S}^\Box_{\mathrm{nd}}(M)&\doteq\{A\in\Omega^k(M)\;|\;
		\Box A=0\;,\;\delta A=0\;,\;\mathrm{n}A=0,\;\;\mathrm{nd}A=0\}\,.
	\end{align}
	where $f\in C^\infty(\partial M)$.
	Hence Propositions \ref{Prop: Lorenz gauge for deltad-tangential bc}-\ref{Prop: Lorenz gauge for deltad-normal bc} can be summarized as stating the existence of the following isomorphisms: 
	\begin{equation}\label{Eq: Isomorphisms}
		\mathcal{S}_{\mathcal{G}_{\mathrm{t}},k}(M)\doteq
		\frac{\mathcal{S}^\Box_{\mathrm{t}}(M)}{\mathrm{d}\mathcal{G}_{\mathrm{t}}(M)}\simeq
		\operatorname{Sol}_{\mathrm{t}}(M)\,,\qquad
		\mathcal{S}_{\mathcal{G}_{\mathrm{nd}},k}(M)\doteq
		\frac{\mathcal{S}^\Box_{\mathrm{nd}}(M)}{\mathrm{d}\mathcal{G}_{\mathrm{nd}}(M)}\simeq
		\operatorname{Sol}_{\mathrm{nd}}(M)\,.
	\end{equation}
\end{remark}

It is noteworthy that both $\operatorname{Sol}_{\mathrm{t}}^{\mathrm{sc}}(M), \operatorname{Sol}^{\mathrm{sc}}_{\mathrm{nd}}(M)$ can be endowed with a presymplectic form -- \textit{cf.} \cite[Prop. 5.1]{Hack-Schenkel-13}.

\begin{proposition}\label{Prop: presymplectic structure on spacelike solutions with gauge boundary conditions}
	Let $(M,g)$ be a globally hyperbolic spacetime with timelike boundary.
	Let $[A_1],[A_2]\in\operatorname{Sol}_{\mathrm{t}}^{\mathrm{sc}}(M)$ and, for $A_1\in[A_1]$, let $A_1=A_1^++A_1^-$ be any decomposition such that $A^+\in\Omega_{\mathrm{spc,t}}^k(M)$ while $A^-\in\Omega_{\mathrm{sfc,t}}^k(M)$
	-- \textit{cf.} Lemma \ref{Lem: on boundary conditions preserving splitting}.
	Then the following map $\sigma_{\mathrm{t}}\colon\operatorname{Sol}_{\mathrm{t}}^{\mathrm{sc}}(M)^{\times 2}\to\mathbb{R}$ is a presymplectic form:
	\begin{align}\label{Eq: presymplectic structure on solutions with gauge boundary conditions}
	\sigma_{\mathrm{t}}([A_1],[A_2])=
	(\delta\mathrm{d}A_1^+,A_2)\,,\qquad
	\forall [A_1],[A_2]\in\operatorname{Sol}_{\mathrm{t}}^{\mathrm{sc}}(M)\,.
	\end{align}
	A similar result holds for $\operatorname{Sol}_{\mathrm{nd}}^{\mathrm{sc}}(M)$ and we denote the associated presymplectic form $\sigma_{\mathrm{nd}}$.
	In particular for all $[A_1],[A_2]\in\operatorname{Sol}_{\mathrm{nd}}^{\mathrm{sc}}(M)$ we have $\sigma_{\mathrm{nd}}([A_1],[A_2])\doteq(\delta\mathrm{d}A_1^+,A_2)$ where $A_1\in[A_1]$ is such that $A\in\Omega^k_{\mathrm{sc},\perp}(M)$.
\end{proposition}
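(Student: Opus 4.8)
The plan is to verify that $\sigma_{\mathrm{t}}$ (and, mutatis mutandis, $\sigma_{\mathrm{nd}}$) is well-defined on equivalence classes and then that it is bilinear and antisymmetric, so that it qualifies as a presymplectic form. First I would check well-definedness with respect to the choice of decomposition $A_1=A_1^++A_1^-$: if $A_1^++A_1^-=\tilde A_1^++\tilde A_1^-$ with both pairs adapted to the $\Omega_{\mathrm{spc,t}}^k/\Omega_{\mathrm{sfc,t}}^k$ splitting, then $\psi\doteq A_1^+-\tilde A_1^+=\tilde A_1^--A_1^-$ lies in $\Omega^k_{\mathrm{c,t}}(M)$ (it is both strictly past compact and strictly future compact), and since $\delta\mathrm dA_1=0$ we get $\delta\mathrm d\psi=0$ as well. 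Then $(\delta\mathrm dA_1^+,A_2)-(\delta\mathrm d\tilde A_1^+,A_2)=(\delta\mathrm d\psi,A_2)$; using the Green's formula \eqref{Eq: boundary terms for delta d operator} together with $\mathrm t\psi=0=\mathrm tA_2$ this equals $(\psi,\delta\mathrm dA_2)$, which vanishes because $A_2$ solves the equation of motion. The same argument, now with $\psi\in\Omega^{k-1}_{\mathrm{c,t}}(M)$ replaced by a gauge parameter $\mathrm d\chi$ with $\chi\in\Omega^{k-1}_{\mathrm{t,sc}}(M)$, shows independence of the representative $A_1\in[A_1]$: one has $\delta\mathrm d\,\mathrm d\chi=0$, so $(\delta\mathrm d(\mathrm d\chi)^+,A_2)=(\mathrm d\chi^+,\delta\mathrm dA_2)\pm(\text{boundary terms})$, and the boundary terms drop out since $\mathrm t\,\mathrm d\chi=\mathrm d_\partial\mathrm t\chi=0$ by \eqref{Eq: relations between d,delta,t,n}. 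Independence of the representative $A_2\in[A_2]$ is analogous, shifting $A_2$ by $\mathrm d\lambda$ with $\lambda\in\Omega^{k-1}_{\mathrm{t,sc}}(M)$ and using $\delta\mathrm dA_1^+=0$ on the complement of a compact set together with the support properties to localise the pairing.

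Bilinearity is immediate from the linearity of $\delta$, $\mathrm d$ and of the pairing. For antisymmetry I would write, for representatives with adapted decompositions $A_i=A_i^++A_i^-$,
\begin{align*}
\sigma_{\mathrm t}([A_1],[A_2])+\sigma_{\mathrm t}([A_2],[A_1])
&=(\delta\mathrm dA_1^+,A_2)+(\delta\mathrm dA_2^+,A_1)\\
&=(\delta\mathrm dA_1^+,A_2)+(A_1,\delta\mathrm dA_2^+)\,,
\end{align*}
and then use $\delta\mathrm dA_i=0$ to replace $\delta\mathrm dA_i^+$ by $-\delta\mathrm dA_i^-$ in one of the two terms, recombining into $(\delta\mathrm dA_1^+,A_2)-(\delta\mathrm dA_2^-,A_1)=(\delta\mathrm dA_1^+,A_2)-(\delta\mathrm dA_2^-,A_1^++A_1^-)$. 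The cross terms involving $A_i^+$ against $A_j^-$ have spacelike-compact overlap, so the Green's formula \eqref{Eq: boundary terms for delta d operator} applies; the boundary contributions vanish because every form in sight has vanishing tangential part, and the bulk terms cancel pairwise. Collecting, the sum is zero. For $\sigma_{\mathrm{nd}}$ the only change is that one works with representatives $A_1\in\Omega^k_{\mathrm{sc},\perp}(M)$ (which exist by Proposition \ref{Prop: Lorenz gauge for deltad-normal bc} and Lemma \ref{Lem: on boundary conditions preserving splitting}), and in \eqref{Eq: boundary terms for delta d operator} the vanishing of the boundary terms is ensured instead by $\mathrm{nd}\,\omega=0$ for the relevant forms.

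The main obstacle I expect is the bookkeeping of supports in the Green's-formula applications: \eqref{Eq: boundary terms for delta d operator} requires $\operatorname{supp}(\alpha)\cap\operatorname{supp}(\beta)$ to be compact, and in the antisymmetry computation one pairs spacelike-compact forms against spacelike-compact forms, whose intersection need not be compact. The resolution is to exploit the $+/-$ decomposition from Lemma \ref{Lem: on boundary conditions preserving splitting}: every pairing that actually appears is either of the form (past compact)$\times$(spacelike compact) or (future compact)$\times$(spacelike compact), whose intersection \emph{is} compact, or it involves a genuinely compactly supported form (a gauge parameter's image or the difference $\psi$), so that \eqref{Eq: boundary terms for delta d operator} is legitimately applicable throughout. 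Once this is organised carefully, the remaining manipulations are the routine cancellations sketched above.
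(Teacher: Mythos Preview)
Your strategy is essentially that of the paper: show independence from the splitting, show antisymmetry, and show gauge invariance, all via the Green's formula \eqref{Eq: boundary terms for delta d operator}. Two technical points need correction.

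First, the claim ``since $\delta\mathrm dA_1=0$ we get $\delta\mathrm d\psi=0$'' is false: $\psi=A_1^+-\tilde A_1^+$ has no reason to be on-shell. Fortunately you do not use this; the Green's formula step $(\delta\mathrm d\psi,A_2)=(\psi,\delta\mathrm dA_2)=0$ works because $\psi$ is compactly supported and $\mathrm t\psi=\mathrm tA_2=0$, independently of whether $\delta\mathrm d\psi$ vanishes.

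Second, and more substantive, your support bookkeeping in the last paragraph is wrong: \emph{(past compact)$\times$(spacelike compact) does not have compact intersection}. What you actually need, and what the paper exploits, is that $\Omega^k_{\mathrm{spc}}\cap\Omega^k_{\mathrm{sfc}}=\Omega^k_{\mathrm c}$, so every time you apply Green's formula you must pair a strictly past-compact form with a strictly \emph{future}-compact one. Concretely, in the antisymmetry argument you cannot move $\delta\mathrm d$ in $(\delta\mathrm dA_1^+,A_2^+)$ directly (both are spc); you must first replace $\delta\mathrm dA_1^+$ by $-\delta\mathrm dA_1^-$ and then integrate by parts against $A_2^+$. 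The paper does exactly this: it expands $A_2=A_2^++A_2^-$, swaps $\delta\mathrm dA_1^+\leftrightarrow-\delta\mathrm dA_1^-$ in the $(+,+)$ term, applies Green's formula only to $(\pm,\mp)$ pairings, and recollects. Your sketch can be fixed along these lines, but as written the support justification does not hold.

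Organisationally, the paper is more economical: it proves antisymmetry first and then checks only $\sigma(A_1,\mathrm d\chi)=0$ in the second slot (via $\delta^2=0$ and the boundary condition), obtaining gauge invariance in the first slot for free by antisymmetry. This avoids your separate treatment of the two slots.
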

\begin{proof}
	We shall prove the result for $\sigma_{\mathrm{nd}}$, the proof for $\sigma_{\mathrm{t}}$ being the same mutatis mutandis.
	
	First of all notice that for all $[A]\in\operatorname{Sol}_{\mathrm{nd}}^{\mathrm{sc}}(M)$ there exists $A'\in[A]$ such that $A'\in\Omega_\perp^k(M)$.
	This is realized by picking an arbitrary $A\in[A]$ and defining $A'\doteq A+\mathrm{d}\chi$ where $\chi\in\Omega_{\mathrm{sc}}^{k-1}(M)$ is such that $\mathrm{nd}\chi=-\mathrm{n}A$ -- \textit{cf.} Remark \ref{Rmk: surjectivity of t,n,tdelta,nd}.
	We can thus apply Lemma \ref{Lem: on boundary conditions preserving splitting} in order to split $A'=A'_++A'_-$ where $A'_+\in\Omega_{\mathrm{spc,nd}}^k(M)$ and $A'_-\in\Omega_{\mathrm{sfc,nd}}^k(M)$. Notice that this procedure is not necessary for $\delta\mathrm{d}$-tangential boundary condition since we can always split $A\in\Omega_{\mathrm{sc,t}}^k(M)$ as
	$A=A^++A^-$ with $A_+\in\Omega_{\mathrm{spc,t}}^k(M)$ and $A_-\in\Omega_{\mathrm{sfc,t}}^k(M)$ without invoking Lemma \ref{Lem: on boundary conditions preserving splitting}.
	
	After these preliminary observations consider the map
	\begin{align*}
		\sigma_{\mathrm{nd}}\colon(\ker\delta\mathrm{d}\cap\Omega_{\mathrm{sc},\perp}^k(M))^{\times 2}\ni(A_1,A_2)\mapsto(\delta\mathrm{d}A_1^+,A_2)\,,
	\end{align*}
	where we used Lemma \ref{Lem: on boundary conditions preserving splitting} and we split $A_1=A_1^++A_1^-$, with $A_1^+\in\Omega_{\mathrm{spc},\perp}^k(M)$ while $A_1^-\in\Omega_{\mathrm{sfc},\perp}^k(M)$.
	The pairing $(\delta\mathrm{d}A_1^+,A_2)$ is finite because $A_2$ is a spacelike compact $k$-form while $\delta\mathrm{d}A_1^+$ is compactly supported on account of $A_1$ being on-shell.
	Moreover, $(\delta\mathrm{d}A_1^+,A_2)$ is independent from the splitting $A_1=A_1^++A_1^-$ and thus $\sigma_{\mathrm{nd}}$ is well-defined.
	Indeed, let $A_1=\widetilde{A}_1^++\widetilde{A}_1^-$ be another splitting: it follows that $A_1^+-\widetilde{A}_1^+=-(A_1^--\widetilde{A}_1^-)\in\Omega_{\mathrm{c,nd}}^k(M)$. Therefore
	\begin{align*}
		(\delta\mathrm{d}\widetilde{A}_1^+,A_2)=
		(\delta\mathrm{d}A_1^+,A_2)+
		(\delta\mathrm{d}(\widetilde{A}_1^+-A_1^+),A_2)=
		(\delta\mathrm{d}A_1^+,A_2)\,,
	\end{align*}
	where in the last equality we used the self-adjointness of $\delta\mathrm{d}$ on $\Omega_{\mathrm{nd}}^k(M)$.
		
	We show that $\sigma_{\mathrm{nd}}(A_1,A_2)=-\sigma_{\mathrm{nd}}(A_2,A_1)$ for all $A_1,A_2\in\ker\delta\mathrm{d}\cap\Omega_{\mathrm{sc},\perp}^k(M)$.
	For that we have
	\begin{align*}
	\sigma_{\mathrm{nd}}(A_1,A_2)=
	(\delta\mathrm{d}A_1^+,A_2)&=
	(\delta\mathrm{d}A_1^+,A_2^+)+(\delta\mathrm{d}A_1^+,A_2^-)\\&=
	-(\delta\mathrm{d}A_1^-,A_2^+)+(\delta\mathrm{d}A_1^+,A_2^-)\\&=
	-(A_1^-,\delta\mathrm{d}A_2^+)+(A_1^+,\delta\mathrm{d}A_2^-)\\&=
	-(A_1^-,\delta\mathrm{d}A_2^+)-(A_1^+,\delta\mathrm{d}A_2^+)\\&=
	-(A_1,\delta\mathrm{d}A_2^+)=
	-\sigma_{\mathrm{nd}}(A_1,A_2)\,,
	\end{align*}
	where we exploited Lemma \ref{Lem: on boundary conditions preserving splitting} and $A_1^\pm,A_2^\pm\in\Omega_{\mathrm{sc,nd}}^k(M)$.
	
	Finally we prove that $\sigma_{\mathrm{nd}}(A_1,\mathrm{d}\chi)=0$ for all $\chi\in\Omega^k_{\mathrm{sc}}(M)$.
	Together with the antisymmetry shown before, this entails that $\sigma_{\mathrm{nd}}$ descends to a well-defined map $\sigma_{\mathrm{nd}}\colon\operatorname{Sol}_{\mathrm{nd}}^{\mathrm{sc}}(M)^{\times 2}\to\mathbb{R}$ which is bilinear and antisymmetric. Therefore it is a presymplectic form.
	To this end let $\chi\in\Omega^{k-1}_{\mathrm{sc}}(M)$: we have
	\begin{align*}
		\sigma_{\mathrm{nd}}(A,\mathrm{d}\chi)=
		(\delta\mathrm{d}A_1^+,\mathrm{d}\chi)=
		(\delta^2\mathrm{d}A_1^+,\chi)
		+(\mathrm{n}\delta\mathrm{d}A^+,\mathrm{t}\chi)=0\,,
	\end{align*}
	where we used equation \eqref{Eq: boundary terms for delta and d} as well as $\mathrm{n}\delta\mathrm{d}A^+=-\delta\mathrm{nd}A^+=0$.
\end{proof}

Working either with $\operatorname{Sol}_{\mathrm{t}}^{(\mathrm{sc})}(M)$ or $\operatorname{Sol}_{\mathrm{nd}}^{(\mathrm{sc})}(M)$
leads to the natural question whether it is possible to give an equivalent representation of these spaces in terms of compactly supported $k$-forms.
Using Assumption \ref{Thm: assumption theorem}, the following proposition holds true:
 
\begin{proposition}\label{Prop: characterization of solution space in terms of test-forms}
	Let $(M,g)$ be a globally hyperbolic spacetime with timelike boundary.
	Then the following linear maps are isomorphisms of vector spaces
	\begin{align}
		&G_\parallel\colon
		\frac{\Omega_{\mathrm{tc},\delta}^k(M)}{\delta\mathrm{d}\Omega_{\mathrm{tc},\mathrm{t}}^k(M)}\to
		\operatorname{Sol}_{\mathrm{t}}(M)\,,\qquad\quad\;
		G_\parallel\colon
		\frac{\Omega_{\mathrm{c},\delta}^k(M)}{\delta\mathrm{d}\Omega_{\mathrm{c},\mathrm{t}}^k(M)}\to
		\operatorname{Sol}_{\mathrm{t}}^\mathrm{sc}{}(M)\,,\\
		&G_\perp\colon
		\frac{\Omega_{\mathrm{tc,n},\delta}^k(M)}{\delta\mathrm{d}\Omega_{\mathrm{tc},\mathrm{nd}}^k(M)}\to
		\operatorname{Sol}_{\mathrm{nd}}(M)\,,\qquad
		G_\perp\colon
		\frac{\Omega_{\mathrm{c,n},\delta}^k(M)}{\delta\mathrm{d}\Omega_{\mathrm{c},\mathrm{nd}}^k(M)}\to
		\operatorname{Sol}_{\mathrm{nd}}^\mathrm{sc}{}(M)\,,
	\end{align}
\end{proposition}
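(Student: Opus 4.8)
Throughout I would check, for each of the four maps, three things: that $G_\parallel$ (resp.\ $G_\perp$) sends coclosed timelike-compact (resp.\ compact) $k$-forms into the displayed solution space, that it descends to the quotient, and that the resulting map is bijective. I would carry out the $\delta\mathrm{d}$-tangential cases in detail and then obtain the $\delta\mathrm{d}$-normal ones by the same arguments with $\parallel$ replaced by $\perp$, now invoking Corollary \ref{Cor: G commutes with d, delta}(ii), Proposition \ref{Prop: Lorenz gauge for deltad-normal bc}, the second family of relations in Lemma \ref{Lem: relations between delta,d and advanced-retarded propagators}, and the same care with the normal component $\mathrm{n}$ that appears in Proposition \ref{Prop: presymplectic structure on spacelike solutions with gauge boundary conditions}. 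The $\mathrm{sc}$ statements are the timelike-compact ones with the genuinely compactly-supported propagators of Assumption \ref{Thm: assumption theorem} in place of their extensions to past/future-compact forms from Remark \ref{Rmk: Cauchy problem with non-compact source}, so it suffices to treat the timelike-compact case.

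Next I would verify the target and well-definedness. For $\alpha\in\Omega^k_{\mathrm{tc},\delta}(M)$ one has $G_\parallel\alpha\in\Omega^k_\parallel(M)$, so $\mathrm{t}G_\parallel\alpha=0$ and $\Box G_\parallel\alpha=0$; since $\delta\alpha=0$, Corollary \ref{Cor: G commutes with d, delta}(i) gives $\delta G_\parallel\alpha=G_\parallel\delta\alpha=0$, whence $\delta\mathrm{d}G_\parallel\alpha=\Box G_\parallel\alpha-\mathrm{d}\delta G_\parallel\alpha=0$ and $[G_\parallel\alpha]\in\operatorname{Sol}_{\mathrm{t}}(M)$. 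For well-definedness I would take $\beta\in\Omega^k_{\mathrm{tc},\mathrm{t}}(M)$ and combine the commutation relations \eqref{Eq: relation between d,delta with parallel-propagator on domains} with $\Box G_\parallel\beta=0$ to get $G_\parallel(\delta\mathrm{d}\beta)=\delta\mathrm{d}G_\parallel\beta=-\mathrm{d}(\delta G_\parallel\beta)=-\mathrm{d}(G_\parallel\delta\beta)$; as $G_\parallel\delta\beta\in\Omega^{k-1}_\parallel(M)\subseteq\Omega^{k-1}_{\mathrm{t}}(M)$, this shows $\delta\mathrm{d}\Omega^k_{\mathrm{tc},\mathrm{t}}(M)$ is mapped into $\mathrm{d}\Omega^{k-1}_{\mathrm{t}}(M)$, so the map descends to the quotient. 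Surjectivity is then precisely the second assertion of Proposition \ref{Prop: Lorenz gauge for deltad-tangential bc}: every class of $\operatorname{Sol}_{\mathrm{t}}(M)$ has a representative $G_\parallel\alpha$ with $\alpha\in\Omega^k_{\mathrm{tc},\delta}(M)$.

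Injectivity is, I expect, the real work. Suppose $\alpha\in\Omega^k_{\mathrm{tc},\delta}(M)$ with $[G_\parallel\alpha]=0$, i.e.\ $G_\parallel\alpha=\mathrm{d}\chi$ for some $\chi\in\Omega^{k-1}_{\mathrm{t}}(M)$; I must exhibit $\beta\in\Omega^k_{\mathrm{tc},\mathrm{t}}(M)$ with $\delta\mathrm{d}\beta=\alpha$. The form $\psi\doteq G^+_\parallel\alpha$ lies in $\Omega^k_{\mathrm{pc}}(M)\cap\Omega^k_\parallel(M)$ and satisfies $\Box\psi=\alpha$; by Lemma \ref{Lem: relations between delta,d and advanced-retarded propagators} one has $\delta\psi=G^+_\parallel\delta\alpha=0$, hence $\delta\mathrm{d}\psi=\alpha$ and $\mathrm{t}\psi=0$, but $\psi$ is only past compact. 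I would repair this with a time-cutoff built from the gauge primitive $\chi$: choose Cauchy temporal slices with $\operatorname{supp}\alpha$ strictly between $\{\tau=\tau_0\}$ and $\{\tau=\tau_1\}$ and $\varphi=\varphi(\tau)$ with $\varphi\equiv0$ for $\tau\le\tau_0$, $\varphi\equiv1$ for $\tau\ge\tau_1$, and set $\beta\doteq\psi-\mathrm{d}(\varphi\chi)$. Then $\delta\mathrm{d}\beta=\delta\mathrm{d}\psi=\alpha$ and $\mathrm{t}\beta=\mathrm{t}\psi-\mathrm{d}_\partial\mathrm{t}(\varphi\chi)=0$; and $\beta$ is timelike compact because for $\tau\le\tau_0$ both $\varphi$ and $\psi$ vanish (as $\operatorname{supp}\psi\subseteq J^+(\operatorname{supp}\alpha)$), while for $\tau\ge\tau_1$ one has $\varphi\equiv1$, $\mathrm{d}\varphi=0$ and $\mathrm{d}(\varphi\chi)=\mathrm{d}\chi=G_\parallel\alpha=G^+_\parallel\alpha-G^-_\parallel\alpha=\psi$ there, the future-compact piece $G^-_\parallel\alpha$ vanishing for large $\tau$; hence $\beta$ vanishes outside $\{\tau_0\le\tau\le\tau_1\}$. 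This gives $\alpha=\delta\mathrm{d}\beta\in\delta\mathrm{d}\Omega^k_{\mathrm{tc},\mathrm{t}}(M)$, so the map is injective; in the $\mathrm{sc}$ case $\alpha$ is compact, $\psi$ strictly past compact and $\chi$ spacelike compact, so the same slab estimate yields $\beta\in\Omega^k_{\mathrm{c},\mathrm{t}}(M)$.

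The main obstacle is exactly this last construction: the obvious gauge-fixed primitive $G^+_\parallel\alpha$ of $\alpha$ is only past compact, and turning it into a genuinely timelike-compact potential requires precisely the hypothesis $[G_\parallel\alpha]=0$ to supply $\chi$, together with a cutoff tuned so that the residual support collapses to a finite time-interval; verifying that support claim relies on the support properties of $G^\pm_\parallel$ and on the characterization of timelike-compact support. Everything else is bookkeeping with the identities \eqref{Eq: relations between d,delta,t,n}, the commutation relations of Corollary \ref{Cor: G commutes with d, delta} and Lemma \ref{Lem: relations between delta,d and advanced-retarded propagators}, and the exact sequences of Proposition \ref{Prop: exact sequence and duality relations}.
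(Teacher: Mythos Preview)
Your proof is correct, and for well-definedness, target, and surjectivity it coincides with the paper's. The injectivity argument, however, is genuinely different. The paper proceeds algebraically: from $G_\parallel\alpha=\mathrm{d}\chi$ it observes that $\chi$ itself defines a class in $\operatorname{Sol}_{\mathrm{t}}(M)$ at degree $k-1$, applies Proposition \ref{Prop: Lorenz gauge for deltad-tangential bc} recursively to write $\mathrm{d}\chi=\mathrm{d}G_\parallel\beta$ with $\beta\in\Omega^{k-1}_{\mathrm{tc},\delta}(M)$, then uses Corollary \ref{Cor: G commutes with d, delta} to pass $\mathrm{d}$ through $G_\parallel$ (picking up a correction term $\Box\eta_\parallel$), invokes the exact sequence \eqref{Eq: short exact sequence for timelike k-forms} to write the kernel element as $\Box_\parallel\zeta$, and finally applies $\delta$ together with Remark \ref{Rmk: compactly supported solutions of the wave operator} to obtain an algebraic identity forcing $\alpha=\delta\mathrm{d}(\zeta-\eta_\parallel)$. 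Your route is instead constructive and geometric: you write down the candidate potential $\beta=G^+_\parallel\alpha-\mathrm{d}(\varphi\chi)$ directly, using the time cutoff $\varphi$ to interpolate between the past-compact $G^+_\parallel\alpha$ and the gauge form $\mathrm{d}\chi=G_\parallel\alpha$, and observe that at late times $\beta$ reduces to $G^-_\parallel\alpha$, which vanishes there. This is more elementary: it avoids the recursion to degree $k-1$ and the bookkeeping with $\eta_\parallel,\zeta$, at the cost of a direct support estimate. The paper's approach, by contrast, stays entirely within the propagator calculus and makes no explicit appeal to time slabs; it also makes the parallel with the $\perp$ case (where the correction term $\beta_\perp$ from Corollary \ref{Cor: G commutes with d, delta}(ii) is genuinely needed in intermediate steps) more visibly symmetric.
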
 

\begin{proof}
	Mutatis mutandis, the proof of the four isomorphisms is the same.
	Hence we focus only on the case of timelike supported $k$-forms discussing separately the statement for $\parallel$- and $\perp$- boundary conditions.
	
	\paragraph{Proof for $\delta\mathrm{d}$-tangential boundary conditions.}
	A direct computation shows that $G_\parallel\left[\Omega_{\mathrm{tc},\delta}^k(M)\right]\subseteq\mathcal{S}^\Box_{\mathrm{t},k}(M)$. The condition $\delta G_\parallel\omega=0$ follows from Corollary \ref{Cor: G commutes with d, delta} -- \textit{cf.} Equation \eqref{Eq: relation between d,delta with parallel-propagator on domains}.
	Moreover, $G_\parallel$ descends to the quotient since for all $\eta\in\Omega_{\mathrm{tc},\mathrm{t}}^k(M)$ we have
	\begin{align*}
		G_\parallel\delta\mathrm{d}\eta =-G_\parallel\delta\mathrm{d}\eta
		=-\delta\mathrm{d}G_\parallel\eta
		=-\mathrm{d}\delta G_\parallel\eta\in\mathrm{d}\Omega_{\mathrm{t}}^{k-1}(M)\,,
	\end{align*}
	where we used Equation \eqref{Eq: relation between d,delta with parallel-propagator on domains} and Proposition \ref{Prop: exact sequence and duality relations}.
	
	Proposition \ref{Prop: Lorenz gauge for deltad-tangential bc} entails that $G_\parallel$ is surjective.
	We show that $G_\parallel$ is injective: let $[\alpha]\in\frac{\Omega_{\mathrm{tc},\delta}^k(M)}{\delta\mathrm{d}\Omega_{\mathrm{tc},\mathrm{t}}^k(M)}$ be such that $[G_\parallel\alpha]=[0]$.
	This entails that there exists $\chi\in\Omega^{k-1}_{\mathrm{tc,t}}(M)$ such that $G_\parallel\alpha=\mathrm{d}\chi$.
	Corollary \ref{Cor: G commutes with d, delta} and $\alpha\in\Omega_{\mathrm{tc},\delta}^k(M)$ ensures that $\delta\mathrm{d}\chi=0$, therefore $\chi\in\operatorname{Sol}_{\mathrm{t}}(M)$.
	
	Proposition \ref{Prop: Lorenz gauge for deltad-tangential bc} ensures that $\mathrm{d}\chi=\mathrm{d}G_\parallel\beta$ with $\beta\in\Omega_{\mathrm{tc},\delta}^{k-1}(M)$ while Corollary \ref{Cor: G commutes with d, delta} implies that there exists $\eta_\parallel\in\Omega^k_{\mathrm{tc}}(M)$ such that
	\begin{align*}
		G_\parallel\alpha
		=\mathrm{d}\chi
		=\mathrm{d}G_\parallel\beta
		=G_\parallel(\mathrm{d}\beta-\Box\eta_\parallel)\,.
	\end{align*}
	In addition it holds $\mathrm{t}\eta_\parallel=0$ and $\mathrm{t}\delta\eta_\parallel=\mathrm{t}\chi$.
	
	It follows that $\alpha-\mathrm{d}\beta+\Box\eta_\parallel\in\ker G_\parallel$ and therefore $\alpha-\mathrm{d}\beta+\Box\eta_\parallel=\Box_\parallel\zeta$ for $\zeta\in\Omega_{\mathrm{tc},\parallel}^k(M)$ -- \textit{cf.} Remark \ref{Rmk: extension of short exact sequence}.
	Applying $\delta$ to the last equality we find
	\begin{align*}
		0=\delta\alpha
		-\delta\mathrm{d}\beta
		+\Box\delta\eta_\parallel
		-\Box\delta\zeta
		=\Box(\delta\eta_\parallel-\delta\zeta-\beta)\,,
	\end{align*}
	where used that $\delta\beta=0$.
	Remark \ref{Rmk: compactly supported solutions of the wave operator} entails that $\delta\eta_\parallel-\delta\zeta-\beta=0$, therefore,
	\begin{align*}
		\alpha
		=\Box_\parallel\zeta
		+\mathrm{d}\beta
		-\Box\eta_\parallel
		=\delta\mathrm{d}(\zeta-\eta_\parallel)
		\in\delta\mathrm{d}\Omega^k_{\mathrm{c,t}}(M)\,,
	\end{align*}
 	that is, $[\alpha]=[0]$.
 	
 	\paragraph{Proof for $\delta\mathrm{d}$-normal boundary conditions.}
 	By direct inspection we have that $G_\perp\alpha\in\mathcal{S}_{\mathrm{nd},k}^\Box(M)$ for all $\alpha\in\Omega^k_{\mathrm{tc,n},\delta}(M)$.
 	Furthermore, Equation \eqref{Eq: relation between d,delta with parallel-propagator on domains} entails that $\delta G_\perp\alpha=G_\perp\delta\alpha=0$.
 	The map $G_\perp$ also descends to the quotients since for all $\eta\in\Omega^k_{\mathrm{c,nd}}(M)$ if holds
 	\begin{align*}
 		G_\perp\delta \mathrm{d}\eta
 		=\delta\mathrm{d}G_\perp\eta
 		=-\mathrm{d}\delta G_\perp\eta\in\mathrm{d}\Omega^{k-1}_{\mathrm{tc}}(M)\,,
 	\end{align*}
 	where in the second equality we used Corollary \ref{Cor: G commutes with d, delta}.
 	
 	Surjectivity of $G_\perp$ is guaranteed by Proposition \ref{Prop: Lorenz gauge for deltad-normal bc}.
 	We show injectivity of $G_\perp$: let $[\alpha]\in\frac{\Omega^k_{\mathrm{tc,n},\delta}(M)}{\delta\mathrm{d}\Omega^k_{\mathrm{tc,nd}}(M)}$ be such that $[G_\perp\alpha]=[0]$.
 	By definition there exists $\chi\in\Omega^{k-1}_{\mathrm{tc}}(M)$ such that $G_\perp\alpha=\mathrm{d}\chi$.
 	Since $\mathrm{n}\alpha=0$, Corollary \ref{Cor: G commutes with d, delta} entails that $\delta\mathrm{d}\chi=0$. In addition it holds $\mathrm{nd}\chi=\mathrm{n}G_\perp\alpha=0$.
 	It follows that $[\chi]\in\operatorname{Sol}_{\mathrm{nd}}(M)$.
 	
 	Proposition \ref{Prop: Lorenz gauge for deltad-normal bc} implies that there exists $\beta\in\Omega^{k-1}_{\mathrm{tc,n},\delta}(M)$ such that
 	\begin{align*}
	 	G_\perp\alpha
	 	=\mathrm{d}\chi
	 	=\mathrm{d}G_\perp\beta
	 	=G_\perp\mathrm{d}\beta\,,
 	\end{align*}
 	where we used equation \eqref{Eq: relation between d,delta with perp-propagator on domains}.
 	It follows that $\alpha-\mathrm{d}\beta\in\ker G_\perp$, therefore $\alpha-\mathrm{d}\beta=\Box_\perp\eta$ for $\eta\in\Omega^k_{\mathrm{tc},\perp}(M)$ -- \textit{cf.} Proposition \ref{Prop: exact sequence and duality relations} and Remark \ref{Rmk: extension of short exact sequence}.
 	Application of $\delta$ entails
 	\begin{align*}
	 	0=\delta(\alpha-\mathrm{d}\beta-\Box_\perp\eta)
	 	=-\Box(\beta+\delta\eta)\,,
 	\end{align*}
 	where we also used that $\delta\beta=0$.
 	Remark \ref{Rmk: compactly supported solutions of the wave operator} implies that $\beta+\delta\eta=0$, hence
 	\begin{align*}
	 	\alpha
	 	=\Box_\perp\eta+\mathrm{d}\beta
	 	=\delta\mathrm{d}\eta
	 	\in\delta\mathrm{d}\Omega^k_{\mathrm{tc,nd}}(M)\,,
 	\end{align*}
 	which entails $[\alpha]=[0]$.
\end{proof} 

The following proposition shows that the isomorphisms introduced in Proposition \ref{Prop: characterization of solution space in terms of test-forms} for $\operatorname{Sol}_{\mathrm{t}}^{\mathrm{sc}}(M)$ and $\operatorname{Sol}_{\mathrm{nd}}^{\mathrm{sc}}(M)$ lift to isomorphisms of presymplectic spaces.

\begin{proposition}\label{Prop: presymplectomorphism for spacelike solution spaces}
	Let $(M,g)$ be a globally hyperbolic spacetime with timelike boundary.
	The following statements hold true:
	\begin{enumerate}
		\item
		$\frac{\Omega^k_{\mathrm{c},\delta}(M)}{\delta\mathrm{d}\Omega_{\mathrm{c},\mathrm{t}}^k(M)}$ is a pre-symplectic space if endowed with the bilinear map $\widetilde{G}_\parallel([\alpha],[\beta])\doteq(\alpha,G_\parallel\beta)$.
		
		Moreover $\bigg(\frac{\Omega^k_{\mathrm{c},\delta}(M)}{\delta\mathrm{d}\Omega_{\mathrm{c},\mathrm{t}}^k(M)},\widetilde{G}_\parallel\bigg)$ is pre-symplectomorphic to $(\operatorname{Sol}^{\mathrm{sc}}_{\mathrm{t}}(M),\sigma_{\mathrm{t}})$.
		\item
		$\frac{\Omega^k_{\mathrm{c,n},\delta}(M)}{\delta\mathrm{d}\Omega_{\mathrm{c,nd}}^k(M)}$ is a pre-symplectic space
		if endowed with the bilinear map $\widetilde{G}_\perp([\alpha],[\beta])\doteq(\alpha,G_\perp\beta)$.

		Moreover $\bigg(\frac{\Omega^k_{\mathrm{c,n},\delta}(M)}{\delta\mathrm{d}\Omega_{\mathrm{c},\mathrm{nd}}^k(M)},\widetilde{G}_\perp\bigg)$ is pre-symplectomorphic to $(\operatorname{Sol}^{\mathrm{sc}}_{\mathrm{nd}}(M),\sigma_{\mathrm{nd}})$.
\end{enumerate}  
\end{proposition}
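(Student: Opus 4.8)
The plan is to verify that the linear bijections $G_\parallel$, $G_\perp$ furnished by Proposition \ref{Prop: characterization of solution space in terms of test-forms} pull $\sigma_{\mathrm{t}}$, $\sigma_{\mathrm{nd}}$ back to $\widetilde G_\parallel$, $\widetilde G_\perp$. Since a presymplectomorphism is precisely a linear isomorphism intertwining the two (automatically bilinear) antisymmetric forms, the statement splits, for each of the two boundary conditions, into three checks: well-definedness of $\widetilde G_\sharp$ on the quotient, its antisymmetry, and the identity $\sigma_\sharp(G_\sharp[\alpha],G_\sharp[\beta])=\widetilde G_\sharp([\alpha],[\beta])$. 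I would carry out the $\parallel$ case in detail and obtain the $\perp$ case by the same argument, replacing $\mathrm{t}$ by $\mathrm{n}$ and $G_\parallel$ by $G_\perp$, and using in addition the condition $\mathrm{n}\alpha=0$ built into the source space $\Omega^k_{\mathrm{c,n},\delta}(M)$.

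For well-definedness, the only point is that $(\alpha,G_\parallel\beta)$ is unchanged when $\alpha$ or $\beta$ is shifted by an element of $\delta\mathrm{d}\Omega^k_{\mathrm{c},\mathrm{t}}(M)$. If $\alpha$ is replaced by $\alpha+\delta\mathrm{d}\eta$, then Green's formula \eqref{Eq: boundary terms for delta d operator} applied to $\eta$ and $G_\parallel\beta$ gives $(\delta\mathrm{d}\eta,G_\parallel\beta)=(\eta,\delta\mathrm{d}G_\parallel\beta)+(\mathrm{t}\eta,\mathrm{nd}G_\parallel\beta)_\partial-(\mathrm{nd}\eta,\mathrm{t}G_\parallel\beta)_\partial$, which vanishes because $\mathrm{t}\eta=0$, $\mathrm{t}G_\parallel\beta=0$, and $\delta\mathrm{d}G_\parallel\beta=(\Box-\mathrm{d}\delta)G_\parallel\beta=0$ since $G_\parallel\beta$ is a Lorenz-gauge solution (Proposition \ref{Prop: characterization of solution space in terms of test-forms}). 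If $\beta$ is replaced by $\beta+\delta\mathrm{d}\eta$, then the computation already performed in the proof of Proposition \ref{Prop: characterization of solution space in terms of test-forms} gives $G_\parallel\delta\mathrm{d}\eta=-\mathrm{d}\delta G_\parallel\eta$, and with $\chi\doteq\delta G_\parallel\eta$ (which satisfies $\mathrm{t}\chi=0$) the Stokes identity \eqref{Eq: boundary terms for delta and d} yields $(\mathrm{d}\chi,\alpha)=(\chi,\delta\alpha)+(\mathrm{t}\chi,\mathrm{n}\alpha)_\partial=0$, using $\delta\alpha=0$, so that $(\alpha,G_\parallel\delta\mathrm{d}\eta)=-(\mathrm{d}\chi,\alpha)=0$; in the $\perp$ case one uses instead $\mathrm{nd}\eta=0$, $\mathrm{nd}G_\perp\beta=0$ and, for the second shift, both $\delta\alpha=0$ and $\mathrm{n}\alpha=0$. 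Antisymmetry, that is $(\alpha,G_\sharp\beta)=-(\beta,G_\sharp\alpha)$, is immediate from the adjoint relation \eqref{Eq: adjont of propagators} upon subtracting the advanced and retarded pieces. Together with bilinearity this shows $\widetilde G_\sharp$ is a presymplectic form on the quotient.

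It remains to identify $\widetilde G_\sharp$ with the pullback of $\sigma_\sharp$ along $G_\sharp$. Given $[\alpha],[\beta]$ in the quotient, I would evaluate $\sigma_{\mathrm{t}}(G_\parallel[\alpha],G_\parallel[\beta])$ using the representative $A_1=G_\parallel\alpha$ and the splitting $A_1=G^+_\parallel\alpha+(-G^-_\parallel\alpha)$, with $G^+_\parallel\alpha\in\Omega^k_{\mathrm{spc},\parallel}(M)$ and $-G^-_\parallel\alpha\in\Omega^k_{\mathrm{sfc},\parallel}(M)$; this is legitimate because $\sigma_{\mathrm{t}}$ is independent of representative and splitting (Proposition \ref{Prop: presymplectic structure on spacelike solutions with gauge boundary conditions}). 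The crucial computation is $\delta\mathrm{d}\,G^+_\parallel\alpha=\alpha$: indeed $\Box G^+_\parallel\alpha=\alpha$ by \eqref{Eq: properties of advanced and retarded propagators}, while $\delta G^+_\parallel\alpha=G^+_\parallel\delta\alpha=0$ by Lemma \ref{Lem: relations between delta,d and advanced-retarded propagators} (the compactly supported, coclosed $\alpha$ meeting its hypotheses), whence $\delta\mathrm{d}\,G^+_\parallel\alpha=(\Box-\mathrm{d}\delta)G^+_\parallel\alpha=\alpha$. Therefore $\sigma_{\mathrm{t}}(G_\parallel[\alpha],G_\parallel[\beta])=(\delta\mathrm{d}\,G^+_\parallel\alpha,G_\parallel\beta)=(\alpha,G_\parallel\beta)=\widetilde G_\parallel([\alpha],[\beta])$; the $\perp$ case is identical, using the form $\delta G^\pm_\perp=G^\pm_\perp\delta$ on $\Omega^k_{\mathrm{n}}(M)\cap\Omega^k_{\mathrm{pc/fc}}(M)$ of Lemma \ref{Lem: relations between delta,d and advanced-retarded propagators} (applicable since $\mathrm{n}\alpha=0$) and the fact that $G_\perp\alpha\in\Omega^k_{\mathrm{sc},\perp}(M)$ already furnishes the representative demanded by the definition of $\sigma_{\mathrm{nd}}$. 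Since $G_\sharp$ is a linear bijection by Proposition \ref{Prop: characterization of solution space in terms of test-forms} and pulls $\sigma_\sharp$ back to $\widetilde G_\sharp$, it is a presymplectomorphism. I expect no genuine obstacle: the only delicate bookkeeping is the well-definedness under $\beta\mapsto\beta+\delta\mathrm{d}\eta$, where one must recall the precise exact-form output $G_\sharp\delta\mathrm{d}\eta\in\mathrm{d}\Omega^{k-1}(M)$ from Proposition \ref{Prop: characterization of solution space in terms of test-forms} and then kill the exact piece against the coclosed (and, for $\perp$, conormal-free) $\alpha$ via \eqref{Eq: boundary terms for delta and d}.
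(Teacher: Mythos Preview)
Your proposal is correct and follows essentially the same approach as the paper: both verify well-definedness of $\widetilde G_\sharp$ via the Green's formula \eqref{Eq: boundary terms for delta d operator} and the Stokes identity \eqref{Eq: boundary terms for delta and d}, invoke antisymmetry from \eqref{Eq: adjont of propagators}, and establish the presymplectomorphism by choosing the splitting $A_1^\pm=G_\sharp^\pm\alpha$ and computing $\delta\mathrm{d}G_\sharp^+\alpha=\alpha$ through Lemma \ref{Lem: relations between delta,d and advanced-retarded propagators}. Your write-up is somewhat more explicit about which boundary terms are being killed and why, but the logical skeleton is identical to the paper's proof.
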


\begin{proof}
	We discuss the two cases separately.
	\paragraph{Proof for $\delta\mathrm{d}$-tangential boundary conditions.}
	We observe that $\widetilde{G}_\parallel$ is well-defined.
	As a matter of fact, let $\alpha,\beta\in\Omega_{\mathrm{c},\delta}^k(M)$, then $G_\parallel\beta\in\Omega_{\mathrm{sc}}^k(M)$. Therefore the pairing $(\alpha,G_\parallel\beta)$ is finite. Furthermore, if $\eta\in\Omega_{\mathrm{c},\mathrm{t}}^k(M)$, it holds
	\begin{align*}
		(\delta\mathrm{d}\eta,G_\parallel\beta)
		&=(\eta,\delta\mathrm{d}G_\parallel\beta)
		=-(\eta,\mathrm{d}\delta G_\parallel\beta)
		=-(\eta,\mathrm{d}G_\parallel\delta\beta)
		=0\,,\\
		(\alpha,G_\parallel\delta\mathrm{d}\eta)
		&=(\alpha,\delta\mathrm{d}G_\parallel\eta)
		=-(\alpha,\mathrm{d}\delta G_\parallel\eta)
		=-(\alpha,\mathrm{d}G_\parallel\delta\eta)=0\,,
	\end{align*}
	where we used that $G_\parallel\beta,\eta\in\Omega_{\mathrm{c},\mathrm{t}}^k(M)$ -- \textit{cf.} Equation \eqref{Eq: boundary terms for delta d operator} -- as well as Equation \eqref{Eq: relation between d,delta with parallel-propagator on domains}.
	Therefore $\widetilde{G}_\parallel$ is well-defined: Moreover, it is per construction bilinear and antisymmetric, therefore it induces a pre-symplectic structure.
	
	We prove that the isomorphism $G_\parallel\colon\frac{\Omega_{\mathrm{c},\delta}^k(M)}{\delta\mathrm{d}\Omega_{\mathrm{c},\parallel}^k(M)}\to\operatorname{Sol}_{\mathrm{t}}(M)$ introduced in Proposition \ref{Prop: characterization of solution space in terms of test-forms} is a pre-symplectomorphism.
	Let $[\alpha],[\beta]\in\frac{\Omega^k_{\mathrm{c},\delta}(M)}{\delta\mathrm{d}\Omega_{\mathrm{c},\parallel}^k(M)}$.
	As a direct consequence of the properties of $G_\parallel=G_\parallel^+-G_\parallel^-$, calling $A_1=G_\parallel\alpha$ and $A_2=G_\parallel\beta$, we can consider $A_1^\pm=G_\parallel^\pm\alpha$ in Equation \eqref{Eq: presymplectic structure on solutions with gauge boundary conditions}.
	This yields
	\begin{align*}
		\sigma_{\mathrm{t}}([G_\parallel\alpha],[G_\parallel\beta])=
		(\delta\mathrm{d}G_\parallel^+\alpha,G_\parallel\beta)=
		(\Box G_\parallel^+\alpha-\mathrm{d}\delta G_\parallel^+\alpha,G_\parallel\beta)=
		(\alpha,G_\parallel\beta)=
		\widetilde{G}_\parallel([\alpha],[\beta])\,,
	\end{align*}
	where we used Lemma \ref{Lem: relations between delta,d and advanced-retarded propagators} so that $\mathrm{d}\delta G_\parallel^+\alpha=\mathrm{d}G_\parallel^+\delta\alpha=0$.
	
	\paragraph{Proof for $\delta\mathrm{d}$-normal boundary conditions.}
	We observe that $\widetilde{G}_\perp$ is well-defined: indeed for all $\alpha\in\Omega^k_{\mathrm{c,n},\delta}(M)$ and $\eta\in\Omega^k_{\mathrm{c,nd}}(M)$ we have
	\begin{align*}
		(\alpha,G_\perp\delta\mathrm{d}\eta)
		&=(\alpha,\delta\mathrm{d}G_\perp\eta)
		=-(\alpha,\mathrm{d}\delta G_\perp\eta)
		=-(\delta\alpha,\delta G_\perp\eta)
		-(\mathrm{n}\alpha,\mathrm{t}\delta G_\perp\eta)
		=0
		\\
		(\delta\mathrm{d}\eta,G_\perp\alpha)
		&=(\eta,\delta\mathrm{d}G_\perp\alpha)
		=-(\eta,\mathrm{d}\delta G_\perp\alpha)
		=-(\eta,\mathrm{d}G_\perp\delta\alpha)
		=0\,,
	\end{align*}
	where in the former chain of equalities we used Equation \eqref{Eq: relation between d,delta with perp-propagator on domains} as well as Equation \eqref{Eq: boundary terms for delta and d}, while in the latter we used Equation \eqref{Eq: boundary terms for delta d operator} and Equation \eqref{Eq: relation between d,delta with perp-propagator on domains}.
	
	Proposition \ref{Prop: exact sequence and duality relations} shows that $\widetilde{G}_\perp$ induces a presymplectic structure.
	We now prove that the isomorphism $G_\perp\colon\frac{\Omega^k_{\mathrm{c,n},\delta}(M)}{\delta\mathrm{d}\Omega_{\mathrm{c,nd}}^k(M)}\to\operatorname{Sol}_{\mathrm{nd}}^{\mathrm{sc}}(M)$ introduced in Proposition \ref{Prop: characterization of solution space in terms of test-forms} is in fact a symplectomorphism.
	Let $[\alpha],[\beta]\in\frac{\Omega^k_{\mathrm{c,n},\delta}(M)}{\delta\mathrm{d}\Omega_{\mathrm{c,nd}}^k(M)}$ and let $[A_\alpha]:=[G_\perp\alpha], [A_\beta]:=[G_\perp\beta]\in\operatorname{Sol}_{\mathrm{nd}}^{\mathrm{sc}}(M)$.
	Following \ref{Prop: presymplectic structure on spacelike solutions with gauge boundary conditions} we can choose $A_\alpha^\pm=G_\perp^\pm\alpha$ so that
	\begin{align*}
		\sigma_{\mathrm{nd}}([A_\alpha],[A_\beta])
		=(\delta\mathrm{d}G_\perp^+\alpha,G_\perp\beta)
		=(\Box G_\perp^+\alpha,G_\perp\beta)
		-(\mathrm{d}\delta G_\perp^+\alpha,G_\perp\beta)
		=(\alpha,G_\perp\beta)
		=\widetilde{G}_\perp([\alpha],[\beta])\,,
	\end{align*}
	where proposition \ref{Lem: relations between delta,d and advanced-retarded propagators} ensures that $\mathrm{d}\delta G_\perp^+\alpha=\mathrm{d}G_\perp^+\delta\alpha=0$.	
\end{proof}

\begin{remark}\label{Rmk: on degeneracy on presymplectic structure}
	Following \cite[Cor. 5.3]{Hack-Schenkel-13}, $\sigma_{\mathrm{t}}$ (\textit{resp.} $\sigma_{\mathrm{nd}}$) do not define in general a symplectic form on the space of spacelike compact solutions $\operatorname{Sol}_{\mathrm{t}}(M)$ (\textit{resp.} $\operatorname{Sol}_{\mathrm{nd}}(M)$).
	A direct characterization of this deficiency is best understood  by introducing the following quotients:
	\begin{align}
		\widehat{\operatorname{Sol}}_{\mathrm{t}}^{\mathrm{sc}}:=
		\frac{\lbrace A\in\Omega^k_{\mathrm{sc}}(M)\;|\;\delta\mathrm{d}A=0\,,\;\mathrm{t}A=0\rbrace}
		{\mathrm{d}\Omega_{\mathrm{t}}^{k-1}(M)\cap\Omega^k_{\mathrm{sc}}(M)}\,,\quad
		\widehat{\operatorname{Sol}}_{\mathrm{nd}}^{\mathrm{sc}}:=
		\frac{\lbrace A\in\Omega^k_{\mathrm{sc}}(M)\;|\;\delta\mathrm{d}A=0\,,\;\mathrm{nd}A=0\rbrace}
		{\mathrm{d}\Omega^{k-1}(M)\cap\Omega^k_{\mathrm{sc}}(M)}\,,
	\end{align}
	Focusing on $\delta\mathrm{d}$-normal boundary conditions, it follows that there is a natural surjective linear map $\operatorname{Sol}_{\mathrm{nd}}^{\mathrm{sc}}\to\widehat{\operatorname{Sol}}_{\mathrm{nd}}^{\mathrm{sc}}$.
	Moreover, $\widehat{\operatorname{Sol}}_{\mathrm{nd}}^{\mathrm{sc}}$ is symplectic with respect to the form $\sigma_{\mathrm{nd}}([A_1],[A_2])=(\delta\mathrm{d}A_1^+,A_2)$.
	This can be shown as follows: if $\sigma_{\mathrm{nd}}([A_1],[A_2])=0$ for all $[A_1]\in\widehat{\operatorname{Sol}}_{\mathrm{nd}}^{\mathrm{sc}}$ then, choosing $A_1=G_\perp\alpha$ with $\alpha\in\Omega^k_{\mathrm{c,n},\delta}(M)$ leads to $0=\sigma_\perp([G_\perp\alpha],[A_2])=(\alpha,A_2)$ -- \textit{cf.} Proposition \ref{Prop: presymplectomorphism for spacelike solution spaces}.
	This entails $\mathrm{d}A_2=0$ as well as $A_2=0\in H_{k,\mathrm{c,n}}(M)^*\simeq H^k(M)$ -- \textit{cf.} Appendix \ref{App: Poincare duality for manifold with boundary}.
	Therefore $A_2=\mathrm{d}\chi$ where $\chi\in\Omega^{k-1}(M)$ that is $[A_2]=[0]$ in $\widehat{\operatorname{Sol}}_{\mathrm{nd}}^{\mathrm{sc}}(M)$.
	A similar result holds, mutatis mutandis, for $\parallel$.
	
	The net result is that $(\operatorname{Sol}_{\mathrm{nd}}^{\mathrm{sc}}(M),\sigma_{\mathrm{nd}})$ (\textit{resp.} $(\operatorname{Sol}_{\mathrm{t}}^{\mathrm{sc}}(M),\sigma_{\mathrm{t}})$) is symplectic if and only if  $\mathrm{d}\Omega^{k-1}_{\mathrm{sc}}(M)=\Omega^k_{\mathrm{sc}}(M)\cap\mathrm{d}\Omega^{k-1}(M)$ (\textit{resp.} $\mathrm{d}\Omega^{k-1}_{\mathrm{sc,t}}(M)=\Omega^k_{\mathrm{sc}}(M)\cap\mathrm{d}\Omega_{\mathrm{t}}^{k-1}(M)$).
	This is in agreement with the analysis in \cite{Benini:2013tra} for the case of globally hyperbolic spacetimes $(M,g)$ with $\partial M=\emptyset$.
\end{remark}

\begin{Example}
	We construct an example of a globally hyperbolic spacetime with timelike boundary $(M,g)$ such that $\mathrm{d}\Omega_{\mathrm{sc}}^{k-1}(M)$ is properly included in $\Omega_{\mathrm{sc}}^k(M)\cap\mathrm{d}\Omega^{k-1}(M)$ -- \textit{cf.} \cite[Ex. 5.7]{Hack-Schenkel-13} for the case with empty boundary.
	Consider half-Minkowski spacetime $\mathbb{R}^m_+:=\mathbb{R}^{m-1}\times\overline{\mathbb{R}_+}$ with flat metric and let $p\in\mathring{\mathbb{R}}^m_+$.
	Let $M\doteq\mathbb{R}^m_+\setminus J(p)$ be endowed with the restriction to $M$ of the Minkowski metric. This spacetime is globally hyperbolic with timelike boundary.
	Let $B_1, B_2$ be open balls in $\mathbb{R}^m_+$ centered at $p$ with $B_1\subset B_2$.
	
	We consider $\psi\in\Omega^0(M)$ such that $\psi|_{J(B_1\cap M)}=1$ and $\psi|_{M\setminus J(B_2\cap M)}=0$.
	In addition we introduce $\varphi\in\Omega^0(M)$ such that:
	(a) $\varphi=\hat{\varphi}|_{M}$ is the restriction to $M$ of an element $\hat{\varphi}\in\Omega_{\mathrm{tc}}^0(\mathbb{R}^m_+)$ such that for all $x\in \mathbb{R}_+^m$, $\hat{\varphi}(x)$ depends only on $\tau_{\mathbb{R}_+^m}(x)$ -- \textit{cf}. Theorem \ref{Thm: globally hyperbolic spacetime with time-like boundary} -- and $\hat{\varphi}(p)=1$;
	(b) $\chi:=\varphi\psi\in\Omega^0(M)$ is such that $\mathrm{t}\chi=\chi|_{\partial M}=0$;
	
	The existence of such function $\hat{\varphi}$ is guaranteed by the fact that $p\in\mathring{\mathbb{R}}^m_+$.
	Hence there exists a non-empty interval $I\subseteq\mathbb{R}$ such that $\tau_{\mathbb{R}_+^m}(p)\in I$ and $x\notin\partial M$ for all $x\in\operatorname{supp}(\psi)$ with $\tau_{\mathbb{R}_+^m}(x)\in I$.
	
	It follows that $\mathrm{d}\chi\in\Omega_{\mathrm{sc}}^1(M)$, though $\chi\in\Omega^0(M)$ is not spacelike compact. In addition there does not exist $\zeta\in\Omega_{\mathrm{sc}}^0(M)$ such that $\mathrm{d}\zeta=\mathrm{d}\chi$.
	Indeed, let us consider the spacelike curve $\gamma$ in $M$
	\begin{align*}
		\gamma(s)
		=(\tau(p),x_1(p),\dots,x_{m-2}(p),z(p)+s)\,,\qquad s>0\,,
	\end{align*}
	where $\tau=\tau_{\mathbb{R}_+^m},x_1,\ldots,x_{m-2},z$ are Cartesian coordinates on $\mathbb{R}^m_+$.
	Assuming the existence of $\zeta$ with the properties described above, integration along $\gamma$ would lead to the following contradiction
	\begin{align*}
		0
		=\int_\gamma\iota_\gamma^*\mathrm{d}\zeta
		=\int_\gamma\iota_\gamma^*\mathrm{d}\chi
		=-1\,.
	\end{align*}
\end{Example}

\subsection{The algebra of observables for $\operatorname{Sol}_{\mathrm{t}}(M)$ and for $\operatorname{Sol}_{\mathrm{nd}}(M)$}
\label{Sec: Algebra of observables for Sol(M)}

In this section we discuss an application of the previous results that we obtained.
Motivated by the algebraic approach to quantum field theory, we associate a unital $*$-algebra both to $\operatorname{Sol}_{\mathrm{t}}(M)$ and to $\operatorname{Sol}_{\mathrm{nd}}(M)$, whose elements are interpreted as the observables of the underlying quantum system.
Furthermore we study its key structural properties and we comment on their significance.
We recall that the corresponding question, when the underlying background $(M,g)$ is globally hyperbolic manifold with $\partial M=\emptyset$ has been thoroughly discussed in the literature -- \textit{cf.} \cite{Benini-16,Dappiaggi:2011cj,Hack-Schenkel-13,Sanders:2012sf}.


We now introduce the algebra of observables associated to the solution space $\operatorname{Sol}_{\mathrm{t}}(M)$ and $\operatorname{Sol}_{\mathrm{nd}}(M)$ and we discuss its main properties.
Our analysis follows closely that of \cite{Benini-16,Dappiaggi:2011cj,Hack-Schenkel-13,Sanders:2012sf} for globally hyperbolic spacetimes with empty boundary.
\\
Following \cite{Benini-16} we will identify a unital $*$-algebra $\mathcal{A}_{\mathrm{t}}(M)$ (\textit{resp.} $\mathcal{A}_{\mathrm{nd}}(M)$) built out of suitable linear functionals over $\operatorname{Sol}_{\mathrm{t}}(M)$ (\textit{resp.} $\operatorname{Sol}_{\mathrm{nd}}(M)$), whose collection is fixed so to contain enough elements to distinguish all configurations in $\operatorname{Sol}_{\mathrm{t}}(M)$ (\textit{resp.} $\operatorname{Sol}_{\mathrm{nd}}(M)$) -- \textit{cf.} Proposition \ref{Prop: sep and opt for the alg of obs with gauge bc}.

Taking into account the discussion in the preceding sections, particularly Equation \eqref{Eq: relations-bulk-to-boundary} and Definition \ref{Def: tangential and normal component} we introduce the following structures.

\begin{Definition}\label{Def: alg of obs for gauge bc}
	Let $(M,g)$ be a globally hyperbolic spacetime with timelike boundary.
	We call {\em algebra of observables} associated to $\operatorname{Sol}_{\mathrm{t}}(M)$, the associative, unital $*$-algebra 
	\begin{align}\label{Eq: alg of obs for deltad-tangential bc}
		\mathcal{A}_{\mathrm{t}}(M)\doteq\frac{\mathcal{T}[\mathcal{O}_{\mathrm{t}}(M)]}{\mathcal{I}[\mathcal{O}_{\mathrm{t}}(M)]}\,,\qquad
		\mathcal{O}_{\mathrm{t}}(M)\doteq\frac{\Omega_{\mathrm{c},\delta}^k(M)}{\delta\mathrm{d}\Omega_{\mathrm{c,t}}^k(M)}\,.
	\end{align}
	Here $\mathcal{T}[\mathcal{O}_{\mathrm{t}}(M)]\doteq\bigoplus_{n=0}^\infty\mathcal{O}_{\mathrm{t}}(M)^{\otimes n}$ is the universal tensor algebra with
	$\mathcal{O}_{\mathrm{t}}(M)^{\otimes 0}\equiv\mathbb{C}$, while the $*$-operation is
	{
		defined by $(\alpha_1\otimes\ldots\otimes\alpha_n)^*=(\overline{\alpha}_n\otimes\ldots\overline{\alpha}_1)$ for $\alpha_1\otimes\ldots\otimes\alpha_n\in\mathcal{O}_{\mathrm{t}}(M)^{\otimes n}$ and then extended by linearity -- here $\overline{\alpha}$ indicates complex conjugation.
	}
	In addition $\mathcal{I}[\mathcal{O}_{\mathrm{t}}(M)]$ is the $*$-ideal generated by elements of the form
	$[\alpha]\otimes[\beta]-[\beta]\otimes[\alpha]-i \widetilde{G}_\parallel([\alpha],[\beta])\mathbb{I}$, where $[\alpha],[\beta]\in\mathcal{O}_{\mathrm{t}}(M)$ while $\widetilde{G}_\parallel$ is defined in Proposition \ref{Prop: presymplectomorphism for spacelike solution spaces} and $\mathbb{I}$ is the identity of {$\mathcal{T}[\mathcal{O}_{\mathrm{t}}(M)]$}.

	Similarly, we call {\em algebra of observables} associated to $\operatorname{Sol}_{\mathrm{nd}}(M)$, the associative, unital $*$-algebra 
	\begin{align}\label{Eq: alg of obs for deltad-normal bc}
		\mathcal{A}_{\mathrm{nd}}(M)\doteq
		\frac{\mathcal{T}[\mathcal{O}_{\mathrm{nd}}(M)]}{\mathcal{I}[\mathcal{O}_{\mathrm{nd}}(M)]}\,,\qquad
		\mathcal{O}_{\mathrm{nd}}(M)\doteq
		\frac{\Omega_{\mathrm{c,n},\delta}^k(M)}{\delta\mathrm{d}\Omega_{\mathrm{c,nd}}^k(M)}\,.
	\end{align}
	where $\mathcal{T}[\mathcal{O}_{\mathrm{nd}}(M)]\doteq\bigoplus_{n=0}^\infty\mathcal{O}_{\mathrm{nd}}(M)^{\otimes n}$ is the universal tensor algebra with
	$\mathcal{O}_{\mathrm{nd}}(M)^{\otimes 0}\equiv\mathbb{C}$, while the $*$-operation is defined by $(\alpha_1\otimes\ldots\otimes\alpha_n)^*=(\overline{\alpha_n}\otimes\ldots\overline{\alpha_1})$ for $\alpha_1\otimes\ldots\otimes\alpha_n\in\mathcal{O}_{\mathrm{nd}}(M)^{\otimes n}$ and then extended by linearity -- here $\overline{\alpha}$ indicates complex conjugation.
	In addition
	$\mathcal{I}[\mathcal{O}_{\mathrm{nd}}(M)]$ is the $*$-ideal generated by elements of the form
	$[\alpha]\otimes[\beta]-[\beta]\otimes[\alpha]-i \widetilde{G}_\perp([\alpha],[\beta])\mathbb{I}$ , where $[\alpha],[\beta]\in\mathcal{O}_{\mathrm{nd}}(M)$ while $\widetilde{G}_\perp$ is defined in Proposition \ref{Prop: presymplectomorphism for spacelike solution spaces} and $\mathbb{I}$ is the identity of {$\mathcal{T}[\mathcal{O}_{\mathrm{nd}}(M)]$}.
\end{Definition}

\begin{remark}\label{Rmk: bc for obs associated to deltad-normal bc}
	Notice that, with respect to the definition of $\mathcal{O}_{\mathrm{t}}(M)$, the vector space $\mathcal{O}_{\mathrm{nd}}(M)$ introduced in Definition \ref{Def: alg of obs for gauge bc} contains equivalence classes built out of forms $\alpha\in\Omega_{\mathrm{c}}^k(M)$ such that $\delta\alpha=0$ as well as $\mathrm{n}\alpha=0$.
	The last condition is sufficient and necessary to have a well-defined pairing among $\mathcal{O}_{\mathrm{nd}}(M)$ and $\operatorname{Sol}_{\mathrm{nd}}(M)$.
	Indeed for all $A\in[A]\in\operatorname{Sol}_{\mathrm{nd}}(M)$ and for all $\alpha\in[\alpha]\in\mathcal{O}_{\mathrm{nd}}(M)$ we have that $(\alpha,A)$ is well-defined being $\alpha$ compactly supported.
	Moreover, for all $\chi\in\Omega^{k-1}(M)$ and $\eta\in\Omega_{\mathrm{nd}}^k(M)$ it holds
	\begin{align*}
	(\alpha,\mathrm{d}\chi)=
	(\delta\alpha,\chi)+
	(\mathrm{n}\alpha,\mathrm{t}\chi)_\partial=0\,,\qquad
	(\delta\mathrm{d}\eta,A)=
	(\eta,\delta\mathrm{d}A)+
	(\mathrm{nd}\eta,\mathrm{t}A)_\partial-
	(\mathrm{t}\eta,\mathrm{nd}A)_\partial=0\,.
	\end{align*}
	Notice that in the first equation we used the condition $\mathrm{n}\alpha=0$ since $\chi$ has no assigned boundary condition.
	This is opposite to the case of $\delta\mathrm{d}$-tangential boundary conditions, where $\chi$ is required to satisfy $\mathrm{t}\chi=0$ -- \textit{cf.} Definition \ref{Def: configuration space with deltad-tangential and deltad-normal bc} -- and therefore $\alpha$ is not forced to satisfy any boundary condition.
	Actually, the constraints $\delta\alpha=0$ and $\mathrm{n}\alpha=0$ are necessary to ensure gauge-invariance, namely $(\alpha,\mathrm{d}\chi)=0$ for all $\chi\in\Omega^k(M)$.
\end{remark}

\noindent We study the structural properties of the algebra of observables.
On account of its definition, it suffices to focus mainly on the properties of the generators $\mathcal{O}_{\mathrm{t}}(M)$ and $\mathcal{O}_{\mathrm{nd}}(M)$.
In particular, in the next proposition we follow the rationale advocated in \cite{Benini-16} proving that $\mathcal{O}_{\mathrm{t}}(M)$ and $\mathcal{O}_{\mathrm{nd}}(M)$ are {\em optimal}:

\begin{proposition}\label{Prop: sep and opt for the alg of obs with gauge bc}
	Let $\mathcal{O}_{\mathrm{t}}(M),\mathcal{O}_{\mathrm{nd}}(M)$ be as per Definition \ref{Def: alg of obs for gauge bc}.

	Then the pairing $(\;,\;)$ among $k$-forms -- \textit{cf.} equation \eqref{Eq: pairing between k-forms} -- descends to a well-defined pairing
	\begin{align*}
		&\mathcal{O}_{\mathrm{t}}(M)
		\times\operatorname{Sol}_{\mathrm{t}}(M)
		\to\mathbb{C}\,,\qquad
		[\alpha],[A]\mapsto
		(\alpha,A)\\
		&\mathcal{O}_{\mathrm{nd}}(M)
		\times\operatorname{Sol}_{\mathrm{nd}}(M)
		\to\mathbb{C}\,,\qquad
		[\alpha],[A]\mapsto
		(\alpha,A)\,.
	\end{align*}
	Moreover $\mathcal{O}_{\mathrm{t}}(M)$ (\textit{resp}. $\mathcal{O}_{\mathrm{nd}}(M)$) is {\bf optimal} with respect to $\operatorname{Sol}_{\mathrm{t}}(M)$ (\textit{resp.} $\operatorname{Sol}_{\mathrm{nd}}(M)$), namely:
	\begin{enumerate}
		\item 
		$\mathcal{O}_{\mathrm{t}}(M)$ (\textit{resp}. $\mathcal{O}_{\mathrm{nd}}(M)$) is {\em separating} with respect to $\operatorname{Sol}_{\mathrm{t}}(M)$ (\textit{resp.} $\operatorname{Sol}_{\mathrm{nd}}(M)$), that is
		\begin{align}
			\label{Eq: optimality for delta d-tangential bc algebra}
			\forall[A\in\operatorname{Sol}_{\mathrm{t}}(M)\colon
			&([\alpha],[A])
			=0\quad\forall [\alpha]\in\mathcal{O}_{\mathrm{t}}(M)
			\Longrightarrow [A]=[0]\in\operatorname{Sol}_{\mathrm{t}}(M)\,,\\
			\label{Eq: optimality for delta d-normal bc algebra}
			\forall[A]\in\operatorname{Sol}_{\mathrm{nd}}(M)\colon
			&([\alpha],[A])
			=0\quad\forall [\alpha]\in\mathcal{O}_{\mathrm{nd}}(M)
			\Longrightarrow [A]=[0]\in\operatorname{Sol}_{\mathrm{nd}}(M)\,.
		\end{align}
		\item
		$\mathcal{O}_{\mathrm{t}}(M)$ (\textit{resp}. $\mathcal{O}_{\mathrm{nd}}(M)$) is {\em non redundant} with respect to $\operatorname{Sol}_{\mathrm{t}}(M)$ (\textit{resp.} $\operatorname{Sol}_{\mathrm{nd}}(M)$), that is
		\begin{align}
			\label{Eq: non-redundancy for delta d-tangential bc algebra}
			\forall[\alpha]\in\mathcal{O}_{\mathrm{t}}(M)\colon
			&([\alpha],[A])=0
			\quad\forall [A]\in\operatorname{Sol}_{\mathrm{t}}(M)
			\Longrightarrow[\alpha]=[0]\in\mathcal{O}_{\mathrm{t}}(M)\,,\\
			\label{Eq: non-redundancy for delta d-normal bc algebra}
			\forall[\alpha]\in\mathcal{O}_{\mathrm{nd}}(M)\colon
			&([\alpha],[A])=0
			\quad\forall [A]\in\operatorname{Sol}_{\mathrm{nd}}(M)
			\Longrightarrow[\alpha]=[0]\in\mathcal{O}_{\mathrm{nd}}(M)\,,
		\end{align}
	\end{enumerate}
\end{proposition}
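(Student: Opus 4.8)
The plan is to reduce both optimality conditions to a single Poincar\'e--Lefschetz duality statement, used in each of its two slots, after translating everything through the isomorphisms of Proposition \ref{Prop: characterization of solution space in terms of test-forms}. That the pairing $(\alpha,A)$ descends to the quotients is, for $\delta\mathrm{d}$-normal boundary conditions, exactly the content of Remark \ref{Rmk: bc for obs associated to deltad-normal bc}; for $\delta\mathrm{d}$-tangential ones the same two computations apply, now with $\chi\in\Omega^{k-1}_{\mathrm{t}}(M)$ so that $(\alpha,\mathrm{d}\chi)=(\delta\alpha,\chi)+(\mathrm{n}\alpha,\mathrm{t}\chi)_\partial=0$ on account of $\delta\alpha=0$ and $\mathrm{t}\chi=0$, and with $(\delta\mathrm{d}\eta,A)=0$ for $\eta\in\Omega^k_{\mathrm{c,t}}(M)$ following from \eqref{Eq: boundary terms for delta d operator}.

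For non-redundancy I would use that $G_\parallel$ (resp.\ $G_\perp$) identifies $\mathcal{O}_{\mathrm{t}}(M)$ (resp.\ $\mathcal{O}_{\mathrm{nd}}(M)$) with $\operatorname{Sol}^{\mathrm{sc}}_{\mathrm{t}}(M)$ (resp.\ $\operatorname{Sol}^{\mathrm{sc}}_{\mathrm{nd}}(M)$) and $\operatorname{Sol}_{\mathrm{t}}(M)$ (resp.\ $\operatorname{Sol}_{\mathrm{nd}}(M)$) with $\Omega^k_{\mathrm{tc},\delta}(M)/\delta\mathrm{d}\Omega^k_{\mathrm{tc,t}}(M)$ (resp.\ the analogous timelike-compact quotient). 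Given $[\alpha]$ pairing to zero with all of $\operatorname{Sol}_{\mathrm{t}}(M)$, the adjunction \eqref{Eq: adjont of propagators} (extended to the timelike-compact regime as in Remark \ref{Rmk: extension of short exact sequence}) turns the hypothesis into $(\omega,\mu)=0$ for all $\mu\in\Omega^k_{\mathrm{tc},\delta}(M)$, with $\omega\doteq G_\parallel\alpha\in\Omega^k_{\mathrm{sc},\delta}(M)$, $\mathrm{t}\omega=0$, $\Box\omega=0$; it then suffices to show $\omega\in\mathrm{d}\Omega^{k-1}_{\mathrm{sc,t}}(M)$. Testing first against $\mu=\delta\rho$ with $\rho$ timelike compact (and $\mathrm{n}\rho=0$ in the $\perp$ case) and using \eqref{Eq: boundary terms for delta and d} gives $(\mathrm{d}\omega,\rho)=0$; since the pairing between spacelike- and timelike-compact $(k+1)$-forms already separates on forms supported in $\mathring{M}$, this forces $\mathrm{d}\omega=0$, so $\omega$ is closed with the required vanishing of $\mathrm{t}\omega$ (resp.\ $\mathrm{n}\omega$). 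Writing $\nu\doteq\star\mu$, which then runs over all closed timelike-compact $(m-k)$-forms (resp.\ those with $\mathrm{t}\nu=0$), the remaining hypothesis reads $\int_M\omega\wedge\nu=0$, and the spacelike-/timelike-compact Poincar\'e--Lefschetz duality of Appendix \ref{App: Poincare duality for manifold with boundary} — available because $M\cong\mathbb{R}\times\Sigma$ admits a finite good cover — yields $[\omega]=[0]$, hence $[\alpha]=[0]$.

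For separation the argument is Hodge-dual and bypasses $G$. Given $A$ with $\delta\mathrm{d}A=0$ and $\mathrm{t}A=0$ (resp.\ $\mathrm{nd}A=0$) pairing to zero with every admissible $\alpha$, I would first pass to a Lorenz-gauge representative via Proposition \ref{Prop: Lorenz gauge for deltad-tangential bc} (resp.\ \ref{Prop: Lorenz gauge for deltad-normal bc}) — this changes nothing since $(\alpha,\mathrm{d}\chi)=0$ — so that $\delta A=0$ and $\Box A=0$. Pairing with $\alpha=\delta\beta$, $\beta\in\Omega^{k+1}_{\mathrm{c}}(M)$ (resp.\ $\beta\in\Omega^{k+1}_{\mathrm{c,n}}(M)$, which keeps $\alpha$ admissible), and using \eqref{Eq: boundary terms for delta and d}, yields $(\mathrm{d}A,\beta)=0$ for all such $\beta$, whence $\mathrm{d}A=0$ and $A$ is harmonic. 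Then $\star A$ is a closed $(m-k)$-form with vanishing tangential (resp.\ normal) part, and after the substitution $\gamma\doteq\star\alpha$ the hypothesis becomes $\int_M\gamma\wedge A=0$ for every closed compactly supported $(m-k)$-form $\gamma$ (resp.\ every such $\gamma$ with $\mathrm{t}\gamma=0$). By the Poincar\'e--Lefschetz duality of Appendix \ref{App: Poincare duality for manifold with boundary}, which pairs $H^k(M,\partial M)$ non-degenerately with $H^{m-k}_{\mathrm{c}}(M)$ (resp.\ $H^k(M)$ with $H^{m-k}_{\mathrm{c}}(M,\partial M)$), one gets $A=\mathrm{d}\chi$ with $\chi\in\Omega^{k-1}_{\mathrm{t}}(M)$ (resp.\ $\chi\in\Omega^{k-1}(M)$), i.e.\ $[A]=[0]$; note that the different gauge groups of the two theories surface here precisely as the relative versus absolute nature of the duality.

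The main obstacle is the bookkeeping in the duality step: matching the de Rham models — closed forms with $\mathrm{t}=0$, resp.\ $\mathrm{n}=0$, and their compactly supported, spacelike-compact and timelike-compact variants — to the correct relative or absolute Poincar\'e--Lefschetz pairing, and checking that the classes produced by $G_\parallel$, $G_\perp$ and by the Lorenz gauge fixing land in exactly those models. Everything else is the routine Green's-formula manipulations of Section \ref{Sec: on the Maxwell operator}.
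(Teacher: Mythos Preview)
Your overall strategy coincides with the paper's: well-definedness via Green's formula, then separation and non-redundancy each reduced to a Poincar\'e--Lefschetz duality after showing the relevant form is closed. Two points of divergence are worth noting.

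First, in the separation argument you pass to a Lorenz-gauge representative before testing against $\alpha=\delta\beta$. The paper does not do this and does not need to: the integration by parts $(\delta\beta,A)=(\beta,\mathrm{d}A)$ only uses $\mathrm{t}A=0$ (resp.\ $\mathrm{n}\beta=0$), which holds for any representative. So this step is harmless but superfluous.

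Second, and more substantively, in the non-redundancy argument you test $\omega=G_\parallel\alpha$ against all of $\Omega^k_{\mathrm{tc},\delta}(M)$ and then invoke a ``spacelike-/timelike-compact Poincar\'e--Lefschetz duality of Appendix~\ref{App: Poincare duality for manifold with boundary}'' to obtain $\omega\in\mathrm{d}\Omega^{k-1}_{\mathrm{sc,t}}(M)$. That duality is \emph{not} recorded in Appendix~\ref{App: Poincare duality for manifold with boundary}: the appendix only supplies the standard pairings $H^k_{\mathrm{t}}(M)\simeq H_{k,\mathrm{c}}(M)^*$ and $H^k(M)\simeq H_{k,\mathrm{c,n}}(M)^*$ between forms of unrestricted support and compactly supported forms. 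The paper avoids this by taking the more economical route: it tests only against $A=G_\parallel\omega$ with $\omega\in\Omega^k_{\mathrm{c},\delta}(M)$ (these already lie in $\operatorname{Sol}_{\mathrm{t}}(M)$, so the hypothesis applies), lands directly in the duality that \emph{is} available in the appendix, concludes $G_\parallel\alpha=\mathrm{d}\chi$ with $\chi\in\Omega^{k-1}_{\mathrm{t}}(M)$, i.e.\ $[G_\parallel\alpha]=[0]\in\operatorname{Sol}_{\mathrm{t}}(M)$, and then invokes Proposition~\ref{Prop: characterization of solution space in terms of test-forms}. Your route would work if you either supply the sc/tc duality independently, or simply restrict the test class $\mu$ to compactly supported forms, which suffices and matches the paper.
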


\begin{proof}
	Mutatis mutandis, the proof is similar both for the $\delta\mathrm{d}$-tangential and the $\delta\mathrm{d}$-normal boundary conditions.
	
	\paragraph{Proof for $\delta\mathrm{d}$-tangential boundary conditions.}
	As starting point observe that the pairing $([\alpha],[A]):=(\alpha,A)$ is well-defined.
	Indeed let us consider two representatives $A\in [A]\in\operatorname{Sol}_{\mathrm{t}}(M)$ and $\alpha\in[\alpha]\in\mathcal{O}_{\mathrm{t}}(M)$.
	The pairing $(\alpha,A)$ is finite being $\operatorname{supp}(\alpha)$ compact and there is no dependence on the choice of representative.
	As a matter of facts, if $\mathrm{d}\chi\in{\mathrm{d}\Omega_{\mathrm{t}}^{k-1}(M)}$ and $\eta\in\Omega^k_{\mathrm{c,t}}(M)$, it holds
	\begin{align*}
		(\alpha,\mathrm{d}\chi)=
		(\delta\alpha,\chi)+(\mathrm{n}\alpha,\mathrm{t}\chi)_\partial=0\,,\qquad
		(\delta\mathrm{d}\eta,A)=
		(\eta,\delta\mathrm{d}A)+
		(\mathrm{t}\eta,\mathrm{nd}A)_\partial-
		(\mathrm{nd}\eta,\mathrm{t}A)_\partial=0\,,
	\end{align*}
	where in the first equation we used that $\mathrm{t}\chi=0$ as well as $\delta\alpha=0$, while in the second equation we used $\delta\mathrm{d}A=0$ as well as $\mathrm{t}A=\mathrm{t}\eta=0$.
	
	Having established that the pairing between the equivalence classes is well-defined we prove equations \eqref{Eq: optimality for delta d-tangential bc algebra}-\eqref{Eq: non-redundancy for delta d-tangential bc algebra} separately.

	\paragraph{Proof of Equation \eqref{Eq: optimality for delta d-tangential bc algebra}.}
	Assume $\exists [A]\in\operatorname{Sol}_{\mathrm{t}}(M)$ such that $([\alpha],[A])=0$, $\forall [\alpha]\in\mathcal{O}_{\mathrm{t}}(M).$ Working at the level of representative, since $\alpha\in\Omega^k_{\mathrm{c},\delta}(M)$ we can choose $\alpha=\delta\beta$ with $\beta\in\Omega^{k+1}_{\mathrm{c}}(M)$.
	As a consequence $0=(\delta\beta, A)=(\beta,\mathrm{d}A)$ where we used implicitly \eqref{Eq: boundary terms for delta and d} and $\mathrm{t}A=0$.
	The arbitrariness of $\beta$ and the non-degeneracy of $(\;,\;)$ entails $\mathrm{d}A=0$.
	Hence $A$ individuates a de Rham cohomology class in $H^k_{\mathrm{t}}(M)$, {\it cf.} Appendix \ref{App: Poincare duality for manifold with boundary}.
	Furthermore, $([\alpha],[A])=0$ entails $\langle [\alpha],[A]\rangle=0$ where $\langle\;,\;\rangle$ denotes the pairing between $H_{k,\mathrm{c}}(M)$ and $H^k_{\mathrm{t}}(M)$ -- \textit{cf.} Appendix \ref{App: Poincare duality for manifold with boundary}.
	On account of Remark \ref{Rmk: consequence of Poincare--Lefschetz duality} it holds that $\langle\;,\;\rangle$ is non-degenerate and therefore $[A]=0$.
	
	\paragraph{Proof of Equation \eqref{Eq: non-redundancy for delta d-tangential bc algebra}.}
	Assume $\exists[\alpha]\in\mathcal{O}_{\mathrm{t}}(M)$ such that $([\alpha],[A])=0$ $\forall [A]\in\operatorname{Sol}_{\mathrm{t}}(M)$. Working at the level of representatives, we can consider
	$A=G_\parallel\omega$ with $\omega\in\Omega^k_{\mathrm{c},\delta}(M)$, while $\alpha\in\Omega^k_{\mathrm{c},\delta}(M)$. Hence, in view of Proposition \ref{Prop: exact sequence and duality relations}, $0=(\alpha,A)=(\alpha,G_\parallel\omega)=-(G_\parallel\alpha,\omega)$.
	Choosing $\omega=\delta\beta$, $\beta\in\Omega^{k+1}_{\mathrm{c}}(M)$ and using \eqref{Eq: boundary terms for delta and d}, it descends $(\mathrm{d}G_\parallel\alpha,\beta)=0$.
	Since $\beta$ is arbitrary and the pairing is non degenerate $\mathrm{d}G_\parallel\alpha=0$. Since $\mathrm{t}G_\parallel\alpha=0$, it turns out that $G_\parallel\alpha$ individuates an equivalence class $[G_\parallel\alpha]\in H^k_{\mathrm{t}}(M)$.
	Using the same argument of the previous item, $(G_\parallel\alpha,\beta)=0$ for all $\beta\in\Omega^k_{\mathrm{c},\delta}(M)$ entails that  $G_\parallel\alpha=\mathrm{d}\chi$ where $\chi\in\Omega^{k-1}_{\mathrm{t}}(M)$.
	Therefore $[G_\parallel\alpha]=[0]\in\operatorname{Sol}_{\mathrm{t}}(M)$: Proposition \ref{Prop: characterization of solution space in terms of test-forms} entails that $[\alpha]=[0]$.

	\paragraph{Proof for $\delta\mathrm{d}$-normal boundary conditions.}
	The fact that the pairing $([\alpha],[A])$ is well-defined for $[\alpha]\in\mathcal{O}_{\mathrm{nd}}(M)$ and $[A]\in\operatorname{Sol}_{\mathrm{nd}}(M)$ has already been discussed in Remark \ref{Rmk: bc for obs associated to deltad-normal bc}.
	It remains to discuss the proof of equations \eqref{Eq: optimality for delta d-normal bc algebra}-\eqref{Eq: non-redundancy for delta d-normal bc algebra}.

	\paragraph{Proof of Equation \eqref{Eq: optimality for delta d-normal bc algebra}.}
	Let $[A]\in\operatorname{Sol}_{\mathrm{nd}}(M)$ be such that $([\alpha],[A])=0$ for all $[\alpha]\in\mathcal{O}_{\mathrm{nd}}(M)$.
	This implies that $(\alpha,A)=0$ for all $A\in[A]$ and for all $\alpha\in\Omega_{\mathrm{c,n},\delta}^k(M)$. Taking in particular $\alpha=\delta\beta$ with $\beta\in\Omega_{\mathrm{c,n}}^k(M)$ it follows $(\mathrm{d}A,\beta)=0$.
	The non-degeneracy of $(\;,\;)$ implies $\mathrm{d}A=0$, that is $A$ defines an element in $H^k(M)$.
	The hypotheses on $A$ implies that $\langle A,[\eta]\rangle=0$ for all $[\eta]\in H_{k,\mathrm{c,n}}(M)$. The results in Appendix \ref{App: Poincare duality for manifold with boundary} -- \textit{cf.} Remark \ref{Rmk: consequence of Poincare--Lefschetz duality} -- ensure that $A=\mathrm{d}\chi$, therefore $[A]=[0]\in\operatorname{Sol}_{\mathrm{nd}}(M)$.
	
	\paragraph{Proof of Equation \eqref{Eq: non-redundancy for delta d-normal bc algebra}.}
	Let $[\alpha]\in\mathcal{O}_{\mathrm{nd}}(M)$ be such that $([\alpha],[A])=0$ for all $[A]\in\operatorname{Sol}_{\mathrm{nd}}(M)$.
	This implies in particular that, choosing $\alpha\in[\alpha]$ and $A=G_\perp\beta$ with $\beta\in\Omega_{\mathrm{c,n},\delta}^k(M)$,  $0=(\alpha,G_\perp\beta)=-(G_\perp\alpha,\beta)$.
	With the same argument of the first statement it follows that $G_\perp\alpha=\mathrm{d}\chi$ where $\chi\in\Omega^{k-1}(M)$ is such that $\mathrm{nd}\chi=0$.
	Therefore we found that $[G_\perp\alpha]=[0]\in\operatorname{Sol}_{\mathrm{nd}}(M)$: Proposition \ref{Prop: characterization of solution space in terms of test-forms} implies that $[\alpha]=[0]\in\mathcal{O}_{\mathrm{nd}}(M)$.	
\end{proof}

The following corollary translates at the level of algebra of observables the degeneracy of the presymplectic spaces discussed in Proposition \ref{Prop: presymplectomorphism for spacelike solution spaces} -- {\it cf.} Remark \ref{Rmk: on degeneracy on presymplectic structure}.
As a matter of facts, since $\widetilde{G}_\parallel$ (\textit{resp.} $\widetilde{G}_\perp$) can be degenerate, the algebra of observables $\mathcal{A}_{\mathrm{t}}(M)$ (\textit{resp.} $\mathcal{A}_{\mathrm{nd}}(M)$) will possess a non-trivial center.
In other words

\begin{corollary}\label{Cor: non-semi simple alg for gauge bc}
	If $\mathrm{d}\Omega_{\mathrm{sc,t}}^{k-1}(M)\subset\Omega_{\mathrm{sc}}^k(M)\cap\mathrm{d}\Omega_{\mathrm{t}}^{k-1}(M)$ is a strict inclusion, then the algebra $\mathcal{A}_{\mathrm{t}}(M)$ is not semi-simple.
	Similarly, if $\mathrm{d}\Omega_{\mathrm{sc}}^{k-1}(M)\subset\Omega_{\mathrm{sc}}^k(M)\cap\mathrm{d}\Omega^{k-1}(M)$ is a strict inclusion, then the algebra $\mathcal{A}_{\mathrm{nd}}(M)$ is not semi-simple.
\end{corollary}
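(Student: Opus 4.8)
The plan is to reduce the statement to the degeneracy of the presymplectic forms $\sigma_{\mathrm{t}}$, $\sigma_{\mathrm{nd}}$ already analysed in Remark \ref{Rmk: on degeneracy on presymplectic structure}, transport this degeneracy to the generating spaces $\mathcal{O}_{\mathrm{t}}(M)$, $\mathcal{O}_{\mathrm{nd}}(M)$ via the presymplectomorphisms of Proposition \ref{Prop: presymplectomorphism for spacelike solution spaces}, and finally exhibit inside the algebra a nonzero central element which is not a multiple of $\mathbb{I}$, generating a proper nonzero two-sided $*$-ideal. I would carry out the argument in detail for $\mathcal{A}_{\mathrm{nd}}(M)$, the case of $\mathcal{A}_{\mathrm{t}}(M)$ being identical after replacing $\perp$ by $\parallel$, $\widetilde{G}_\perp$ by $\widetilde{G}_\parallel$, $\sigma_{\mathrm{nd}}$ by $\sigma_{\mathrm{t}}$, $\mathcal{O}_{\mathrm{nd}}(M)$ by $\mathcal{O}_{\mathrm{t}}(M)$ and using the hypothesis on $\mathrm{d}\Omega^{k-1}_{\mathrm{sc,t}}(M)$.

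First I would use the last statement of Remark \ref{Rmk: on degeneracy on presymplectic structure}: the assumed strict inclusion $\mathrm{d}\Omega_{\mathrm{sc}}^{k-1}(M)\subsetneq\Omega_{\mathrm{sc}}^k(M)\cap\mathrm{d}\Omega^{k-1}(M)$ is precisely the failure of $(\operatorname{Sol}_{\mathrm{nd}}^{\mathrm{sc}}(M),\sigma_{\mathrm{nd}})$ to be symplectic, i.e.\ $\sigma_{\mathrm{nd}}$ is degenerate. Since Proposition \ref{Prop: presymplectomorphism for spacelike solution spaces} supplies a presymplectomorphism $G_\perp\colon(\mathcal{O}_{\mathrm{nd}}(M),\widetilde{G}_\perp)\to(\operatorname{Sol}_{\mathrm{nd}}^{\mathrm{sc}}(M),\sigma_{\mathrm{nd}})$, the bilinear form $\widetilde{G}_\perp$ is degenerate as well, so there is $[\alpha_0]\in\mathcal{O}_{\mathrm{nd}}(M)$ with $[\alpha_0]\neq[0]$ and $\widetilde{G}_\perp([\alpha_0],[\beta])=0$ for all $[\beta]\in\mathcal{O}_{\mathrm{nd}}(M)$; as $\widetilde{G}_\perp$ is built from the real metric pairing $(\,,\,)$ I may take the representative $\alpha_0\in\Omega^k_{\mathrm{c,n},\delta}(M)$ real, so that $[\alpha_0]^*=[\alpha_0]$. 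Passing to $\mathcal{A}_{\mathrm{nd}}(M)$, the defining relations of $\mathcal{I}[\mathcal{O}_{\mathrm{nd}}(M)]$ give $[\alpha_0]\,[\beta]-[\beta]\,[\alpha_0]=i\,\widetilde{G}_\perp([\alpha_0],[\beta])\,\mathbb{I}=0$ for every $[\beta]\in\mathcal{O}_{\mathrm{nd}}(M)$, and since $\mathcal{O}_{\mathrm{nd}}(M)$ generates $\mathcal{A}_{\mathrm{nd}}(M)$ the element $[\alpha_0]$ lies in the center $Z(\mathcal{A}_{\mathrm{nd}}(M))$. It is nonzero and not proportional to $\mathbb{I}$ because the canonical map $\mathcal{O}_{\mathrm{nd}}(M)\to\mathcal{A}_{\mathrm{nd}}(M)$ is injective and identifies $\mathcal{O}_{\mathrm{nd}}(M)$ with the degree-one component of the symmetrised (Poincar\'e--Birkhoff--Witt type) description of $\mathcal{A}_{\mathrm{nd}}(M)$, which meets $\mathbb{C}\,\mathbb{I}$ only in $\{0\}$.

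Hence $Z(\mathcal{A}_{\mathrm{nd}}(M))\supsetneq\mathbb{C}\,\mathbb{I}$; moreover, $[\alpha_0]$ being central and self-adjoint, the two-sided $*$-ideal it generates equals $[\alpha_0]\,\mathcal{A}_{\mathrm{nd}}(M)$, which is nonzero and \emph{proper}: in the symmetrised description multiplication by the central, radical element $[\alpha_0]$ strictly raises the polynomial degree --- there are no lower-order corrections precisely because $\widetilde{G}_\perp([\alpha_0],\cdot)\equiv0$ --- so $\mathbb{I}\notin[\alpha_0]\,\mathcal{A}_{\mathrm{nd}}(M)$. Thus $\mathcal{A}_{\mathrm{nd}}(M)$ carries a proper nonzero two-sided $*$-ideal and a center strictly larger than the scalars, i.e.\ it is not semi-simple. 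The same conclusion for $\mathcal{A}_{\mathrm{t}}(M)$ follows from the hypothesis $\mathrm{d}\Omega^{k-1}_{\mathrm{sc,t}}(M)\subsetneq\Omega^k_{\mathrm{sc}}(M)\cap\mathrm{d}\Omega^{k-1}_{\mathrm{t}}(M)$, Remark \ref{Rmk: on degeneracy on presymplectic structure}, and the presymplectomorphism $(\mathcal{O}_{\mathrm{t}}(M),\widetilde{G}_\parallel)\cong(\operatorname{Sol}_{\mathrm{t}}^{\mathrm{sc}}(M),\sigma_{\mathrm{t}})$.

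The only step that is not a transcription of earlier results is the claim that $[\alpha_0]$ survives in $\mathcal{A}_{\mathrm{nd}}(M)$ as a nonzero, non-scalar element: this needs enough control on the quotient by $\mathcal{I}[\mathcal{O}_{\mathrm{nd}}(M)]$, namely the injectivity of $\mathcal{O}_{\mathrm{nd}}(M)\hookrightarrow\mathcal{A}_{\mathrm{nd}}(M)$ together with the PBW-type filtration (equivalently, one could invoke a faithful Fock-type $*$-representation built from the non-degenerate quotient $\mathcal{O}_{\mathrm{nd}}(M)/\ker\widetilde{G}_\perp$). This is standard for CCR-type $*$-algebras, see \cite{Benini-16}, and is where I expect the main technical care to be needed; the remainder of the argument is purely formal.
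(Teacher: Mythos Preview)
Your proposal is correct and follows essentially the same route as the paper's proof: both invoke Remark \ref{Rmk: on degeneracy on presymplectic structure} to obtain a nonzero radical element of the presymplectic solution space, transport it to $\mathcal{O}_\sharp(M)$ via the presymplectomorphism of Proposition \ref{Prop: presymplectomorphism for spacelike solution spaces} (the paper uses Proposition \ref{Prop: characterization of solution space in terms of test-forms} to pull back $[A]$ and then Proposition \ref{Prop: presymplectomorphism for spacelike solution spaces} to compare the forms, which amounts to the same thing), and conclude that the resulting generator is central in $\mathcal{A}_\sharp(M)$. The paper's argument is terser: it stops at ``$[\alpha]$ belongs to the center'' and declares the algebra not semi-simple, whereas you additionally verify that $[\alpha_0]$ is nonzero and non-scalar in the quotient and that the ideal it generates is proper---extra care the paper leaves implicit.
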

\begin{proof}
	Since the proof is the same for either $\delta\mathrm{d}$-tangential and $\delta\mathrm{d}$-normal boundary conditions, we shall consider only the first case.

	With reference to Remark \ref{Rmk: on degeneracy on presymplectic structure}, if $\mathrm{d}\Omega_{\mathrm{sc,t}}^{k-1}(M)\subset\Omega_{\mathrm{sc}}^k(M)\cap\mathrm{d}\Omega_{\mathrm{t}}^{k-1}(M)$ is a strict inclusion then there exists an element $[A]\in\operatorname{Sol}_{\mathrm{t}}^{\mathrm{sc}}(M)$ such that $\sigma_{\mathrm{t}}([A],[B])=0$ for all $[B]\in\operatorname{Sol}_{\mathrm{t}}^{\mathrm{sc}}(M)$.
	On account of Proposition \ref{Prop: characterization of solution space in terms of test-forms} there exists $[\alpha]\in\mathcal{O}_{\mathrm{t}}(M)$ such that $[G_\parallel\alpha]=[A]$.
	Moreover, Proposition \ref{Prop: presymplectomorphism for spacelike solution spaces} ensures that $\widetilde{G}_\parallel([\alpha],[\beta])=0$ for all $[\beta]\in\mathcal{O}_{\mathrm{t}}(M)$.
	It follows from Definition \ref{Def: alg of obs for gauge bc} that $[\alpha]$ belongs to the center of $\mathcal{A}_{\mathrm{t}}(M)$, that is, $\mathcal{A}_{\mathrm{t}}(M)$ is not semi-simple.
\end{proof}

\begin{remark}
	Corollary \ref{Cor: non-semi simple alg for gauge bc} has established that the algebra of observables possesses a non trivial center. While from a mathematical viewpoint this feature might not appear of particular significance, it has far reaching consequences from the physical viewpoint. Most notably, the existence of Abelian ideals was first observed in the study of gauge theories in \cite{Dappiaggi:2011zs} leading to an obstruction in the interpretation of these models in the language of locally covariant quantum field theories as introduced in \cite{Brunetti-Fredenhagen-Verch-03}. This issue has been thoroughly studied in \cite{Benini:2013ita,Benini:2013tra,Sanders:2012sf,Wrochna-Zahn:17} turning out to be an intrinsic feature of Abelian gauge theories on globally hyperbolic spacetimes with empty boundary. Corollary \ref{Cor: non-semi simple alg for gauge bc} shows that the same conclusions can be drawn when the underlying manifold possesses a timelike boundary.
\end{remark}

\begin{remark}
	To conclude this section we observe that all algebras of observables that we have constructed obey to the so-called principle of {\em F-locality}.
	This concept was introduced for the first time in \cite{Kay:1992es} and it asserts that, given any globally hyperbolic region $\mathcal{O}\subset\mathring{M}$ the restriction to $\mathcal{O}$ of the algebra of observables built on $M$ is $*$-isomorphic to the one which one would construct intrinsically on $(\mathcal{O},g|_{\mathcal{O}})$.
	In our approach this property is implemented per construction and its proof is a direct generalization of the same argument given in \cite{Dappiaggi:2017wvj}.
	For this reason we omit the details.
\end{remark}

\appendix

\section{Existence of fundamental solutions on ultrastatic spacetimes}\label{App: Existence of fundamental solutions on ultrastatic spacetimes}

In this section we prove that Assumption \ref{Thm: assumption theorem} is verified in a large class of globally hyperbolic spacetimes $(M,g)$ with timelike boundary. These can be characterized by the following two additional hypotheses:
\begin{enumerate}
	\item $(M,g)$ is ultrastatic, that is, 
	with reference to Equation \eqref{eq:line_element}, we impose $\beta=1$ and $h_\tau=h_0$ for all $\tau\in\mathbb{R}$.
	Hence $\partial_\tau$ is a timelike Killing vector field.
	\item The Cauchy surface $(\Sigma,h_0)$ with $\partial\Sigma\neq\emptyset$ is of {\em bounded geometry}, that is there exists an $(m-1)$-dimensional {Riemannian} manifold $(\widehat{\Sigma},\widehat{h})$ of bounded geometry\footnote{Recall that a Riemannian manifold $(N,h)$ with $\partial N=\emptyset$ is called of {\em bounded geometry} if the injectivity radius $r_{\mathrm{inj}}(N)>0$ and $\|\nabla^k R\|_{L^\infty(N)}<\infty$ for all $k\in\mathbb{N}\cup\{0\}$ where $R$ is the scalar curvature while $\nabla$ is the Levi-Civita connection associated to $h$.} such that $\Sigma\subset\widehat{\Sigma}$ and $\widehat{h}|_\Sigma=h_0$.
	In addition, $\partial\Sigma$ is a smooth submanifold of bounded geometry in $\widehat{\Sigma}$.
\end{enumerate}

It is worth recalling that, whenever one considers a complex vector bundle $E$ over $(\Sigma,h_0)$ endowed with both a fiberwise Hermitian product $\langle\;,\;\rangle_E$ and a product preserving connection $\nabla^E$, one can define a suitable notion of Sobolev spaces.
Most notably, let $\Gamma_{\mathrm{me}}(E)$ denote the equivalence classes of measurable sections of $E$.
Then, for all $\ell\in\mathbb{N}\cup\{0\}$, we define
\begin{equation}\label{Eq: Sobolev space}
H^\ell(\Sigma;E)\equiv H^\ell(E)\doteq\{u\in\Gamma_{\mathrm{me}}(E)\;|\;\nabla^j u\in L^2(\Sigma;E\otimes T^*\Sigma^{\otimes j}),\;j\leq \ell\}\,,
\end{equation}
where we omitted the subscript $E$ on $\nabla$ for notational simplicity. The theory of these space has been thoroughly studied in the literature and for the case in hand we refer mainly to \cite{Grosse}.

In the following we study the existence of advanced and retarded fundamental solutions for the D'Alembert - de Rham wave operator $\Box=\mathrm{d}\delta+\delta \mathrm{d}$ acting on $k$-forms. We use a method, first employed in \cite{Dappiaggi-Drago-Ferreira-19} for the special case $k=0$, based on a functional analytic tool known as boundary triples, see for example \cite{Behrndt-Langer-12}. In order to be self-consistent, we will recall the necessary definitions and results from this paper, to which we refer for further details. The main ingredient is the following:

\begin{Definition}\label{Def: boundary triples}
	Let $\mathsf{H}$ be a separable Hilbert space over $\mathbb{C}$ and let $S:D(S)\subset\mathsf{H}\to\mathsf{H}$ be a closed, symmetric, linear operator. A {\bf boundary triple} for the adjoint operator $S^*$ is a triple $(\mathsf{h},\gamma_0,\gamma_1)$ consisting of a separable Hilbert space $\mathsf{h}$ over $\mathbb{C}$ and of two linear maps $\gamma_i:D(S^*)\to\mathsf{h}$, $i=0,1$ such that 
	$$(S^*f|f^\prime)_{\mathsf{H}}-(f|S^*f^\prime)_{\mathsf{H}}=(\gamma_1 f|\gamma_0 f^\prime)_{\mathsf{h}}-(\gamma_0 f|\gamma_1 f^\prime)_{\mathsf{h}}\,,
	\quad\forall f,f^\prime\in D(S^*)\,,$$
	In addition the map $\gamma:D(S^*)\to\mathsf{h}\times\mathsf{h}$ such that $f\mapsto (\gamma_0(f),\gamma_1(f))$ is surjective.
\end{Definition}

Boundary triples are a convenient tool to characterize the self-adjoint extensions of a large class of linear operators. Before discussing a few notable result we need to define the following additional structures \cite{Behrndt-Langer-12}.

\begin{Definition}\label{Def: Relation}
	Let $\mathsf{H}$ be a Hilbert space over $\mathbb{C}$. We call {\bf (linear) relation} over $\mathsf{H}$, a subspace $\Theta\subseteq\mathsf{H}\times\mathsf{H}$.
	{
		The domain of a linear relation $\Theta$ is $\operatorname{dom}(\Theta)\subseteq\mathsf{H}$ defined by
		\begin{align*}
			\operatorname{dom}(\Theta)
			:=\{
			f\in\mathsf{H}\,|\,
			\exists f'\in\mathsf{H}\,,\,(f,f')\in\Theta
			\}\,.
		\end{align*}
	}
	We indicate respectively with adjoint $\Theta^*$ and inverse $\Theta^{-1}$ of $\Theta$ as the sets
	\begin{align}\label{Eq: adjoint relation}
		\Theta^*
		\doteq\{(f,f^\prime)\in\mathsf{H}\times\mathsf{H}\;|\;(f^\prime,g)=(f,g^\prime)\;\forall (g,g^\prime)\in\Theta\}\,,
	\end{align}
	and
	$$\Theta^{-1}\doteq\{(f,f^\prime)\in\mathsf{H}\times\mathsf{H}\;|\;(f^\prime,f)\in\Theta\}.$$
	Consequently we say that $\Theta$ is self-adjoint if $\Theta=\Theta^*$
\end{Definition}

\noindent The proof of the following proposition can be found in \cite{Malamud}.

\begin{proposition}\label{Prop: Self-Adjoint Extensions}
	Let $S:D(S)\subseteq\mathsf{H}\to\mathsf{H}$ be a closed, symmetric operator.
	Then $S$ admits a boundary triple $(\mathsf{h},\gamma_0,\gamma_1)$ if and only if it admits self-adjoint extensions.
	If $\Theta$ is a closed, densely defined linear relation {on $\mathsf{h}$}, then $S_\Theta\doteq S^*|_{\ker(\gamma_1-\Theta\gamma_0)}$ is a closed extension of $S$
	where $\ker(\gamma_1-\Theta\gamma_0)\doteq\{\psi\in{\mathsf{H}}\;|\;(\gamma_0\psi,\gamma_1\psi)\in\Theta\}$.
	In addition the map $\Theta\to S_\Theta$ is one-to-one and $S^*_\Theta=S_{\Theta^*}$.
	Hence there is a one-to-one correspondence between self-adjoint relations $\Theta$ and self-adjoint extensions of $S$.
\end{proposition}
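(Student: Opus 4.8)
The statement is a cornerstone of the theory of boundary triples, and the plan is to prove it in a few structural steps, the heart being the interplay between the abstract Green identity of Definition~\ref{Def: boundary triples} and the definition~\eqref{Eq: adjoint relation} of the adjoint relation; I would follow the rationale of \cite{Behrndt-Langer-12}. \textbf{Existence of a boundary triple.} First I would invoke von Neumann's theory: a closed symmetric $S$ admits self-adjoint extensions if and only if its deficiency indices $n_\pm\doteq\dim\ker(S^*\mp i)$ coincide. If a boundary triple $(\mathsf h,\gamma_0,\gamma_1)$ exists, then $S^*|_{\ker\gamma_0}$ is a self-adjoint extension of $S$: symmetry is immediate from the Green identity, while maximal symmetry follows from the surjectivity of $\gamma\doteq(\gamma_0,\gamma_1)$. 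Conversely, when $n_+=n_-$ I would construct a boundary triple explicitly from the decomposition $D(S^*)=D(S)\dotplus\ker(S^*-i)\dotplus\ker(S^*+i)$, taking $\mathsf h\doteq\ker(S^*-i)$, a unitary $V\colon\ker(S^*-i)\to\ker(S^*+i)$, and letting $\gamma_0,\gamma_1$ be suitable linear combinations of the two deficiency components of $f\in D(S^*)$; a direct computation then verifies both the Green identity and the surjectivity of $\gamma$.

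\textbf{Structural facts and the extension $S_\Theta$.} The key preliminary observation, used throughout, is that $\ker\gamma=\ker\gamma_0\cap\ker\gamma_1=D(S)$: for $f\in D(S)$ and any $f'\in D(S^*)$ the left-hand side of the Green identity vanishes (since $S=S^{**}$), so $(\gamma_1 f'\,|\,\gamma_0 f)-(\gamma_0 f'\,|\,\gamma_1 f)=0$ for all $f'$, and surjectivity of $\gamma$ forces $\gamma_0 f=\gamma_1 f=0$; conversely, if $\gamma f=0$ the same identity yields $f\in D(S^{**})=D(S)$. One also checks, via the closed graph theorem, that $\gamma$ is bounded from $D(S^*)$ with its graph norm into $\mathsf h\times\mathsf h$. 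It then follows that $\ker(\gamma_1-\Theta\gamma_0)=\gamma^{-1}(\Theta)$ is closed in the graph norm whenever $\Theta$ is closed, hence $S_\Theta$ is closed, and $D(S)=\ker\gamma\subseteq\gamma^{-1}(\Theta)$ gives $S\subseteq S_\Theta\subseteq S^*$. Bijectivity of $\Theta\mapsto S_\Theta$ onto the closed extensions lying between $S$ and $S^*$ comes next: given such an extension $\widetilde S$, the relation $\Theta\doteq\{(\gamma_0 f,\gamma_1 f)\mid f\in D(\widetilde S)\}$ satisfies $S_\Theta=\widetilde S$ by surjectivity of $\gamma$ and $\ker\gamma=D(S)$, while injectivity is clear since $D(S_\Theta)$ recovers $\gamma(D(S_\Theta))=\Theta$.

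\textbf{Identification of the adjoint.} This is the crux. Since $S\subseteq S_\Theta\subseteq S^*$ one has $S\subseteq S_\Theta^*\subseteq S^*$, so $f'\in D(S_\Theta^*)$ if and only if $(S^*f\,|\,f')-(f\,|\,S^*f')=0$ for all $f\in D(S_\Theta)$. By the Green identity this is equivalent to $(\gamma_1 f\,|\,\gamma_0 f')=(\gamma_0 f\,|\,\gamma_1 f')$ for all $f\in D(S_\Theta)$, and since $(\gamma_0 f,\gamma_1 f)$ ranges over all of $\Theta$ this is precisely the membership condition $(\gamma_0 f',\gamma_1 f')\in\Theta^*$ of~\eqref{Eq: adjoint relation}. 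Hence $D(S_\Theta^*)=\gamma^{-1}(\Theta^*)$, i.e.\ $S_\Theta^*=S_{\Theta^*}$. Combining this with injectivity of $\Theta\mapsto S_\Theta$, the extension $S_\Theta$ is self-adjoint exactly when $\Theta=\Theta^*$, which gives the asserted one-to-one correspondence between self-adjoint relations on $\mathsf h$ and self-adjoint extensions of $S$.

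I expect the main obstacle to lie in the first part — producing a genuine boundary triple from the deficiency subspaces and, more importantly, establishing the "soft" properties that $\gamma$ is graph-norm bounded with kernel exactly $D(S)$ and that $S^*|_{\ker\gamma_0}$ is self-adjoint. Once these are secured, the remaining steps are essentially formal manipulations of the Green identity together with the definition~\eqref{Eq: adjoint relation} of $\Theta^*$.
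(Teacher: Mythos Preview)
The paper does not prove this proposition at all: it simply states that ``the proof of the following proposition can be found in \cite{Malamud}''. Your proposal therefore goes well beyond what the paper does, supplying an actual argument rather than a citation.

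That said, the outline you give is the standard one in the boundary triples literature (essentially the route in \cite{Behrndt-Langer-12} and \cite{Malamud}), and it is sound. The identification $\ker\gamma=D(S)$ via the Green identity and surjectivity of $\gamma$, the closedness of $S_\Theta$ from the graph-norm continuity of $\gamma$, and the computation $S_\Theta^*=S_{\Theta^*}$ by rewriting the adjoint condition through the Green identity are exactly the ingredients one needs. One minor point: in the bijectivity step you should be slightly careful that the closed extensions between $S$ and $S^*$ correspond to \emph{closed} relations $\Theta$ (not necessarily densely defined), so the ``closed, densely defined'' hypothesis in the statement is actually stronger than needed for that part; this does not affect the self-adjoint correspondence, since self-adjoint relations are automatically closed.
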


In order to apply these tools to the case in hand, first of all we need to recall that our goal is that of constructing advanced and retarded fundamental solutions for the D'Alembert de-Rham wave operator $\Box$ acting on $k$-forms. In other words, calling as $\Lambda^k T^*\mathring{M}$ the $k$-th exterior power of the cotangent bundle over $\mathring{M}$, $k\geq 1$, and with $\boxtimes$ the external tensor product, we look for $G^\pm\in\Gamma_{\mathrm{c}}(\Lambda^kT^* \mathring{M}\boxtimes\Lambda^k T^*\mathring{M})^\prime$ such that 
$$\Box\circ G^\pm = G^\pm\circ\Box= \operatorname{Id}|_{\Gamma_{\mathrm{c}}(\Lambda^kT^* \mathring{M}\boxtimes\Lambda^k T^*\mathring{M})}\,,$$
while $\textrm{supp}(G^\pm(\omega))\subseteq J^\pm(\textrm{supp}(\omega))$ for all $\omega\in\Gamma_{\mathrm{c}}(\Lambda^k T^*\mathring{M})$ -- \textit{cf.} Assumption \ref{Thm: assumption theorem}.
Working at the level of integral kernels and setting $G^\pm(\tau-\tau^\prime,x,x^\prime)=\theta[\pm(\tau-\tau^\prime)]G(\tau-\tau^\prime,x,x^\prime)$, this amounts to solving the following distributional, initial value problem 
\begin{equation}\label{Eq: system for G}
\left(\Box\otimes\mathbb{I}\right) G = \left(\mathbb{I}\otimes\Box\right) G = 0\,,\qquad
G|_{\tau=\tau^\prime}=0\,,\qquad
\partial_\tau G|_{\tau=\tau^\prime}=\delta_{\textrm{diag}(\mathring{M})}\,.
\end{equation}
where $\delta_{\textrm{diag}(\mathring{M})}$ stands for the bi-distribution yielding $\delta_{\textrm{diag}(\mathring{M})}(\omega_1\boxtimes\omega_2)=(\omega_1,\omega_2)$ for all $\omega_1,\omega_2\in\Gamma_{\mathrm{c}}(\Lambda^kT^*\mathring{M})$. Since we have assumed that the underlying spacetime $(M,g)$ is ultrastatic, Equation \eqref{eq:line_element} entails that, up to a global irrelevant sign depending on the convention used for the metric signature, \cite{Pfenning:2009nx} 
$$\Box=\partial_\tau^2+S,$$
where $S$ is a uniformly elliptic operator whose local form can be found in \cite{Pfenning:2009nx}. This entails that, in order to construct solutions of \eqref{Eq: system for G}, we can follow the rationale outlined in \cite{Dappiaggi-Drago-Ferreira-19}. 

To this end we start by focusing our attention on $S$ analysing it within the framework of boundary triples.
Our first observation consists of noticing, that being $(M,g)$ globally hyperbolic, Theorem \ref{Thm: globally hyperbolic spacetime with time-like boundary} ensures that $M$ is diffeomorphic to $\mathbb{R}\times\Sigma$.
Leaving implicit the identification $M\simeq\mathbb{R}\times\Sigma$ and recalling Theorem \ref{Thm: globally hyperbolic spacetime with time-like boundary}, let us indicate with $\iota_\tau\colon\Sigma\to M$ the (smooth one-parameter group of) embedding maps which realizes $\Sigma$ at time $\tau$ as $\iota_\tau\Sigma=\lbrace \tau\rbrace\times\Sigma\doteq\Sigma_\tau$. It holds $\Sigma_\tau\simeq\Sigma_{\tau^\prime}$ for all $\tau,\tau^\prime\in\mathbb{R}$.
	If follows that, on account of Theorem \ref{Thm: globally hyperbolic spacetime with time-like boundary}, for all $\omega\in\Omega^k(M)$ and $\tau\in\mathbb{R}$, $\omega|_{\Sigma_\tau}\in \Gamma(\iota_\tau^*\Lambda^kT^*M)$.
	Here $\iota_\tau^*(\Lambda^kT^*M)$ denotes the pull-back bundle over $\Sigma_\tau\simeq\Sigma$ built out of $\Lambda^kT^*M$ via $\iota_\tau$ -- \textit{cf.} \cite{Husemoller-94}.
	Moreover, recalling Definition \ref{Def: tangential and normal component}, it holds that $\omega|_{\Sigma_\tau}$ can be further decomposed as
	\begin{align*}
	\omega|_{\Sigma_\tau}:=
	(\star_{\Sigma_\tau}^{-1}\iota_\tau^*\star_M)\omega\wedge\mathrm{d}\tau+
	\iota_\tau^*\omega=
	\mathrm{n}_{\Sigma_\tau}\omega\wedge\mathrm{d}\tau+
	\mathrm{t}_{\Sigma_\tau}\omega\,.
	\end{align*}
	where $\mathrm{t}_{\Sigma_\tau}\omega\in\Omega^k(\Sigma_\tau)$ while $\mathrm{n}_{\Sigma_\tau}\omega\in\Omega^{k-1}(\Sigma_\tau)$ -- \textit{cf.} Definition \ref{Def: tangential and normal component}.
	Barring the identification between $\Sigma_\tau$ and $\Sigma_{\tau^\prime}$ the latter decomposition induces the isomorphisms
	\begin{align}
	&\Gamma(\iota_\tau^*\Lambda^kT^*M)\simeq \Omega^{k-1}(\Sigma)\oplus \Omega^k(\Sigma)\,,\qquad
	\omega\to(\omega_0\oplus\omega_1)\,\\
	\label{Eq: identification isomorphism for k-forms on ultrastatic spacetimes}
	&\Omega^k(M)\to C^\infty(\mathbb{R},\Omega^{k-1}(\Sigma))\oplus C^\infty(\mathbb{R},\Omega^k(\Sigma))\,,\qquad
	\omega\to(\tau\mapsto\mathrm{t}_{\Sigma_\tau}\omega)\oplus(\tau\mapsto\mathrm{n}_{\Sigma_\tau}\omega)\,.
	\end{align}
	Furthermore a direct computation shows that, for all $\omega\in\Omega^k(M)$, it holds that
	\begin{align*}
	S\omega|_{\Sigma_\tau}=(-\Delta_{k-1}\mathrm{t}_{\Sigma_\tau}\omega)\wedge \mathrm{d}\tau-\Delta_k\mathrm{n}_{\Sigma_\tau}\omega\,,
	\end{align*}
	where $\Delta_k$ is the Laplace-Beltrami operator acting on $k$-forms, built out of $h_0$.

Putting together all these data and working in the language of Definition \ref{Def: boundary triples}, we can consider the following building blocks:
\begin{enumerate}
	\item  As Hilbert space we set 
	\begin{align*}
	\mathsf{H}\equiv L^2(\Omega^{k-1}(\Sigma))\oplus L^2(\Omega^k(\Sigma))\,,
	\end{align*}
	where $L^2(\Omega^k(\Sigma))$ is the closure of $\Omega^k_{\mathrm{c}}(\Sigma)$ with respect to the pairing $(\;,\;)_\Sigma$ between $k$-forms, {\it i.e.}, $(\alpha,\beta)_{\Sigma}=\int\limits_\Sigma \alpha\wedge *_\Sigma\beta$ for all $\alpha,\beta\in\Omega^k_{\mathrm{c}}(\Sigma)$.	
	\item We identify with a slight abuse of notation $S$ with $(-\Delta_{k-1})\oplus(-\Delta_k)$ where  $\Delta_k$ is the Laplace-Beltrami operator built out of $h_0$ acting on $k$-forms.
\end{enumerate}
Observe that $S$ can be regarded as an Hermitian and densely defined operator on $H^2_0(\Lambda^{k-1}T^*\Sigma)\oplus H^2_0(\Lambda^kT^*\Sigma)$ where $H^2_0(\Lambda^kT^*\Sigma)$ is the closure of $\Omega^k_{\mathrm{c}}(\mathring{\Sigma})$ with respect to the $H^2(\Lambda^kT^*\Sigma)$-norm -- \textit{cf.} Equation \eqref{Eq: Sobolev space} with $E\equiv\Lambda^kT^*\Sigma$.
In this case both the inner product and the connection are those induced from the underlying metric $h_0$.
{
	Standard arguments entail that $S$ is a closed symmetric operator on $\mathsf{H}$ whose adjoint $S^*$ is defined on the maximal domain $D(S^*)\doteq\{(\omega_0\oplus\omega_1)\in\mathsf{H}\;|\;S(\omega_0\oplus\omega_1)\in\mathsf{H}\}$, with $S^*(\omega_0\oplus\omega_1)=S(\omega_0\oplus\omega_1)$ for all $\omega_0\oplus\omega_1\in D(S^*)$.
	In addition the deficiency indices of $S^*$ coincide. Therefore $S$ admits self-adjoint extensions, which can be described as per Proposition \ref{Prop: Self-Adjoint Extensions}. 
}
In order to realize explicitly {a boundary triple for $S^*$}, we start by observing that, since $\partial\Sigma\neq\emptyset$, we can introduce the standard trace map between Sobolev spaces, {\it i.e.}, for every $\ell\geq\frac{1}{2}$ there exists a continuous surjective map $\mathrm{res}_\ell:H^\ell(\Lambda^kT^*\Sigma)\to H^{\ell-\frac{1}{2}}(\iota_{\partial\Sigma}^*\Lambda^kT^*\Sigma)$ whose action on $\Omega^k_{\mathrm{c}}(\Sigma)$ coincides with the restriction to $\partial\Sigma$ for every $\ell$ {-- here $\iota_{\partial\Sigma}^*\Lambda^kT^*\Sigma$ denotes the pull-back bundle with respect to the inclusion $\iota_{\partial\Sigma}\colon\partial\Sigma\to\Sigma$.}
This last property allows us to better characterize the action of the restriction map, since, for every $\alpha\in\Omega^k_{\mathrm{c}}(\Sigma)$ a straightforward computations shows that, for all $\ell\geq\frac{1}{2}$ 
$$\alpha|_{\partial\Sigma}=\mathrm{res}_\ell\alpha=\alpha_0+\alpha_1\wedge \mathrm{d}x\,,$$
where, up to an irrelevant isomorphism, we can identify $\alpha_0\equiv\mathrm{t}_{\partial\Sigma}\alpha$ and $\alpha_1\equiv\mathrm{n}_{\partial\Sigma}\alpha$ -- \textit{cf.} Definition \ref{Def: tangential and normal component}.
Here, for every $p\in\partial\Sigma$, $\mathrm{d}x$ is the basis element of $T^*_pM$ such that $\mathrm{d}x(\nu_p)=1$ {while $\mathrm{d}x(X)|_p=0$ for all smooth vector fields $X\in\Gamma(T\Sigma)$ tangent to $\partial\Sigma$} -- here $\nu_p$ is the outward pointing, unit vector normal to $\partial\Sigma$ at $p$.
With this observation in mind and following mutatis mutandis the same analysis of \cite{Dappiaggi-Drago-Ferreira-19} for the scalar case, we can construct the following boundary triple for $S^*$
\begin{itemize}
	\item
	$\mathsf{h}=\mathsf{h}_0\oplus\mathsf{h}_1$ where {$\mathsf{h}_0\doteq L^2(\Omega^{k-2}(\partial\Sigma))\oplus L^2(\Omega^{k-1}(\partial\Sigma))$} while $\mathsf{h}_1=L^2(\Omega^{k-1}(\partial\Sigma))\oplus L^2(\Omega^{k}(\partial\Sigma))$; 
	\item
	the map $\gamma_0:D(S^*)\to\mathsf{h}$ such that, for all $\omega_0\oplus\omega_1\in D(S^*)$, 
	\begin{equation}\label{Eq: gamma0}
	\gamma_0(\omega_0\oplus\omega_1)= (\mathrm{n}_{\partial\Sigma}\omega_0\oplus
	\mathrm{t}_{\partial\Sigma}\omega_0)\oplus
	(\mathrm{n}_{\partial\Sigma}\omega_1\oplus
	\mathrm{t}_{\partial\Sigma}\omega_1)\,.
	\end{equation}
	\item
	the map $\gamma_1:D(S^*)\to\mathsf{h}$ such that, for all $\omega_0\oplus\omega_1\in D(S^*)$, 
	\begin{equation}\label{Eq: gamma1}
	\gamma_1(\omega_0\oplus\omega_1)=
	(\mathrm{t}_{\partial\Sigma}\delta_\Sigma\omega_0\oplus
	\mathrm{n}_{\partial\Sigma}\mathrm{d}_{\Sigma}\omega_0)\oplus
	(\mathrm{t}_{\partial\Sigma}\delta_\Sigma\omega_1\oplus
	\mathrm{n}_{\partial\Sigma}\mathrm{d}_{\Sigma}\omega_1)\,,
	\end{equation}
	where with a slight abuse of notation we denote still with $\mathrm{d}_\Sigma$ and $\delta_\Sigma$ the extension to the space of square-integrable $k$-forms of the action of the differential and of the codifferential on $\Omega^k_{\mathrm{c}}(\Sigma)$.
\end{itemize}

In view of Proposition \ref{Prop: Self-Adjoint Extensions} we can follow slavishly the proof of \cite[Th. 30]{Dappiaggi-Drago-Ferreira-19} to infer the following statement:

\begin{theorem}\label{Thm: existence of propagators}
	Let $(M,g)$ be an ultrastatic and globally hyperbolic spacetime with timelike boundary {and of bounded geometry}.
	Let $(\mathsf{h},\gamma_0,\gamma_1)$ be the boundary triple built as per Equation \eqref{Eq: gamma0} and \eqref{Eq: gamma1} associated to the operator $S^*$. Let $\Theta$ be a self-adjoint relation on $\mathsf{h}$ as per Definition \ref{Def: Relation} and let $S_\Theta\doteq S^*|_{D(S_\Theta)}$ where $D(S_\Theta)=\ker(\gamma_1-\Theta\gamma_0)$, {\it cf.} Proposition \ref{Prop: Self-Adjoint Extensions}.
	If the spectrum of $S_\Theta$ is bounded from below, then there exists unique advanced and retarded Green's operator $G^\pm_\Theta$ associated to $\partial_\tau^2+S_\Theta$.
	They are completely determined in terms of the bidistributions $G^\pm_\Theta=\theta[\pm(\tau-\tau^\prime)]G_\Theta$ where $G_\Theta\in\Gamma_{\mathrm{c}}(\Lambda^kT^*\mathring{M}\boxtimes\Lambda^kT^*\mathring{M})'$ is such that for $\omega_1,\omega_2\in\Gamma_{\mathrm{c}}(\Lambda^kT^*\mathring{M})$,
	\begin{align*}
	G_\Theta(\omega_1,\omega_2)=
	\int_{\mathbb{R}^2}
	\left(\omega_1|_{\Sigma},S^{-\frac{1}{2}}_{k,\Theta}\sin(S^{\frac{1}{2}}_{k,\Theta}(\tau-\tau^\prime))\omega_2|_{\Sigma}\right)_{\Sigma}
	\mathrm{d}\tau\mathrm{d}\tau'\,,
	\end{align*}
	where $(\;,\;)_\Sigma$ stands for the pairing between $k$-forms and where $\omega_2$ identifies an element in $D(S_\Theta)$ via the identifications \eqref{Eq: identification isomorphism for k-forms on ultrastatic spacetimes}.
	Moreover it holds that 
	\begin{align}
	G^\pm_\Theta\omega\in\ker(\gamma_1-\Theta\gamma_0)\,,\quad
	\forall\omega\in\Gamma_{\mathrm{c}}(\Lambda^kT^*\mathring{M})\,.
	\end{align}
\end{theorem}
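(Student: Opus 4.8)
The plan is to follow \emph{verbatim} the argument of \cite[Th.~30]{Dappiaggi-Drago-Ferreira-19}, which builds the Green operators for a wave-type operator $\partial_\tau^2+S_\Theta$ out of the spectral calculus of the spatial operator $S_\Theta$. Since $(M,g)$ is ultrastatic, Equation \eqref{eq:line_element} together with \cite{Pfenning:2009nx} gives $\Box=\partial_\tau^2+S$, and under \eqref{Eq: identification isomorphism for k-forms on ultrastatic spacetimes} the operator $S$ is identified with $(-\Delta_{k-1})\oplus(-\Delta_k)$ on $\mathsf{H}$, while by Proposition \ref{Prop: Self-Adjoint Extensions} the domain of $S_\Theta$ is $\ker(\gamma_1-\Theta\gamma_0)$. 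By hypothesis $\sigma(S_\Theta)\subseteq[-c,\infty)$ for some $c\geq 0$. I would then note that $z\mapsto\frac{\sin(\sqrt z\,\tau)}{\sqrt z}=\sum_{n\geq 0}\frac{(-1)^n z^n\tau^{2n+1}}{(2n+1)!}$ is entire and, for each fixed $\tau\in\mathbb{R}$, bounded on $[-c,\infty)$, so that the Borel functional calculus yields a bounded operator $W_\Theta(\tau):=S_\Theta^{-1/2}\sin(S_\Theta^{1/2}\tau)$ on $\mathsf{H}$, strongly smooth in $\tau$, with $W_\Theta(0)=0$, $\partial_\tau W_\Theta(0)=\operatorname{Id}$ and $\partial_\tau^2 W_\Theta(\tau)=-S_\Theta W_\Theta(\tau)$ on $D(S_\Theta)$. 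One sets $G_\Theta(\omega_1,\omega_2):=\int_{\mathbb{R}^2}\big(\omega_1|_{\Sigma},W_\Theta(\tau-\tau')\,\omega_2|_{\Sigma}\big)_\Sigma\,\mathrm{d}\tau\,\mathrm{d}\tau'$, with $\omega_i|_\Sigma$ read as the $\tau$-dependent family of restrictions to $\Sigma_\tau$, and $G^\pm_\Theta:=\theta[\pm(\tau-\tau')]\,G_\Theta$.

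Differentiating under the spectral integral and using the identities for $W_\Theta$, the operators $(\partial_\tau^2+S)\otimes\mathbb{I}$ and $\mathbb{I}\otimes(\partial_\tau^2+S)$ annihilate $G_\Theta$, while the Heaviside cutoff produces the jump $\partial_\tau G^\pm_\Theta|_{\tau=\tau'}=\delta_{\mathrm{diag}(\mathring{M})}$; this gives $\Box\circ G^\pm_\Theta=G^\pm_\Theta\circ\Box=\operatorname{Id}$ on $\Gamma_{\mathrm{c}}(\Lambda^kT^*\mathring{M})$, i.e. the first two identities of \eqref{Eq: properties of advanced and retarded propagators}. For the membership claim, observe that for $\omega\in\Gamma_{\mathrm{c}}(\Lambda^kT^*\mathring{M})$ each spatial slice lies in $\Omega^{k-1}_{\mathrm{c}}(\mathring\Sigma)\oplus\Omega^k_{\mathrm{c}}(\mathring\Sigma)$, hence in $\bigcap_n D(S_\Theta^n)$ (it belongs to the domain of the minimal operator, common to every self-adjoint extension, and is mapped into itself by the differential operator $S$); since $W_\Theta(\tau)$ commutes with $S_\Theta$ on these domains, the Duhamel representation $G^+_\Theta\omega(\tau)=\int_{-\infty}^{\tau}W_\Theta(\tau-\tau')\,\omega|_{\Sigma_{\tau'}}\,\mathrm{d}\tau'$ shows $G^+_\Theta\omega(\tau)\in D(S_\Theta)=\ker(\gamma_1-\Theta\gamma_0)$ for each $\tau$, and analogously for $G^-_\Theta$; smoothness in $\tau$ is automatic, and regularity up to $\partial\Sigma$ — needed when matching Assumption \ref{Thm: assumption theorem} for the boundary conditions of interest — follows from elliptic estimates for $S_\Theta$.

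The remaining, and genuinely delicate, point is the causal support property $\operatorname{supp}(G^\pm_\Theta\omega)\subseteq J^\pm(\operatorname{supp}\,\omega)$, where $J^\pm$ is the causal future/past in $M$ (which, $\partial M$ being timelike, already encodes reflections at the boundary). Because $S$ has the same principal symbol as the Laplace--Beltrami operator of $(\Sigma,h_0)$, the operator $\partial_\tau^2+S$ is normally hyperbolic with propagation speed one, and I would establish the support bound for $W_\Theta(\tau)$ by a localized energy estimate on truncated light cones: for a solution $\psi$ of $(\partial_\tau^2+S_\Theta)\psi=0$ one controls the nonnegative energy $E(\tau)=\tfrac12\|\partial_\tau\psi\|^2+\tfrac12(\psi,(S_\Theta+c)\psi)$, and the only contribution absent in the boundaryless case is the flux through $\mathbb{R}\times\partial\Sigma$, which is annihilated precisely by $(\gamma_0\psi,\gamma_1\psi)\in\Theta$ together with $\Theta=\Theta^*$ — exactly as in \cite{Dappiaggi-Drago-Ferreira-19}. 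This is the step I expect to require the most care, since it is here that the self-adjointness of the boundary relation, rather than mere functional-analytic input, enters. Uniqueness of $G^\pm_\Theta$ then follows by the argument of Corollary \ref{Cor: uniqueness}, using that $\partial_\tau^2+S_\Theta$ is formally self-adjoint on $\ker(\gamma_1-\Theta\gamma_0)$ because $\Theta=\Theta^*$, and the explicit formula in the statement is simply the $G_\Theta$ above, with $\omega_2$ read as an element of $D(S_\Theta)$ via \eqref{Eq: identification isomorphism for k-forms on ultrastatic spacetimes}.
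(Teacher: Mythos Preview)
Your proposal is correct and takes essentially the same approach as the paper: the paper's own ``proof'' consists of the single sentence ``we can follow slavishly the proof of \cite[Th.~30]{Dappiaggi-Drago-Ferreira-19},'' and you have done exactly that, supplying a faithful sketch of the spectral-calculus construction of $W_\Theta(\tau)=S_\Theta^{-1/2}\sin(S_\Theta^{1/2}\tau)$, the Duhamel formula, and the energy-estimate argument for causal support. Your elaboration is more detailed than what the paper provides, but the strategy is identical.
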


The last step consists of proving that the boundary conditions introduced in Definition \ref{Def: Dirichlet, Box-tangential, Box-normal, Robin Box-tangential, Robin Box-normal boundary conditions} fall within the class considered in Theorem \ref{Thm: existence of propagators}. In the following proposition we adopt for simplicity the notation $\mathrm{t}=\mathrm{t}_{\partial\Sigma}$, $\mathrm{n}=\mathrm{n}_{\partial\Sigma}$, $\mathrm{nd}=\mathrm{n}_{\partial\Sigma}\mathrm{d}_{\partial\Sigma}$, $\mathrm{t}\delta=\mathrm{t}_{\partial\Sigma}\delta_\Sigma$.

\begin{proposition}\label{Prop: self-adjoint relation for parallel-, perp- and f,0- boundary conditions}
	The following relations on $\mathsf{h}$ are selfadjoint:
	\begin{align}
	\label{Eq: parallel-relation}
	\Theta_\parallel&\doteq\lbrace
	(\mathrm{n}\omega_0\oplus
	0\oplus
	\mathrm{n}\omega_1\oplus
	0
	\;;\;
	0\oplus
	\mathrm{nd}\omega_0\oplus
	0\oplus
	\mathrm{nd}\omega_1
	)\;|\;\omega_0\oplus\omega_1\in D(S^*)\rbrace
	\\
	\label{Eq: perp-relation}
	\Theta_\perp&\doteq\lbrace
	(0\oplus
	\mathrm{t}\omega_0\oplus
	0\oplus
	\mathrm{t}\omega_1
	\;;\;
	\mathrm{t}\delta\omega_0\oplus
	0\oplus
	\mathrm{t}\delta\omega_1\oplus
	0
	)\;|\;\omega_0\oplus\omega_1\in D(S^*)\rbrace
	\\
	\label{Eq: f-relation}
	\Theta_{f_\parallel}&\doteq\lbrace
	(\mathrm{n}\omega_0\oplus
	0\oplus
	\mathrm{n}\omega_1\oplus
	0
	\;;\;
	f\mathrm{n}\omega_0\oplus
	\mathrm{nd}\omega_0\oplus
	f\mathrm{n}\omega_1\oplus
	\mathrm{nd}\omega_1
	)\;|\;\omega_0\oplus\omega_1\in D(S^*)\rbrace,
	\quad f\in C^\infty(\partial\Sigma)\,, f\geq 0\,.\\
	\Theta_{f_\perp}&\doteq\lbrace
	(0\oplus
	\mathrm{t}\omega_0\oplus
	0\oplus
	\mathrm{t}\omega_1
	\;;\;
	\mathrm{t}\delta\omega_0\oplus
	f\mathrm{t}\omega_0\oplus
	\mathrm{t}\delta\omega_1\oplus
	f\mathrm{t}\omega_1
	)\;|\;\omega_0\oplus\omega_1\in D(S^*)\rbrace,
	\quad f\in C^\infty(\partial\Sigma)\,, f\leq 0\,.
	\end{align}
	Moreover the self-adjoint extension $S_{\Theta_\sharp}$ for $\sharp\in\lbrace\parallel,\perp,f_\parallel,f_\perp\rbrace$ abides to the hypotheses of Theorem \ref{Thm: existence of propagators}. The associated propagators $G_\sharp$, $\sharp\in\lbrace\parallel,\perp,f_\parallel,f_\perp\rbrace$, obey the boundary conditions as per Definition \ref{Def: Dirichlet, Box-tangential, Box-normal, Robin Box-tangential, Robin Box-normal boundary conditions}.
\end{proposition}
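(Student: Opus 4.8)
The plan is to reduce everything to a sector-by-sector analysis of the spatial Hodge Laplacian and then to feed the result into Theorem~\ref{Thm: existence of propagators}. First I would note that on the ultrastatic product $M\simeq\mathbb R\times\Sigma$ the operator $S$ acts diagonally, $S=(-\Delta_{k-1})\oplus(-\Delta_k)$, so that $D(S^*)$ and the boundary triple \eqref{Eq: gamma0}--\eqref{Eq: gamma1} split accordingly into boundary triples for the two Hodge Laplacians on $(k-1)$- and $k$-forms on $\Sigma$, with boundary maps $\gamma_0^{(j)}\omega=(\mathrm n_{\partial\Sigma}\omega,\mathrm t_{\partial\Sigma}\omega)$, $\gamma_1^{(j)}\omega=(\mathrm t_{\partial\Sigma}\delta_\Sigma\omega,\mathrm n_{\partial\Sigma}\mathrm d_\Sigma\omega)$. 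Each $\Theta_\sharp$ is then a direct product (of linear relations) of two relations of the same form, and since the direct sum of self-adjoint relations is self-adjoint and $S_{\Theta_\sharp}$ is the corresponding direct sum of operators -- hence bounded below iff each summand is -- it suffices to argue in a single sector, i.e.\ for $-\Delta_j$ with the relations (writing $f$ for multiplication by $f$ and leaving the remaining entries free)
\[
\Theta_\parallel=\{((x,0),(0,y))\},\quad \Theta_\perp=\{((0,x),(y,0))\},\quad \Theta_{f_\parallel}=\{((x,0),(fx,y))\},\quad \Theta_{f_\perp}=\{((0,x),(y,fx))\}.
\]

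For self-adjointness I would just compute the adjoint relation from \eqref{Eq: adjoint relation}. For instance $(g,g')\in\Theta_{f_\parallel}$ means $g=(x,0)$, $g'=(fx,y)$, so $(h,h')\in\Theta_{f_\parallel}^*$ forces $(h_1',x)=(h_1,fx)+(h_2,y)$ for all $x,y$; putting $y=0$ and using that multiplication by the real function $f$ is self-adjoint gives $h_1'=fh_1$, and $x=0$ gives $h_2=0$, whence $\Theta_{f_\parallel}^*=\Theta_{f_\parallel}$. The remaining three relations are handled identically (with $f\equiv 0$ for $\parallel$ and $\perp$). Proposition~\ref{Prop: Self-Adjoint Extensions} then gives that $S_{\Theta_\sharp}=S^*|_{\ker(\gamma_1-\Theta_\sharp\gamma_0)}$ is self-adjoint.

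Next I would establish the lower bound on $\sigma(S_{\Theta_\sharp})$. Applying the Green's formula \eqref{Eq: boundary terms for delta and d} on the Riemannian $(\Sigma,h_0)$, for $\omega$ in a core of $D(S_{\Theta_\sharp})$ one gets
\[
\big((-\Delta_j)\omega,\omega\big)_\Sigma=\|\mathrm d_\Sigma\omega\|_\Sigma^2+\|\delta_\Sigma\omega\|_\Sigma^2+(\mathrm t\delta\omega,\mathrm n\omega)_{\partial\Sigma}-(\mathrm t\omega,\mathrm{nd}\omega)_{\partial\Sigma}.
\]
For $\parallel$ (resp.\ $\perp$) both boundary terms vanish, since $\mathrm t\omega=\mathrm t\delta\omega=0$ (resp.\ $\mathrm n\omega=\mathrm{nd}\omega=0$); for $f_\parallel$ the surviving term is $\int_{\partial\Sigma}f\,|\mathrm n\omega|^2\ge 0$ because $f\ge 0$; for $f_\perp$ it is $-\int_{\partial\Sigma}f\,|\mathrm t\omega|^2\ge 0$ because $f\le 0$. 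Hence $S_{\Theta_\sharp}\ge 0$ in every case, its spectrum lies in $[0,\infty)$, and the hypotheses of Theorem~\ref{Thm: existence of propagators} are met. This step is the heart of the argument and essentially the only non-bookkeeping point: it is exactly here that the sign hypotheses on $f$ are used, and some care will be needed both to upgrade the identity above from a smooth core to all of $D(S_{\Theta_\sharp})$ and to ensure that multiplication by $f$ is a well-defined self-adjoint operator on $L^2(\partial\Sigma)$ (for which the bounded-geometry assumption on $\partial\Sigma$ is convenient).

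Finally, for the boundary conditions satisfied by the propagators I would invoke Theorem~\ref{Thm: existence of propagators}, which yields $G^\pm_{\Theta_\sharp}\omega\in\ker(\gamma_1-\Theta_\sharp\gamma_0)$ for all $\omega\in\Gamma_{\mathrm c}(\Lambda^kT^*\mathring M)$. By Proposition~\ref{Prop: Self-Adjoint Extensions} this kernel is $\{\psi:(\gamma_0\psi,\gamma_1\psi)\in\Theta_\sharp\}$, and unravelling the explicit form of $\Theta_\sharp$ this translates, slice-wise on each $\Sigma_\tau$, into the vanishing (resp.\ into the prescribed Robin relation) of the appropriate tangential/normal components of $\psi$ and of their $\delta_\Sigma$-/$\mathrm d_\Sigma$-derivatives. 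Translating these slice-wise conditions into conditions on $\partial M$ by Lemma~\ref{Lem: equivalence between M-boundary conditions and Sigma-boundary conditions} identifies $\ker(\gamma_1-\Theta_\sharp\gamma_0)$ with $\Omega^k_\parallel(M)$, $\Omega^k_\perp(M)$, $\Omega^k_{f_\parallel}(M)$, $\Omega^k_{f_\perp}(M)$ of Definition~\ref{Def: Dirichlet, Box-tangential, Box-normal, Robin Box-tangential, Robin Box-normal boundary conditions}, respectively; together with the causal support built into the factor $\theta[\pm(\tau-\tau')]$ this exhibits $G^\pm_{\Theta_\sharp}$ as the advanced/retarded fundamental solutions $G^\pm_\sharp$ demanded by Assumption~\ref{Thm: assumption theorem}.
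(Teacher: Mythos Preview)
Your proposal is correct and follows essentially the same route as the paper. The paper also computes the adjoint relation directly (it does the $\Theta_\parallel$ case explicitly, using the surjectivity of Remark~\ref{Rmk: surjectivity of t,n,tdelta,nd} where you simply let the entries range freely), establishes nonnegativity of $S_{\Theta_\sharp}$ via the same Green-formula identity with the same use of the sign hypotheses on $f$, invokes Theorem~\ref{Thm: existence of propagators}, and finally appeals to Lemma~\ref{Lem: equivalence between M-boundary conditions and Sigma-boundary conditions} to convert the slice-wise constraints into the boundary conditions of Definition~\ref{Def: Dirichlet, Box-tangential, Box-normal, Robin Box-tangential, Robin Box-normal boundary conditions}. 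Your only organizational difference is the explicit reduction to a single Hodge-Laplacian sector, which the paper leaves implicit.
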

\begin{proof} We show that $\Theta_\parallel,\Theta_\perp,\Theta_{f_\parallel},\Theta_{f_\perp}$ are self-adjoint relations as per Definition \ref{Def: Relation}. Since the proofs for the different cases are very similar we shall consider only $\Theta_\parallel$.
	A short computation shows that $\Theta_\parallel\subseteq\Theta_\parallel^*$. We prove the converse inclusion.
	Let $\underline{\alpha}:=(\alpha_1\oplus\ldots\alpha_4\,;\,\alpha_5\oplus\ldots\alpha_8)\in\Theta_\parallel^*$.
	Considering equation {\eqref{Eq: adjoint relation}} we find
	\begin{align}
		(\mathrm{n}\omega_0,\alpha_5)+
		(\mathrm{n}\omega_1,\alpha_7)=
		(\mathrm{nd}\omega_0,\alpha_2)+
		(\alpha_4,\mathrm{nd}\omega_1,\alpha_4)\,,\qquad
		\forall\omega_0\oplus\omega_1\in D(S^*)\,.
	\end{align}
	Choosing $\omega_1$ and $\mathrm{n}\omega_0=0$ -- this does not affect the value $\mathrm{nd}\omega_0$ on account of Remark \ref{Rmk: surjectivity of t,n,tdelta,nd} -- it follows that $(\alpha_2,\mathrm{nd}\omega_0)=0$ for all $\omega_0\in\Omega_{\mathrm{c,n}}^{k-1}(\Sigma)$.
	Since $\mathrm{nd}$ is surjective it follows that $\alpha_2=0$.
	With a similar argument $\alpha_5=0$ as well as $\alpha_2=0$, $\alpha_4=0$.
	Finally, on account of Remark \ref{Rmk: surjectivity of t,n,tdelta,nd} there exists $\omega_0\oplus\omega_1\in D(S^*)$ such that
	\begin{align*}
		\mathrm{n}\omega_0=\alpha_1\,,\qquad
		\mathrm{n}\omega_1=\alpha_3\,,\qquad
		\mathrm{nd}\omega_0=\alpha_6\,,\qquad
		\mathrm{nd}\omega_1=\alpha_8\,.
	\end{align*}
	It follows that $\alpha\in\Theta_\parallel$, that is, $\Theta_\parallel=\Theta_\parallel^*$.
	
	In addition $S_{\Theta_\sharp}$ is positive definite for $\sharp\in\lbrace\parallel,\perp,f_\parallel,f_\perp\rbrace$. It descends from the following equality, which holds for all $\omega_0\otimes\omega_1\in D(S^*)$:
	\begin{align*}
		(\omega_0\oplus\omega_1,S_{\Theta_\sharp}(\omega_0\oplus\omega_1))_{\mathsf{H}}=
		\sum_{j=1}^2\big[
		\|\mathrm{d}\omega_i\|^2+
		\|\delta\omega_i\|^2+
		(\mathrm{n}\omega_i,\mathrm{t}\delta\omega_i)-
		(\mathrm{t}\omega_i,\mathrm{nd}\omega_i)
		\big]\,,
	\end{align*}
	where the last two terms are non-negative because of the boundary conditions and of the hypothesis on the sign of $f$.
	Therefore we can apply Theorem \ref{Thm: existence of propagators}.
	
	Finally we should prove that the propagators $G^\pm_{\Theta_\sharp}$ associated with the relations $\Theta_\sharp$ coincide with the propagators $G^\pm_\sharp$ introduced in Theorem \ref{Thm: assumption theorem}.
	The fulfilment of the appropriate boundary conditions is a consequence of Lemma \ref{Lem: equivalence between M-boundary conditions and Sigma-boundary conditions}.
\end{proof}

\begin{remark}
	It is worth mentioning that, although we have only considered test sections of compact support in $\mathring{M}$, such assumption can be relaxed allowing the support to intersect $\partial M$.
	In order to prove that this operation is legitimate, a rather natural strategy consists of realizing that the boundary conditions here considered fall in the (generalization of) those of Robin type. These were considered in \cite{Gannot:2018jkg} for the case of a real scalar field on an asymptotically anti de Sitter spacetime where, in between many results, it was proven the explicit form of the wavefront set of the advanced and retarded fundamental solutions. In particular it was shown that two points lie in the wave front set either if they are connected directly by a light geodesic or by one which is reflected at the boundary. A direct inspection of their approach suggests that the same result holds true if one considers also static globally hyperbolic spacetimes with timelike boundary and vector valued fields. A detailed proof of this statement would require a lengthy paper on its own and thus this question will be addressed explicitly in a future work.
\end{remark}

\section{An explicit decomposition}\label{App: an explicit decomposition}

\begin{lemma}\label{Lem: equivalence between M-boundary conditions and Sigma-boundary conditions}
	Let $M=\mathbb{R}\times\Sigma$ be a globally hyperbolic spacetime -- \textit{cf.} Theorem \ref{Thm: globally hyperbolic spacetime with time-like boundary}.
	Moreover, for all $\tau\in\mathbb{R}$, let $\mathrm{t}_{\Sigma_\tau}\colon\Omega^k(M)\to\Omega^k(\Sigma_\tau)$, $\mathrm{n}_{\Sigma_\tau}\colon\Omega^k(\Sigma_\tau)\to\Omega^{k-1}(\Sigma)$ be the tangential and normal maps on $\Sigma_\tau\doteq\{\tau\}\times\Sigma$, where $M=\mathbb{R}\times\Sigma$ -- \textit{cf.} Definition \ref{Def: tangential and normal component}.
	Moreover, let $\mathrm{t}_{\partial\Sigma_\tau}\colon\Omega^k(\Sigma_\tau)\to\Omega^k(\partial\Sigma_\tau)$ and let $\mathrm{n}_{\partial\Sigma_\tau}\colon\Omega^k(\Sigma_\tau)\to\Omega^{k-1}(\partial\Sigma_\tau)$ be the tangential and normal maps on $\partial\Sigma_\tau\doteq\{\tau\}\times\partial\Sigma$.
	Let $f\in C^\infty(\partial\Sigma)$ and set $f_\tau\doteq f|_{\partial\Sigma_\tau}$ .
	Then for $\sharp\in\lbrace\mathrm{D},\parallel,\perp,f_\parallel,f_\perp\rbrace$ it holds
	\begin{align}\label{Eq: equivalence between M-boundary conditions and Sigma-boundary conditions}
		\omega\in\Omega_\sharp^k(M)\Longleftrightarrow
		\mathrm{t}_{\Sigma_\tau}\omega\in\Omega_\sharp^k(\Sigma_\tau)\,,
		\mathrm{n}_{\Sigma_\tau}\omega\in\Omega_\sharp^{k-1}(\Sigma_\tau)\,,
		\quad\forall \tau\in\mathbb{R}\,.
	\end{align}
	More precisely this entails that 
	\begin{align*}
		\omega\in\ker\mathrm{t}_{\partial M}\cap\ker\mathrm{n}_{\partial M}&\Longleftrightarrow
		\mathrm{t}_{\Sigma_\tau}\omega,\mathrm{n}_{\Sigma_\tau}\omega\in
		\ker\mathrm{t}_{\partial\Sigma_\tau}\cap\ker\mathrm{n}_{\partial\Sigma_\tau}\,,\forall \tau\in\mathbb{R}\,;\\
		\omega\in\ker\mathrm{n}_{\partial M}\cap\ker\mathrm{n}_{\partial M}\mathrm{d}&\Longleftrightarrow
		\mathrm{t}_{\Sigma_\tau}\omega,\mathrm{n}_{\Sigma_\tau}\omega\in
		\ker\mathrm{n}_{\partial\Sigma_\tau}\cap\ker\mathrm{n}_{\partial\Sigma_\tau}\mathrm{d}_{\Sigma_\tau}\,,\forall \tau\in\mathbb{R}\,;\\
		\omega\in\ker\mathrm{t}_{\partial M}\cap\ker\mathrm{t}_{\partial M}\delta&\Longleftrightarrow
		\mathrm{t}_{\Sigma_\tau}\omega,\mathrm{n}_{\Sigma_\tau}\omega\in
		\ker\mathrm{t}_{\partial\Sigma_\tau}\cap\ker\mathrm{t}_{\partial\Sigma_\tau}\delta_{\Sigma_\tau}\,,\forall \tau\in\mathbb{R}\,;\\
		\omega\in\ker\mathrm{n}_{\partial M}\cap\ker(\mathrm{n}_{\partial M}\mathrm{d}-f\mathrm{t}_{\partial M})&\Longleftrightarrow
		\mathrm{t}_{\Sigma_\tau}\omega,\mathrm{n}_{\Sigma_\tau}\omega\in
		\ker\mathrm{n}_{\partial\Sigma_\tau}\cap\ker(\mathrm{n}_{\partial\Sigma_\tau}\mathrm{d}_{\Sigma_\tau}-f_t\mathrm{t}_{\partial\Sigma_\tau})\,,\forall \tau\in\mathbb{R}\,;\\
		\omega\in\ker\mathrm{t}_{\partial M}\cap\ker(\mathrm{t}_{\partial M}\delta-f\mathrm{n}_{\partial M})&\Longleftrightarrow
		\mathrm{t}_{\Sigma_\tau}\omega,\mathrm{n}_{\Sigma_\tau}\omega\in
		\ker\mathrm{t}_{\partial\Sigma_\tau}\cap\ker(\mathrm{t}_{\partial\Sigma_\tau}\delta_{\Sigma_\tau}-f_t\mathrm{n}_{\partial\Sigma_\tau})\,,\forall t \in\mathbb{R}\,.
	\end{align*}
\end{lemma}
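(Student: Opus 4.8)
The statement is local near $\partial M$, so the plan is to work in a collar of an arbitrary point $p\in\partial M$. First I would choose coordinates $(\tau,x,y^1,\dots,y^{m-2})$ in which $\tau$ is the Cauchy time coordinate of \eqref{eq:line_element}, the $y^j$ are coordinates on $\partial\Sigma$, and $x\in[0,\epsilon)$ is a boundary defining coordinate built, fibrewise over $\tau$, from the outward unit normal of $\partial\Sigma_\tau$ inside $(\Sigma_\tau,h_\tau)$, exactly as in the proof of Remark \ref{Rmk: surjectivity of t,n,tdelta,nd}. Because the metric \eqref{eq:line_element} has no $\mathrm{d}\tau\,\mathrm{d}x$ nor $\mathrm{d}\tau\,\mathrm{d}y^j$ cross terms, $\partial_x$ is $g$-orthogonal to $\partial_\tau$ and to $T\partial\Sigma$; hence the same $x$ is adapted simultaneously to $\partial M\hookrightarrow M$ and to $\partial\Sigma_\tau\hookrightarrow\Sigma_\tau$, and, writing $N\doteq g(\partial_x,\partial_x)>0$, every $\omega\in\Omega^k(M)$ decomposes uniquely near $\partial M$ as
\[\omega=\omega_{00}+\omega_{01}\wedge N^{1/2}\mathrm{d}x+\omega_{10}\wedge\beta^{1/2}\mathrm{d}\tau+\omega_{11}\wedge\beta^{1/2}\mathrm{d}\tau\wedge N^{1/2}\mathrm{d}x\,,\]
where the $\omega_{ab}$ are smooth $(\tau,x)$-families of forms on $\partial\Sigma$, of degrees $k,k-1,k-1,k-2$ respectively.

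The first real step is to identify these four blocks with the boundary data appearing in the statement. Using the explicit form of the maps in Definition \ref{Def: tangential and normal component} — together with the fact that $\mathrm{n}_N$ is contraction with the unit normal of $N$ — one pulls $\omega$ back first along $\Sigma_\tau\hookrightarrow M$ and then along $\partial\Sigma_\tau\hookrightarrow\Sigma_\tau$, obtaining, for every $\tau$, $\mathrm{t}_{\partial\Sigma_\tau}\mathrm{t}_{\Sigma_\tau}\omega=\omega_{00}|_{x=0}$, $\mathrm{n}_{\partial\Sigma_\tau}\mathrm{t}_{\Sigma_\tau}\omega=\pm\,\omega_{01}|_{x=0}$, $\mathrm{t}_{\partial\Sigma_\tau}\mathrm{n}_{\Sigma_\tau}\omega=\pm\,\omega_{10}|_{x=0}$ and $\mathrm{n}_{\partial\Sigma_\tau}\mathrm{n}_{\Sigma_\tau}\omega=\pm\,\omega_{11}|_{x=0}$; on the other hand $\mathrm{t}_{\partial M}\omega$ is the $k$-form on $\partial M=\mathbb{R}\times\partial\Sigma$ whose two components in its product decomposition are $\omega_{00}|_{x=0}$ and $\omega_{10}|_{x=0}$, and $\mathrm{n}_{\partial M}\omega$ is the one with components $\omega_{01}|_{x=0}$ and $\omega_{11}|_{x=0}$. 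This already settles the Dirichlet case: $\mathrm{t}_{\partial M}\omega=\mathrm{n}_{\partial M}\omega=0$ iff all four blocks vanish at $x=0$ for every $\tau$, i.e. iff $\mathrm{t}_{\Sigma_\tau}\omega,\mathrm{n}_{\Sigma_\tau}\omega\in\ker\mathrm{t}_{\partial\Sigma_\tau}\cap\ker\mathrm{n}_{\partial\Sigma_\tau}$ for all $\tau$.

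For the four remaining boundary conditions I would run the same identification on $\mathrm{d}\omega$ and on $\delta\omega$. Writing $\mathrm{d}_M=\mathrm{d}_{\Sigma_\tau}+\mathrm{d}\tau\wedge\partial_\tau$ and using the Hodge relations \eqref{Eq: relations between d,delta,t,n}, one expresses $\mathrm{t}_{\Sigma_\tau}$ and $\mathrm{n}_{\Sigma_\tau}$ of $\mathrm{d}\omega$ and of $\delta\omega$ through $\mathrm{d}_{\Sigma_\tau}$, $\delta_{\Sigma_\tau}$ applied to $\mathrm{t}_{\Sigma_\tau}\omega$, $\mathrm{n}_{\Sigma_\tau}\omega$, plus correction terms built from $\partial_\tau$ of the blocks and from the $\tau$- and normal-dependence of $\beta$ and $h_\tau$; feeding this into the block identification of the previous step and into the formulae of Remark \ref{Rmk: surjectivity of t,n,tdelta,nd}, the conditions $\mathrm{t}_{\partial M}\delta\omega=0$ and $\mathrm{n}_{\partial M}\mathrm{d}\omega=0$ translate into statements on $\mathrm{t}_{\partial\Sigma_\tau}\delta_{\Sigma_\tau}(\mathrm{t}_{\Sigma_\tau}\omega)$, $\mathrm{t}_{\partial\Sigma_\tau}\delta_{\Sigma_\tau}(\mathrm{n}_{\Sigma_\tau}\omega)$, $\mathrm{n}_{\partial\Sigma_\tau}\mathrm{d}_{\Sigma_\tau}(\mathrm{t}_{\Sigma_\tau}\omega)$, $\mathrm{n}_{\partial\Sigma_\tau}\mathrm{d}_{\Sigma_\tau}(\mathrm{n}_{\Sigma_\tau}\omega)$ \emph{modulo} the correction terms. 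Imposing in addition the zeroth-order condition (membership in $\ker\mathrm{t}_{\partial M}$ for $\parallel,f_\parallel$, in $\ker\mathrm{n}_{\partial M}$ for $\perp,f_\perp$) makes the correction terms vanish identically, because each of them, after applying $\mathrm{t}_{\partial\Sigma_\tau}$ or $\mathrm{n}_{\partial\Sigma_\tau}$, is a $\partial_\tau$-derivative, a $\mathrm{d}_{\partial\Sigma}$, or a tangential contraction of a block whose restriction to $\partial\Sigma$ is, by that zeroth-order condition, identically zero in $\tau$ — and $\mathrm{t}_{\partial\Sigma_\tau}$, $\mathrm{n}_{\partial\Sigma_\tau}$ commute with $\partial_\tau$ and with $\mathrm{d}_{\partial\Sigma}$. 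One is then left exactly with the equivalences claimed for $\parallel$ and $\perp$; the cases $f_\parallel$, $f_\perp$ follow by carrying along the extra zeroth-order terms $f\,\mathrm{n}_{\partial M}\omega$ and $f\,\mathrm{t}_{\partial M}\omega$ and using $\star_\partial(f\alpha)=f\,\star_\partial\alpha$ as in the proof of Lemma \ref{Lemma: boundary condition}, multiplication by $f$ being transported unchanged by the block identification.

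The step I expect to be the crux is precisely the vanishing of those correction terms: one must check that every contribution produced by the time derivatives hidden in $\mathrm{d}_M$, $\delta_M$ and by the variation of $\beta$ and $h_\tau$ in the normal direction — in particular the terms proportional to $\mathrm{d}\ln\beta$ along $\partial M$ — is really killed by the already imposed zeroth-order boundary condition, and is not merely of lower order. In the ultrastatic case of Appendix \ref{App: Existence of fundamental solutions on ultrastatic spacetimes}, where $\beta\equiv1$ and $h_\tau\equiv h_0$ so that $\Box=\partial_\tau^2+S$ with $S$ independent of $\tau$, all these terms are absent and the equivalence is transparent; the general case is handled by the same computation, at the price of a somewhat heavier bookkeeping of the first-order jets of the metric at $\partial M$.
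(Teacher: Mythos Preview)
Your proposal is correct and follows essentially the same route as the paper: a four-block decomposition of $\omega$ in a collar $[0,\epsilon)\times U_{\partial\Sigma}$ adapted to both $\Sigma_\tau\hookrightarrow M$ and $\partial\Sigma_\tau\hookrightarrow\Sigma_\tau$, identification of the blocks at $x=0$ with the iterated boundary data, and then the observation that, once the zeroth-order condition ($\mathrm{n}_{\partial M}\omega=0$ or $\mathrm{t}_{\partial M}\omega=0$) is imposed, the extra terms in $\mathrm{n}_{\partial M}\mathrm{d}\omega$ or $\mathrm{t}_{\partial M}\delta\omega$ coming from derivatives of $\beta$, $N$ and from $\mathrm{d}_{\partial\Sigma}$ acting on the already-vanishing blocks drop out. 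The paper carries out the computation explicitly only for $\sharp=\perp$ and invokes Hodge duality for $\parallel$, whereas you sketch both $\mathrm{d}$ and $\delta$ in parallel, but the substance is the same.
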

\begin{proof}
	The equivalence \eqref{Eq: equivalence between M-boundary conditions and Sigma-boundary conditions} is shown for $\perp$-boundary condition.
	The proof for $\parallel$-boundary conditions follows per duality -- \textit{cf.} \eqref{Rmk: on nomenclature for Dirichlet and Neumann boundary conditions} -- while the one for $\mathrm{D}$-, $f_\parallel$-, $f_\perp$-boundary conditions can be carried out in a similar way.
	
	On account of Theorem \ref{Thm: globally hyperbolic spacetime with time-like boundary} it holds that, for all $\tau\in\mathbb{R}$, we can decompose any $\omega\in\Omega^k(M)$ as follows:
	\begin{align*}
		\omega|_{\Sigma_\tau}=
		\mathrm{t}_{\Sigma_\tau}\omega+
		\mathrm{n}_{\Sigma_\tau}\omega\wedge\beta^{\frac{1}{2}}\mathrm{d}\tau\,.
	\end{align*}
	Notice that, being $M$ isometric to $\mathbb{R}\times\Sigma$, it holds that $\tau\to\mathrm{t}_{\Sigma_\tau}\omega \in C^\infty(\mathbb{R},\Omega^k(\Sigma))$ while $\tau\to\mathrm{n}_{\Sigma_\tau}\omega \in C^\infty(\mathbb{R},\Omega^{k-1}(\Sigma))$. Here we have implicitly identified $\Sigma\simeq\Sigma_\tau$.
	
	A similar decomposition holds near the boundary of $\Sigma_\tau$.
	Indeed for all relatively compact open neighbourhood $U_{\partial\Sigma}\subseteq\partial\Sigma$ of $\partial\Sigma$, we consider a neighbourhood $U\subseteq\Sigma$ of the form $U=[0,\epsilon_\tau)\times U_{\partial\Sigma}$ built our of the exponential map of $M$.
	Let $U_x\doteq\{x\}\times U_{\partial\Sigma}$ for $x\in [0,\epsilon_\tau)$ and let $\mathrm{t}_{U_x}$, $\mathrm{n}_{U_x}$ be the corresponding tangential and normal maps -- \textit{cf.} Definition \ref{Def: tangential and normal component}.
	With this definition we can always split $\mathrm{t}_{\Sigma_\tau}\omega$ and $\mathrm{n}_{\Sigma_\tau}\omega$ as follows:
	\begin{align}
		\omega|_{\{\tau\}\times U_x}=
		\mathrm{t}_{U_x}\mathrm{t}_{\Sigma_\tau}\omega+
		\mathrm{n}_{U_x}\mathrm{t}_{\Sigma_\tau}\omega\wedge N^{\frac{1}{2}}\mathrm{d}x+
		\mathrm{t}_{U_x}\mathrm{n}_{\Sigma_\tau}\omega\wedge\beta^{\frac{1}{2}}\mathrm{d}\tau+
		\mathrm{n}_{U_x}\mathrm{n}_{\Sigma_\tau}\omega\wedge N^{\frac{1}{2}}\mathrm{d}x\wedge\beta^{\frac{1}{2}}\mathrm{d}\tau\,,
	\end{align}
	where $N=g(\partial_x,\partial_x)$.
	Since $U_{\partial\Sigma}$ is relatively compact
	it follows that $(\tau,x)\to\mathrm{t}_{U_x}\mathrm{t}_{\Sigma_\tau}\omega\in C^\infty(\mathbb{R}\times[0,\epsilon),\Omega^k(\partial\Sigma))$ and similarly $\mathrm{t}_{U_x}\mathrm{n}_{\Sigma_\tau}\omega$, $\mathrm{n}_{U_x}\mathrm{t}_{\Sigma_\tau}\omega$ and $\mathrm{n}_{U_x}\mathrm{n}_{\Sigma_\tau}\omega$. Once again we have implicitly identified $U_{\partial\Sigma}\simeq\{x\}\times U_{\partial\Sigma}=U_x$. According to this splitting it holds
	\begin{align*}
		\mathrm{t}_{\partial M}\omega|_{\{\tau\}\times U_{\partial\Sigma}}&
		=\mathrm{t}_{U_x}\mathrm{t}_{\Sigma_\tau}\omega|_{x=0}
		+\mathrm{t}_{U_x}\mathrm{n}_{\Sigma_\tau}\omega|_{x=0}\wedge\beta^{\frac{1}{2}}|_{\partial M}\mathrm{d}\tau
		=\mathrm{t}_{\partial\Sigma_\tau}\mathrm{t}_{\Sigma_\tau}\omega
		+\mathrm{t}_{\partial\Sigma_\tau}\mathrm{n}_{\Sigma_\tau}\omega\wedge\beta^{\frac{1}{2}}|_{\partial M}\mathrm{d}\tau\,,\\
		\mathrm{n}_{\partial M}\omega|_{\{\tau\}\times U_{\partial\Sigma}}&
		=\mathrm{n}_{U_x}\mathrm{t}_{\Sigma_{\tau}}\omega|_{x=0}
		+\mathrm{n}_{U_x}\mathrm{n}_{\Sigma_\tau}\omega|_{x=0}\wedge\beta^{\frac{1}{2}}|_{\partial M}\mathrm{d}\tau
		=\mathrm{n}_{\partial\Sigma_\tau}\mathrm{t}_{\Sigma_\tau}\omega
		+\mathrm{n}_{\partial\Sigma_\tau}\mathrm{n}_{\Sigma_\tau}\omega\wedge\beta^{\frac{1}{2}}|_{\partial M}\mathrm{d}\tau\,.
	\end{align*}
	It follows that
	\begin{align}
		\label{Eq: n omega vanishes iff n t omega and nn omega do}
		&\mathrm{n}_{\partial M}\omega=0
		\Longleftrightarrow
		\mathrm{n}_{\partial\Sigma_\tau}\mathrm{n}_{\Sigma_\tau}\omega=0\,,\quad
		\mathrm{n}_{\partial\Sigma_\tau}\mathrm{t}_{\Sigma_\tau}\omega=0\,,\\
		\nonumber
		&\mathrm{t}_{\partial M}\omega=0
		\Longleftrightarrow
		\mathrm{t}_{\partial\Sigma_\tau}\mathrm{n}_{\Sigma_\tau}\omega=0\,,\quad
		\mathrm{t}_{\partial\Sigma_\tau}\mathrm{t}_{\Sigma_\tau}\omega=0\,.
	\end{align}
	This proves the statement for Dirichlet boundary conditions.
	Assuming now $\mathrm{n}_{\partial M}\omega=0$, a similar computation yields
	\begin{align*}
		\mathrm{nd}\omega|_{\{\tau\}\times U_{\partial\Sigma}}
		&=N^{-\frac{1}{2}}\partial_x\mathrm{t}_{U_x}\mathrm{t}_{\Sigma_\tau}\omega|_{x=0}
		+N^{-\frac{1}{2}}\mathrm{d}_{\partial\Sigma}(N^{\frac{1}{2}}\mathrm{n}_{U_x}\mathrm{t}_{\Sigma_\tau}\omega)|_{x=0}
		\\&+N^{-\frac{1}{2}}\partial_x(\beta^{\frac{1}{2}}\mathrm{t}_{U_x}\mathrm{n}_{\Sigma_\tau}\omega)|_{x=0}\wedge\mathrm{d}\tau
		+N^{-\frac{1}{2}}\mathrm{d}_{\partial\Sigma}(N^{\frac{1}{2}}\beta^{\frac{1}{2}}\mathrm{n}_{U_x}\mathrm{n}_{\Sigma_\tau}\omega)|_{x=0}\wedge\mathrm{d}\tau
		\\&=N^{-\frac{1}{2}}\partial_x\mathrm{t}_{U_x}\mathrm{t}_{\Sigma_\tau}\omega|_{x=0}
		+N^{-\frac{1}{2}}\partial_x(\mathrm{t}_{U_x}\mathrm{n}_{\Sigma_\tau}\omega)|_{x=0}\wedge\beta^{\frac{1}{2}}\mathrm{d}\tau\,,
	\end{align*}
	where in the second equality we used the assumption $\mathrm{n}_{\partial\Sigma_\tau}\mathrm{n}_{\Sigma_\tau}\omega=0$, $\mathrm{n}_{\partial\Sigma_\tau}\mathrm{t}_{\Sigma_\tau}\omega=0$. Notice that the terms where either $N$ or $\beta$ are differentiated do not appear for the same reason. 
	
	On account of the hypotheses $\mathrm{n}_{\partial\Sigma_\tau}\mathrm{n}_{\Sigma_\tau}\omega=0$, $\mathrm{n}_{\partial\Sigma_\tau}\mathrm{t}_{\Sigma_\tau}\omega=0$ is follows that
	\begin{align*}
		\mathrm{nd}_{\partial\Sigma_\tau}\mathrm{n}_{\Sigma_t}\omega|_{\{\tau\}\times U_{\partial \Sigma}}
		&=N^{-\frac{1}{2}}\partial_x(\mathrm{t}_{U_x}\mathrm{n}_{\Sigma_\tau}\omega)|_{x=0}\wedge\beta^{\frac{1}{2}}\mathrm{d}\tau\,,\\
		\mathrm{nd}_{\partial\Sigma_\tau}\mathrm{t}_{\Sigma_t}\omega|_{\{\tau\}\times U_{\partial \Sigma}}
		&=N^{-\frac{1}{2}}\partial_x\mathrm{t}_{U_x}\mathrm{t}_{\Sigma_\tau}\omega|_{x=0}\,.
	\end{align*}
	It descends that $\mathrm{n}_{\partial M}\mathrm{d}\omega=0$ if and only if $\mathrm{nd}_{\partial\Sigma_\tau}\mathrm{n}_{\Sigma_t}\omega=0$ and $\mathrm{nd}_{\partial\Sigma_\tau}\mathrm{t}_{\Sigma_t}\omega|_{\{\tau\}\times U_{\partial \Sigma}}=0$.
	Together with Equation \eqref{Eq: n omega vanishes iff n t omega and nn omega do} this shows the thesis for $\perp$ boundary conditions.
\end{proof}

\section{Relative de Rham cohomology}\label{App: Poincare duality for manifold with boundary}

In this appendix we summarize a few definitions and results concerning de Rham cohomology and Poincar\'e duality, especially when the underlying manifold has a non-empty boundary. A reader interested in more details can refer to \cite{Bott-Tu-82,Schwarz-95}. 

For the purpose of this section $M$ refers to a smooth, oriented manifold of dimension $\dim M=m$ with a smooth boundary $\partial M$, together with an embedding map $\iota_{\partial M}:M\to\partial M$. In addition $\partial M$ comes endowed with orientation induced from $M$ via $\iota_{\partial M}$. We recall that $\Omega^\bullet(M)$ stands for the de Rham cochain complex which in degree  $k\in\mathbb{N}\cup\{0\}$ corresponds to $\Omega^k(M)$, the space of smooth $k$-forms. Observe that we shall need to work also with compactly supported forms and all definitions can be adapted accordingly.
To indicate this specific choice, we shall use a subscript $\mathrm{c}$, {\it e.g.} $\Omega^\bullet_{\mathrm{c}}(M)$. We denote instead the $k$-th de Rham cohomology group of $M$ as 
$$H^k(M)\doteq\frac{\ker(\mathrm{d}_k\colon\Omega^k(M)\to\Omega^{k+1}(M))}{\operatorname{Im} (\mathrm{d}_{k-1}\colon\Omega^{k-1}(M)\to\Omega^k(M))},$$
where we introduce the subscript $k$ to highlight that the differential operator $\mathrm{d}$ acts on $k$-forms. Equations \eqref{Eq: k-forms with vanishing tangential or normal component} and \eqref{Eq: relations-bulk-to-boundary} entail that we can define $\Omega^\bullet_{\mathrm{t}}(M)$, the subcomplex of $\Omega^\bullet(M)$, whose degree $k$ corresponds to $\Omega^k_{\mathrm{t}}(M)\subset\Omega^k(M)$. The associated de Rham cohomology groups will be denoted as $H^k_{\mathrm{t}}(M)$, $k\in\mathbb{N}\cup\{0\}$.

Similarly we can work with the codifferential $\delta$ in place of $\mathrm{d}$, hence identifying a chain complex $\Omega^\bullet(M)$ which in degree  $k\in\mathbb{N}\cup\{0\}$ corresponds to $\Omega^k(M)$, the space of smooth $k$-forms. The associated $k$-th homology groups will be denoted with 
$$H_k(M)\doteq\frac{\ker (\delta_k\colon\Omega^k(M)\to\Omega^{k-1}(M))}{\operatorname{Im} (\delta_{k+1}\colon\Omega^{k+1}(M)\to\Omega^k(M))}.$$
Equations \eqref{Eq: k-forms with vanishing tangential or normal component} and \eqref{Eq: relations-bulk-to-boundary} entail that we can define the $\Omega^\bullet_{\mathrm{n}}(M)$ (\textit{resp.} $\Omega^\bullet_{\mathrm{c}}(M)$, $\Omega^\bullet_{\mathrm{c,n}}(M)$), the subcomplex of $\Omega^\bullet(M)$, whose degree $k$ corresponds to $\Omega^k_{\mathrm{n}}(M)\subset\Omega^k(M)$ (\textit{resp.} $\Omega^k_{\mathrm{c}}(M),\Omega^k_{\mathrm{c,n}}(M)\subseteq\Omega^k(M)$).
The associated homology groups will be denoted as $H_{k,\mathrm{n}}(M)$ (\textit{resp.} $H_{k,\mathrm{c}}(M)$, $H_{k,\mathrm{c,n}}(M)$), $k\in\mathbb{N}\cup\{0\}$.
Observe that, in view of its definition and on account of equation \eqref{Eq: relations between d,delta,t,n}, the Hodge operator induces an isomorphism $H^k(M)\simeq H_{m-k}(M)$ which is realized as $H^k(M)\ni[\alpha]\mapsto [\star\alpha]\in H_{m-k}(M)$. Similarly, on account of Equation \eqref{Eq: relations-bulk-to-boundary}, it holds $H^k_{\mathrm{t}}(M)\simeq H_{m-k,\mathrm{n}}(M)$ and $H^k_{\mathrm{c,t}}(M)\simeq H_{m-k,\mathrm{c,n}}(M)$.

As last ingredient, we introduce the notion of relative cohomology, {\it cf.} \cite{Bott-Tu-82}. We start by defining the relative de Rham cochain complex $\Omega^\bullet(M;\partial M)$  which in degree  $k\in\mathbb{N}\cup\{0\}$ corresponds to 
\begin{align*}
	\Omega^k(M;\partial M)\doteq \Omega^k(M)\oplus\Omega^{k-1}(\partial M),
\end{align*}
endowed with the differential operator $\underline{\mathrm{d}}_k:\Omega^k(M;\partial M)\to\Omega^{k+1}(M;\partial M)$ such that for any $(\omega,\theta)\in\Omega^k(M;\partial M)$
\begin{equation}\label{Eq: relative-differential}
\underline{\mathrm{d}}_k(\omega,\theta)=(\mathrm{d}_k\omega,\mathrm{t}\omega-\mathrm{d}_{k-1}\theta)\,.
\end{equation}
Per construction, each $\Omega^k(M;\partial M)$ comes endowed naturally with the projections on each of the defining components, namely $\pi_1:\Omega^k(M;\partial M)\to\Omega^k(M)$ and $\pi_2:\Omega^k(M;\partial M)\to\Omega^k(\partial M)$. With a slight abuse of notation we make no explicit reference to $k$ in the symbol of these maps, since the domain of definition will always be clear from the context. The relative cohomology groups associated to $\underline{\mathrm{d}}_k$ will be denoted instead as $H^k(M;\partial M)$ and the following proposition characterizes the relation with the standard de Rham cohomology groups built on $M$ and on $\partial M$, {\it cf.} \cite[Prop. 6.49]{Bott-Tu-82}:

\begin{proposition}\label{Prop: long exact sequence}
Under the geometric assumptions specified at the beginning of the section, there exists an exact sequence
	\begin{equation}
	\ldots\to H^k(M;\partial M)\operatornamewithlimits{\longrightarrow}^{\pi_{1,*}} H^k(M) \operatornamewithlimits{\longrightarrow}^{\mathrm{t}_*} H^k(\partial M)\operatornamewithlimits{\longrightarrow}^{\pi_{2,*}} H^{k+1}(M;\partial M)\to\ldots,
	\end{equation}
	where $\pi_{1,*}$, $\pi_{2,*}$ and $\mathrm{t}_*$ indicate the natural counterpart of the maps $\pi_1$, $\pi_2$ and $\mathrm{t}$ at the level of cohomology groups.
\end{proposition}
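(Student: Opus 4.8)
The plan is to obtain this long exact sequence in the standard homological way, exactly as in \cite[Prop.~6.49]{Bott-Tu-82}: exhibit a short exact sequence of cochain complexes and feed it to the zig--zag lemma. First I would observe that the assignment $\theta\mapsto(0,\theta)$ defines an injective map $\iota\colon\Omega^{k-1}(\partial M)\to\Omega^k(M;\partial M)$, that $\pi_1\colon\Omega^k(M;\partial M)\to\Omega^k(M)$, $(\omega,\theta)\mapsto\omega$, is surjective, and that $\ker\pi_1=\{(0,\theta)\}=\operatorname{Im}\iota$, so that
\begin{equation*}
0\to\Omega^{k-1}(\partial M)\xrightarrow{\;\iota\;}\Omega^k(M;\partial M)\xrightarrow{\;\pi_1\;}\Omega^k(M)\to 0
\end{equation*}
is exact for every $k$. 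It then remains to check that $\iota$ and $\pi_1$ are morphisms of complexes. For $\pi_1$ this is immediate from \eqref{Eq: relative-differential}, since $\pi_1\underline{\mathrm{d}}_k(\omega,\theta)=\mathrm{d}_k\omega=\mathrm{d}_k\pi_1(\omega,\theta)$. For $\iota$ one computes $\underline{\mathrm{d}}_k\iota(\theta)=\underline{\mathrm{d}}_k(0,\theta)=(0,-\mathrm{d}_{k-1}\theta)$, so $\iota$ intertwines $\underline{\mathrm{d}}$ with the de Rham differential of $\partial M$, shifted by one and twisted by an overall sign; since such a sign twist does not change cohomology, the $k$-th cohomology of the left-hand complex is canonically identified with $H^{k-1}(\partial M)$.

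Given this short exact sequence of complexes, the zig--zag lemma produces the long exact sequence in cohomology
\begin{equation*}
\cdots\to H^{k-1}(\partial M)\xrightarrow{\;\iota_*\;}H^k(M;\partial M)\xrightarrow{\;\pi_{1,*}\;}H^k(M)\xrightarrow{\;\partial\;}H^k(\partial M)\xrightarrow{\;\iota_*\;}H^{k+1}(M;\partial M)\to\cdots,
\end{equation*}
where I have already used the identification of the cohomology of the sub-complex with $H^{\bullet-1}(\partial M)$ and $\partial$ denotes the connecting homomorphism. It therefore only remains to identify $\partial$ with $\mathrm{t}_*$ and $\iota_*$ with $\pi_{2,*}$.

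For the connecting homomorphism: given a closed $\omega\in\Omega^k(M)$, lift it along $\pi_1$ to $(\omega,0)\in\Omega^k(M;\partial M)$; then $\underline{\mathrm{d}}_k(\omega,0)=(\mathrm{d}_k\omega,\mathrm{t}\omega)=(0,\mathrm{t}\omega)=\iota(\mathrm{t}\omega)$, whence $\partial[\omega]=[\mathrm{t}\omega]=\mathrm{t}_*[\omega]$. For the remaining arrow, $\iota_*$ sends $[\theta]\in H^k(\partial M)$ to $[(0,\theta)]\in H^{k+1}(M;\partial M)$, which is precisely the map denoted $\pi_{2,*}$ in the statement (the inclusion of boundary data as a relative form with vanishing bulk part). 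This closes the argument. I do not expect a real obstacle here, since everything reduces to the zig--zag lemma, which I would simply cite; the only point requiring genuine care is the sign bookkeeping in \eqref{Eq: relative-differential}, namely making $\iota$ into an honest chain map after the degree shift and sign twist so that the identification of the sub-complex cohomology with $H^{\bullet-1}(\partial M)$ is the obvious one and the connecting map comes out as $+\mathrm{t}_*$.
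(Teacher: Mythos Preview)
Your argument is correct and is precisely the standard mapping-cone/zig--zag argument from \cite[Prop.~6.49]{Bott-Tu-82}. Note that the paper does not supply its own proof of this proposition: it merely states the result and refers to Bott--Tu, so there is nothing to compare against beyond that citation. Your write-up is essentially a clean unpacking of the cited proof, including the correct identification of the connecting homomorphism with $\mathrm{t}_*$ and of the inclusion $\theta\mapsto(0,\theta)$ with what the statement calls $\pi_{2,*}$; the only residual care, as you already flag, is that the paper's $\pi_2$ is literally defined as a projection $\Omega^k(M;\partial M)\to\Omega^{k-1}(\partial M)$, so the arrow $H^k(\partial M)\to H^{k+1}(M;\partial M)$ in the sequence is really the induced section/inclusion rather than a projection, exactly as you treat it.
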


\noindent The relevance of the relative cohomology groups in our analysis is highlighted by the following statement, of which we give a concise proof:

\begin{proposition}\label{Prop: equivalence description of relative cohomology}
Under the geometric assumptions specified at the beginning of the section, there exists an isomorphism between $H^k_{\mathrm{t}}(M)$ and $H^k(M;\partial M)$ for all $k\in\mathbb{N}\cup\{0\}$.
\end{proposition}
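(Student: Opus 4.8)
The plan is to exhibit an explicit cochain map between the relative de Rham complex $\Omega^\bullet(M;\partial M)$ and the subcomplex $\Omega^\bullet_{\mathrm t}(M)$ and show it induces an isomorphism in cohomology. First I would define the map $\Phi\colon\Omega^k_{\mathrm t}(M)\to\Omega^k(M;\partial M)$ by $\Phi(\omega)=(\omega,0)$; since $\mathrm t\omega=0$ by definition of $\Omega^k_{\mathrm t}(M)$, equation \eqref{Eq: relative-differential} gives $\underline{\mathrm d}_k(\omega,0)=(\mathrm d\omega,\mathrm t\omega)=(\mathrm d\omega,0)$, so $\Phi$ is a chain map. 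In the other direction I would like to produce a homotopy inverse. The natural candidate is to send $(\omega,\theta)$ to $\omega-\mathrm d\widetilde\theta$, where $\widetilde\theta\in\Omega^{k-1}(M)$ is a chosen smooth extension of $\theta$ to a collar neighbourhood of $\partial M$ (extended by zero away from the collar), so that $\mathrm t\widetilde\theta=\theta$; then $\mathrm t(\omega-\mathrm d\widetilde\theta)=\mathrm t\omega-\mathrm d_\partial\mathrm t\widetilde\theta=\mathrm t\omega-\mathrm d_\partial\theta$, which vanishes precisely when $(\omega,\theta)$ is a relative cocycle, using $\mathrm d_\partial\mathrm t=\mathrm t\mathrm d$ from \eqref{Eq: relations-bulk-to-boundary}.

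The cleanest route, avoiding the bookkeeping of an explicit extension operator on all of $\Omega^\bullet(M;\partial M)$, is instead to invoke the long exact sequence of Proposition \ref{Prop: long exact sequence} together with the analogous long exact sequence for the pair $(\Omega^\bullet_{\mathrm t}(M),\Omega^\bullet(M),\Omega^\bullet(\partial M))$. Concretely, the short exact sequence of complexes
\begin{equation}\label{Eq: ses for tangential complex}
0\to\Omega^\bullet_{\mathrm t}(M)\hookrightarrow\Omega^\bullet(M)\xrightarrow{\ \mathrm t\ }\Omega^\bullet(\partial M)\to 0
\end{equation}
is exact: injectivity and the identification of the kernel of $\mathrm t$ with $\Omega^\bullet_{\mathrm t}(M)$ are immediate from \eqref{Eq: k-forms with vanishing tangential or normal component}, while surjectivity of $\mathrm t$ follows from the collar-neighbourhood extension of boundary forms (a special case of the surjectivity recorded in Remark \ref{Rmk: surjectivity of t,n,tdelta,nd}). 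The induced long exact sequence in cohomology reads $\cdots\to H^k_{\mathrm t}(M)\to H^k(M)\xrightarrow{\mathrm t_*}H^k(\partial M)\to H^{k+1}_{\mathrm t}(M)\to\cdots$. Comparing this with the sequence of Proposition \ref{Prop: long exact sequence}, in which $H^k(M;\partial M)$ occupies exactly the slot occupied by $H^k_{\mathrm t}(M)$ above and the connecting maps to $H^k(M)$ and from $H^k(\partial M)$ agree (both are induced by the obvious inclusion/restriction), the map $\Phi_*$ fits into a morphism of long exact sequences which is the identity on the $H^\bullet(M)$ and $H^\bullet(\partial M)$ terms.

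Then the result follows by the five lemma: $\Phi_*\colon H^k_{\mathrm t}(M)\to H^k(M;\partial M)$ is sandwiched between isomorphisms in the two parallel long exact sequences, hence is itself an isomorphism for every $k$. I expect the main obstacle to be verifying that the connecting homomorphisms of the two long exact sequences genuinely coincide under the identification — that is, checking the middle square of the ladder commutes — which amounts to tracing through the snake-lemma construction of the connecting map for \eqref{Eq: ses for tangential complex} and matching it with $\pi_{2,*}$ from Proposition \ref{Prop: long exact sequence}; this is a diagram chase using $\mathrm d_\partial\mathrm t=\mathrm t\mathrm d$ and the definition \eqref{Eq: relative-differential} of $\underline{\mathrm d}$, with the only subtlety being the (inessential) sign conventions in the two connecting maps, which can be absorbed by an appropriate choice of sign in the identification $\Omega^k(M;\partial M)\supset\Phi(\Omega^k_{\mathrm t}(M))$. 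Alternatively, one can bypass the ladder entirely by checking directly that the explicit $\Phi$ and the extension map $(\omega,\theta)\mapsto\omega-\mathrm d\widetilde\theta$ are mutually inverse up to an explicitly written cochain homotopy built from the collar retraction; I would mention this as the more hands-on alternative but carry out the five-lemma argument as the primary proof.
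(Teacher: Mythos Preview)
Your proposal is correct, but it takes a genuinely different route from the paper. The paper works directly with the same map $\Phi([\omega])=[(\omega,0)]$ and verifies by hand that it is surjective and injective on cohomology: for surjectivity, given a relative cocycle $(\omega',\theta)$ one picks $\eta\in\Omega^{k-1}(M)$ with $\mathrm t\eta=\theta$ (using the surjectivity from Remark~\ref{Rmk: surjectivity of t,n,tdelta,nd}) and observes that $(\omega'-\mathrm d\eta,0)$ is a cohomologous representative in the image of $\Phi$; for injectivity, if $(\omega,0)=\underline{\mathrm d}(\beta,\theta)$ one again extends $\theta$ to $\eta\in\Omega^{k-2}(M)$ and rewrites the identity as $(\omega,0)=\underline{\mathrm d}(\beta+\mathrm d\eta,0)$, whence $\omega\in\mathrm d\Omega^{k-1}_{\mathrm t}(M)$. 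This is essentially your ``hands-on alternative'' carried out at the level of cohomology classes rather than via an explicit cochain homotopy.

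Your primary argument via the five lemma and the comparison of the two long exact sequences is more structural: it explains \emph{why} the isomorphism holds (both complexes compute the cofibre of $\mathrm t$) and would generalise immediately to other relative situations, at the price of the bookkeeping you correctly flag concerning the compatibility of connecting homomorphisms. The paper's direct argument is shorter and entirely self-contained, needing only the surjectivity of $\mathrm t$, but it is specific to this particular pair of complexes.
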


\begin{proof}
	Consider $\omega\in\Omega^k_{\mathrm{t}}(M)\cap\ker\mathrm{d}$ and let $(\omega,0)\in\Omega^k(M;\partial M)$, $k\in\mathbb{N}\cup\{0\}$. Equation \eqref{Eq: relative-differential} entails
	\begin{align*}
		\underline{\mathrm{d}}_k(\omega,0)=(\mathrm{d}_k\omega,\mathrm{t}\omega)=(0,0)\,.
	\end{align*}
	At the same time, if $\omega=\mathrm{d}_{k-1}\beta$ with $\beta\in\Omega_\mathrm{t}^{k-1}(M)$, then $(\mathrm{d}_{k-1}\beta,0)=\underline{\mathrm{d}}_{k-1}(\beta,0)$.
	Hence the embedding $\omega\mapsto (\omega,0)$ identifies a map $\rho: H^k_{\mathrm{t}}(M)\to  H^k(M;\partial M)$ such that $\rho([\omega])\doteq [(\omega,0)]$.
	
	To conclude, we need to prove that $\rho$ is surjective and injective.
	Let thus $[(\omega^\prime,\theta)]\in H^k(M;\partial M)$.
	It holds that $\mathrm{d}_k\omega^\prime=0$ and $\mathrm{t}\omega^\prime-\mathrm{d}_{k-1}\theta=0$.
	Recalling that $\mathrm{t}:\Omega^k(M)\to\Omega^k(\partial M)$ is surjective -- \textit{cf.} Remark \ref{Rmk: surjectivity of t,n,tdelta,nd} -- for all values of $k\in\mathbb{N}\cup\{0\}$, there must exists $\eta\in\Omega^{k-1}(M)$ such that $\mathrm{t}\eta=\theta$.
	Let $\omega\doteq\omega^\prime -\mathrm{d}_{k-1}\eta$.
	On account of \eqref{Eq: relations-bulk-to-boundary} $\omega\in\Omega^k_{\mathrm{t}}(M)\cap\ker\mathrm{d}_k$ and $(\omega, 0)$ is a representative if $[(\omega^\prime,\theta)]$ which entails that $\rho$ is surjective.
	
	Let $[\omega]\in H^k(M)$ be such that $\rho[\omega]=[0]\in H^k(M;\partial M)$.
	This implies that there exists $\beta\in\Omega^{k-1}(M)$, $\theta\in\Omega^{k-2}(\partial M)$ such that
	\begin{align*}
		(\omega,0)=
		\underline{\mathrm{d}}_{k-1}(\beta,\theta)=
		(\mathrm{d}_{k-1}\beta,\mathrm{t}\beta-\mathrm{d}_{k-2}\theta)\,.
	\end{align*}
	Let $\eta\in\Omega^{k-2}(M)$ be such that $\mathrm{t}\eta+\theta=0$.
	It follows that
	\begin{align*}
		(\omega,0)=
		\underline{\mathrm{d}}_{k-1}\left((\beta,\theta)+
		\underline{\mathrm{d}}_{k-2}(\eta,0)\right)=
		\underline{\mathrm{d}}_{k-1}(\beta+\mathrm{d}_{k-2}\eta,0)\,.
	\end{align*}
	This entails that $\omega=\mathrm{d}_{k-1}(\beta+\mathrm{d}_{k-2}\eta)$ where $\mathrm{t}(\beta+\mathrm{d}_{k-2}\eta)=0$.
	It follows that $[\omega]=0$ that is $\rho$ is injective.

\end{proof}

\noindent To conclude, we recall a notable result concerning the relative cohomology, which is a specialization to the case in hand of the Poincar\'e-Lefschetz duality, an account of which can be found in \cite{Maunder}:

\begin{proposition}\label{Prop: Poin-Lefs duality}
	Under the geometric assumptions specified at the beginning of the section and assuming in addition that $M$ admits a finite good cover, it holds that, for all $k\in\mathbb{N}\cup\{0\}$
   \begin{align}
   		H^{m-k}(M;\partial M)\simeq H^k_{\mathrm{c}}(M)^*\,,\qquad
   		[\alpha]\to\bigg(H^k_{\mathrm{c}}(M)\ni[\eta]\mapsto\int_M\overline{\alpha}\wedge\eta\in\mathbb{C}\bigg)\,.
   \end{align}
   where $m=\dim M$ and where on the right hand side we consider the dual of the $(m-k)$-th cohomology group built out compactly supported forms.
\end{proposition}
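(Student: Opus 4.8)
The plan is to show that the bilinear map displayed in the statement descends to a well-defined and non-degenerate pairing of finite-dimensional vector spaces, which then yields the asserted isomorphism. First I would invoke Proposition \ref{Prop: equivalence description of relative cohomology} to identify $H^{m-k}(M;\partial M)$ with $H^{m-k}_{\mathrm{t}}(M)$, so that a relative class is represented by a closed form $\alpha\in\Omega^{m-k}_{\mathrm{t}}(M)$, that is, with $\mathrm{t}\alpha=0$; under this identification the pairing reads $([\alpha],[\eta])\mapsto\int_M\bar\alpha\wedge\eta$ with $\alpha\in\Omega^{m-k}_{\mathrm{t}}(M)\cap\ker\mathrm{d}$ and $\eta\in\Omega^k_{\mathrm{c}}(M)\cap\ker\mathrm{d}$. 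Well-definedness is then immediate from Stokes' theorem: replacing $\alpha$ by $\alpha+\mathrm{d}\beta$ with $\beta\in\Omega^{m-k-1}_{\mathrm{t}}(M)$ changes the integral by $\int_M\mathrm{d}(\bar\beta\wedge\eta)=\int_{\partial M}\mathrm{t}\bar\beta\wedge\mathrm{t}\eta=0$, since $\mathrm{t}\beta=0$ and $\eta$ has compact support, and replacing $\eta$ by $\eta+\mathrm{d}\gamma$ with $\gamma\in\Omega^{k-1}_{\mathrm{c}}(M)$ changes it by $\pm\int_{\partial M}\mathrm{t}\bar\alpha\wedge\mathrm{t}\gamma=0$, since $\mathrm{t}\alpha=0$.

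For non-degeneracy I would argue by induction on the cardinality $N$ of a finite good cover $\{U_1,\dots,U_N\}$ of $M$ (whose existence is part of the hypotheses), chosen so that every non-empty finite intersection of its members is diffeomorphic either to $\mathbb{R}^m$ or to the half-space $\mathbb{R}^{m-1}\times[0,\infty)$. The base case $N=1$ splits into these two models: for $M\cong\mathbb{R}^m$ it is ordinary Poincar\'e duality, both sides being one-dimensional exactly when $k=m$ and the pairing being integration of a top form; for $M\cong\mathbb{R}^{m-1}\times[0,\infty)$ the long exact sequence of Proposition \ref{Prop: long exact sequence} together with the homotopy equivalences $M\simeq\partial M\simeq\{\mathrm{pt}\}$ gives $H^\bullet(M;\partial M)=0$, while $H^\bullet_{\mathrm{c}}(M)=0$ as well -- since compact supports are allowed to touch $\partial M$ one already has $H^\bullet_{\mathrm{c}}([0,\infty))=0$, and one concludes by a K\"unneth argument -- so both sides vanish. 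For the inductive step one writes $M=A\cup B$ (for instance $A=U_1\cup\dots\cup U_{N-1}$, $B=U_N$), so that the inductive hypothesis applies to $A$, $B$ and $A\cap B$; a partition of unity subordinate to $\{A,B\}$ produces short exact sequences of complexes
\[0\to\Omega^\bullet(M;\partial M)\to\Omega^\bullet(A;\partial A)\oplus\Omega^\bullet(B;\partial B)\to\Omega^\bullet(A\cap B;\partial(A\cap B))\to 0\]
and
\[0\to\Omega^\bullet_{\mathrm{c}}(A\cap B)\to\Omega^\bullet_{\mathrm{c}}(A)\oplus\Omega^\bullet_{\mathrm{c}}(B)\to\Omega^\bullet_{\mathrm{c}}(M)\to 0\,,\]
hence two Mayer--Vietoris long exact sequences, and the integration pairing gives a ladder from the first of these to the $\mathbb{C}$-dual of the second; commutativity of the squares up to the usual signs, together with the five lemma, transports the isomorphism from $A$, $B$, $A\cap B$ to $M$.

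The step I expect to be the main obstacle is this inductive bookkeeping: one must check that the relative complex $\Omega^\bullet(\,\cdot\,;\partial(\,\cdot\,))=\Omega^\bullet(\,\cdot\,)\oplus\Omega^{\bullet-1}(\partial(\,\cdot\,))$ is compatible with restriction to $A$, $B$, $A\cap B$ and that the connecting maps of its Mayer--Vietoris sequence assemble correctly through $\underline{\mathrm{d}}$, and then that the integration pairing intertwines this connecting map with the transpose of the one for the compactly supported Mayer--Vietoris sequence, up to a sign one can track. This is the boundaryless computation of \cite{Bott-Tu-82} plus the observation that the boundary contributions are killed by the same Stokes' argument used for well-definedness. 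Alternatively, one may take the abstract isomorphism $H^{m-k}(M;\partial M)\cong H^k_{\mathrm{c}}(M)^*$ from the Poincar\'e--Lefschetz duality theorem as stated in \cite{Maunder}, and then the same naturality argument shows that it is realised by the explicit integration pairing; finiteness of the good cover ensures that all the cohomology spaces involved are finite-dimensional, so that non-degeneracy of the pairing is equivalent to the claimed isomorphism.
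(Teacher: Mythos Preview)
The paper does not actually prove this proposition: it is stated as a citation of a classical result, with the reader referred to \cite{Maunder} for an account. Your proposal therefore goes well beyond what the paper does, by sketching a genuine proof via Mayer--Vietoris induction on a finite good cover in the style of \cite{Bott-Tu-82}, adapted to the boundary case. The sketch is sound: the well-definedness argument via Stokes and the tangential boundary condition is exactly right; the base case on the half-space is correctly handled (both sides vanish, using the long exact sequence of Proposition~\ref{Prop: long exact sequence} for the relative side and the contractibility of $[0,\infty)$ at the closed end for the compactly supported side); and the inductive step is the standard five-lemma ladder, with the caveat you already flag, namely that for open $A\subset M$ one must interpret $\partial A$ as $A\cap\partial M$ and verify that the relative Mayer--Vietoris sequence and the integration pairing are compatible with the connecting homomorphisms. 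Your closing remark that one may alternatively import the abstract isomorphism from \cite{Maunder} and then identify it with the integration pairing is precisely the shortcut the paper implicitly relies on.
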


\begin{remark}\label{Rmk: consequence of Poincare--Lefschetz duality}
	On account of Propositions \ref{Prop: equivalence description of relative cohomology}-\ref{Prop: Poin-Lefs duality} and of the isomorphisms $H^k_{(\mathrm{c})}(M)\simeq H^{m-k}_{(\mathrm{c})}(M)$ the following are isomorphisms:
	\begin{align}\label{Eq: relative cohomology isomorphic to dual of compactly supported differential homology }
		H^k_{\mathrm{t}}(M)\simeq
		H^{m-k}_{\mathrm{c}}(M)^*\simeq
		H_{k,\mathrm{c}}(M)^*\,,\qquad
		H^k(M)\simeq H_{k,\mathrm{c,n}}(M)^*\,.
	\end{align}
\end{remark}

\subsection*{Acknowledgements}
We are grateful to Marco Benini, Jorma Louko, Sonia Mazzucchi, Valter Moretti, Ana Alonso Rodrigez, Alexander Schenkel and Alberto Valli for the useful discussions.
We are thankful to the anonymous referee who pointed out a flaw in one the proofs leading us as a consequence to a great improvement of our results.
R. L. is very grateful for the kind hospitality extended by the Department of Mathematics at the University of Trento where part of this work was carried out.
The work of C. D. was supported by the University of Pavia, while that of N. D. was supported in part by a research fellowship of the University of Trento. R. L. gratefully acknowledges financial support from by the Istituto Universitario di Studi Superiori - IUSS. This work is based partly on the MSc thesis of R. L. at the Department of Physics of the University of Pavia.

\end{document}